\definecolor {sepia} {rgb} {0.75,0.30,0.15}
\title{Simultaneous Representation of Proper and Unit Interval Graphs}
\titlerunning{Simultaneous Proper and Unit Interval Graphs}
\newcommand{\co}{\ensuremath{\zeta}\xspace}
\newcommand{\oo}{\ensuremath{\sigma}\xspace}
\newcommand{\conc}{\ensuremath{\circ}\xspace}
\newcommand{\consistent}{{\text{\sc Consistent}}}
\newcommand{\cOrd}[2]{\begin{math} \newcommand{\Symbol}{#1} #2 \@BeginOrdering}
  \newcommand\@BeginOrdering{\@ifnextchar\eOrd{\@EndOrdering}{\@ContinueOrdering}}
  \newcommand\@ContinueOrdering[1]{\Symbol #1\@BeginOrdering}
  \newcommand\@EndOrdering[1]{\end{math}\xspace}
\author{Ignaz Rutter}{Faculty of Computer Science and Mathematics, University of
Passau, Germany}{rutter@fim.uni-passau.de}{https://orcid.org/0000-0002-3794-4406}{}
\author{Darren Strash}{Department of Computer Science, Hamilton College,
USA}{dstrash@hamilton.edu}{https://orcid.org/0000-0001-7095-8749}{}
\author{Peter Stumpf}{Faculty of Computer Science and Mathematics, University of
Passau, Germany}{stumpf@fim.uni-passau.de}{https://orcid.org/0000-0003-0531-9769}{}
\author{Michael Vollmer}{Department of Informatics, Karlsruhe Institute of
Technology (KIT), Germany}{michael.vollmer@kit.edu}{}{}
\authorrunning{I.\ Rutter, D.\ Strash, P.\ Stumpf, M.\ Vollmer}
\keywords{Intersection Graphs,
Recognition Algorithm,
Proper/Unit Interval Graphs,
Simultaneous Representations}
\begin{document}

\maketitle

\begin{abstract} 
  In a confluence of combinatorics and geometry, simultaneous representations provide a way to realize combinatorial objects that share common structure. 
  A standard case in the study of simultaneous representations is the \emph{sunflower case} where all objects share the same common structure. While the recognition problem for general simultaneous interval graphs---the simultaneous version of arguably one of the most well-studied graph classes---is NP-complete, the complexity of the sunflower case for three or more simultaneous interval graphs is currently open. 
  In this work we settle this question for \emph{proper} interval graphs. We give an algorithm to recognize simultaneous proper interval graphs in linear time in the sunflower case where we allow any number of simultaneous graphs. 
  Simultaneous \emph{unit} interval graphs are much more `rigid' and therefore have less freedom in their representation. We show they can be recognized in time $\mathcal{O}(|V|\cdot|E|)$ for any number of simultaneous graphs in the sunflower case where $G=(V,E)$ is the union of the simultaneous graphs. 
 We further show that both recognition problems are in general NP-complete if the number of simultaneous graphs is not fixed. The restriction to the sunflower case is in this sense necessary.
\end{abstract}

\section{Introduction}
\label{ch:introduction}

Given a family of sets $\mathcal R$, the corresponding \emph{intersection graph} $G$ has a vertex for each set and two vertices are adjacent if and only if their sets have a non-empty intersection. If all sets are intervals on the real line, then $\mathcal R$ is an \emph{interval representation} of $G$ and $G$ is an \emph{interval graph}; see Figure~\ref{intro:fig:graphs}.

\begin{figure}[tb]
	\centering
	\includegraphics[scale=1]{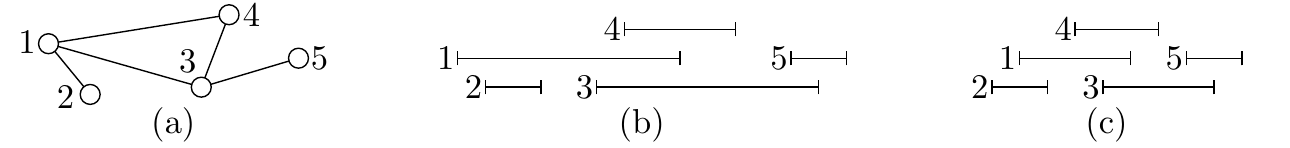}
	\caption{ (a) A graph, with (b) an interval representation and (c) proper interval representation.}
	\label{intro:fig:graphs}
\end{figure}

In the context of intersection graph classes, much work has been devoted to efficiently computing a \emph{representation}, which is a collection of sets or geometric objects having an intersection graph that is isomorphic to a given graph. For many well-known graph classes, such as interval graphs and chordal graphs, this is a straightforward task~\cite{Golumbic:2004:AGT:984029,spinrad-2003}. However, often it is desirable to consistently represent \emph{multiple} graphs that have subgraphs in common. 
This is true, for instance, in realizing schedules with shared events, embedding circuit graphs of adjacent layers on a computer chip, and visualizing the temporal relationship of graphs that share a common subgraph~\cite{jampani-jgaa-2012}. Likewise, in genome reconstruction, we can ask if a sequence of DNA can be reconstructed from strands that have sequences in common~\cite{goldberg1995four}. 

\emph{Simultaneous representations} capture this in a very natural way. Given \emph{simultaneous graphs} $G_1,G_2,\ldots,G_k$ where each pair of graphs $G_i, G_j$ share some common subgraph, a \emph{simultaneous representation} asks for a fixed representation of each vertex that gives a valid representation of each $G_i$.  This notion is closely related to \emph{partial representation extension}, which asks if a given (fixed) representation of a subgraph can be extended to a representation of the full graph. Partial representation extension has been extensively studied for graph classes such as interval graphs \cite{Klavik2017}, circle graphs \cite{Chaplick2013}, as well as proper and unit interval graphs \cite{Klavik2017}.
For interval graphs, Bläsius and Rutter \cite{Blasius:2015:SPA:2846106.2738054} have even shown that the partial interval representation problem can be reduced to a simultaneous interval representation problem on two graphs in linear time. 

Simultaneous representations were first studied in the context of embedding graphs~\cite{DBLP:journals/corr/abs-1204-5853,brass2007simultaneous}, where the goal is to embed each simultaneous graph without edge crossings while shared subgraphs have the same induced embedding. Unsurprisingly, many variants are NP-complete~\cite{Gassner2006, Schaefer2013, Angelini2014,Estrella-Balderrama2008}. The notion of simultaneous representation of general intersection graph classes was introduced by Jampani and Lubiw~\cite{jampani-jgaa-2012}, who showed that it is possible to recognize simultaneous chordal graphs with two graphs in polynomial time, and further gave a polynomial time algorithm to recognize simultaneous comparability graphs and permutation graphs with two or more graphs that share the same subgraph (the \emph{sunflower case}). They further showed that recognizing three or more simultaneous chordal graphs is NP-complete.

Golumbic et al.~\cite{golumbic1995graph} introduced the \emph{graph sandwich problem} for a graph class $\Pi$. Given a vertex set $V$ and edge sets $E_1\subseteq E_2\subseteq \binom{V}{2}$ it asks whether there is an edge set $E_1\subseteq E\subseteq E_2$ such that the \emph{sandwich graph} $G=(V,E)$ is in $\Pi$. Jampani and Lubiw showed that if $\Pi$ is an intersection graph class, then recognizing $k$ simultaneous graphs in $\Pi$ in the sunflower case is a special case of the graph sandwich problem where $(V,E_2\setminus E_1)$ is a $k$-partite graph~\cite{jampani-jgaa-2012}.

We consider simultaneous \emph{proper} and \emph{unit} interval graphs. An interval graph is proper if in an interval representation no interval properly contains another one (see Figure~\ref{intro:fig:graphs}), and it is unit if all intervals have length~one.
Interestingly, while proper and unit interval graphs are the same graph class as shown by Roberts \cite{Roberts:Indifference}, simultaneous unit interval graphs differ from simultaneous proper interval graphs; see Figure~\ref{fig:SPneqSU}.  Unit interval graphs are intersection graphs and therefore the graph sandwich paradigm described by Jampani and Lubiw applies. Proper interval graphs are not since in a simultaneous representation intervals of distinct graphs may contain each other which means that the intersection graph of all intervals in the simultaneous representation is not proper.

Sunflower (\emph{unit}) interval graphs are a generalization of \emph{probe} (proper) interval graphs, where each sunflower graph has only one non-shared vertex. Both variants of probe graphs can be recognized in linear time~\cite{mcconnell2009linear,nussbaum2014recognition}.

\begin{figure}[tb]
	\centering
	\includegraphics{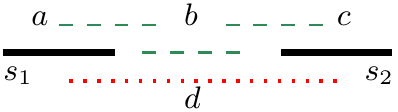}
	\caption{A simultaneous proper interval representation of a sunflower graph $\mathcal{G}$ consisting of two paths $G_1=(s_1,a,b,c,s_2)$ (dashed) and $G_2=(s_1,d,s_2)$ (dotted) with shared start and end $s_1,s_2$ (bold). They have no simultaneous unit interval representation: The intervals $a$ and $c$ enforce that $b$ lies between $s_1$ and $s_2$. Interval $d$ therefore includes $b$ in every simultaneous proper interval representation. In particular, not both can have size one.}
	\label{fig:SPneqSU}
\end{figure}

Simultaneous interval graphs were first studied by Jampani and Lubiw~\cite{Jampani2010} who gave a $\mathcal{O}(n^2\lg n)$-time recognition algorithm for the special case of two simultaneous graphs. Bl\"asius and Rutter~\cite{Blasius:2015:SPA:2846106.2738054} later showed how to recognize two simultaneous interval graphs in linear time. Bok and Jedli{\v{c}}kov{\'a} showed that the recognition of an arbitrary number of simultaneous interval graphs is in general NP-complete~\cite{bok2018note}. However, the complexity for the sunflower case with more than two simultaneous graphs is still open.

\subparagraph{Our Results.}
We settle these problems with $k$ not fixed for simultaneous \emph{proper} and \emph{unit} interval graphs -- those
graphs with an interval representation where no interval properly contains
another and where all intervals have unit length, respectively~\cite{Deng:1996:LRA:586666.586674, JGT:JGT3190060307, deFigueiredo1995179, Heggernes2015252}. 
For the sunflower case, we provide efficient recognition algorithms. The running time for proper interval graphs
is linear, while for the unit case it is $\mathcal{O}(|V|\cdot |E|)$ where $G = (V,E)$ is the union of the sunflower graphs.
In Appendix~\ref{ch:hardness} we prove NP-completeness for the  non-sunflower case. The reductions are similar to the simultaneous independent work of Bok and Jedli{\v{c}}kov{\'a} for simultaneous interval graphs~\cite{bok2018note}.

\subparagraph{Organization.}We begin by introducing basic notation and existing tools throughout Section~\ref{ch:preliminaries}. In Section~\ref{ch:SConnected} we give a characterization of simultaneous proper interval graphs, from which we develop an efficient recognition algorithm. In Section~\ref{sec:unit} we characterize simultaneous proper interval graphs that can be simultaneous unit interval graphs, and then exploit this property to efficiently search for a representation among simultaneous proper interval graph representations. 
Proofs of lemmas and theorems marked with $\star$ are provided in the appendix.

\section{Preliminaries}
\label{ch:preliminaries}
In this section we give basic notation, definitions and
characterizations.  Section~\ref{prel:sec:graphs} collects basic
concepts on graph theory, orderings, and PQ-trees.
Section~\ref{prel:sec:interval} introduces (proper) interval graphs
and presents relations between the representations of such graphs and
their induced subgraphs.  Finally, Section \ref{prel:sec:simul}
introduces the definition and notation of simultaneous graphs.

\subsection{Graphs, Orderings, and PQ-trees}
\label{prel:sec:graphs}

Unless mentioned explicitly, all graphs in this paper are undirected.
For a graph $G=(V,E)$ we denote its size $|G| := |V| + |E|$.

Let $\sigma$ be a binary relation. Then we write $a_1\le_\sigma a_2$ for $(a_1,a_2)\in\sigma$, and we write $a_1<_\sigma a_2$ if $a_1\leq_\sigma a_2$ and $a_1\ne a_2$.
We omit the subscript and simply use $<$ and $\le$ if the ordering it refers to is clear from
the context.  We denote the \emph{reversal} of a linear order $\oo$
by $\oo^r$, and we use $\circ$ to concatenate linear orders of
disjoint sets.

A \emph{PQ-tree} is a data structure for representing sets of linear
orderings of a ground set $X$.  Namely, given a set
$\mathcal C \subseteq 2^X$, a \emph{PQ-tree on $X$ for $\mathcal C$}
is a tree data structure $T$ that represents the set $\consistent(T)$
containing exactly the linear orders of $X$ in which the elements
of each set $C \in \mathcal C$ are consecutive.  The PQ-tree $T$ can
be computed in time
$O(|X| + \sum_{C \in \mathcal C} |C|)$~\cite{BOOTH1976335}.  Given a
PQ-tree $T$ on the set $X$ and a subset $X' \subseteq X$, there exists
a PQ-tree $T'$, called the \emph{projection} of $T$ to $X'$, that
represents exactly the linear orders of $X'$ that are restrictions
of orderings in $\consistent(T)$.  For any two PQ-trees $T_1$ and
$T_2$ on the set $X$, there exists a PQ-tree $T$ with
$\consistent(T) = \consistent(T_1) \cap \consistent(T_2)$, called the
\emph{intersection} of $T_1$ and $T_2$.  Both the projection and the
intersection can be computed in $O(|X|)$ time~\cite{BoothPhd}.

\subsection{Interval Graphs, Proper Interval Graphs, and Their Subgraphs}
\label{prel:sec:interval}

An \emph{interval representation} $R = \{I_v \mid v \in V\}$ of a
graph $G=(V,E)$ associates with each vertex $v \in V$ an interval
$I_v = [x,y]$ of real numbers such that for each pair of vertices
$u,v \in V$ we have $I_u \cap I_v \ne \emptyset$ if and only if
$\{u,v\} \in E$, i.e., the intervals intersect if and only if the
corresponding vertices are adjacent.  An interval representation $R$
is \emph{proper} if no interval properly contains another one, and it
is \emph{unit} if all intervals have length~1.  A graph is an
\emph{interval graph} if and only if it admits an interval
representation, and it is
a \emph{proper (unit) interval graph} if and only if it admits a
proper (unit) interval representation.  It is well-known that proper and unit interval graphs are the same graph class.

\begin{proposition}[\cite{Roberts:Indifference}]
  \label{the:unitproperEqu}
  A graph is a unit interval graph if and only if it is a proper
  interval graph.
\end{proposition}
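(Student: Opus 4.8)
The plan is to prove both directions of the equivalence, though the unit-implies-proper direction is essentially immediate: if every interval has length exactly $1$, then no interval $[x,x+1]$ can properly contain another interval $[y,y+1]$, since proper containment would force the outer interval to be strictly longer. So a unit interval representation is automatically a proper interval representation, and hence every unit interval graph is a proper interval graph.

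The substantive direction is that every proper interval graph admits a unit interval representation. First I would take a proper interval representation $R = \{I_v = [\ell_v, r_v] \mid v \in V\}$ and argue that, after a perturbation, we may assume all $2|V|$ endpoints are distinct; proper containment is an open condition, so a sufficiently small generic perturbation preserves the property that no interval properly contains another (and preserves the intersection graph). From the left endpoints we read off a linear order $<$ on $V$, and the key structural fact to extract is that this order is simultaneously a valid ordering of the right endpoints: in a proper representation with distinct endpoints, $\ell_u < \ell_v$ forces $r_u < r_v$ (otherwise $I_v \subseteq I_u$ properly). Consequently the neighborhoods are "consecutive" in this order, i.e.\ for each $v$ the set $N[v]$ forms an interval of consecutive vertices in $<$.

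Given this, I would build a unit representation directly from the combinatorial order. Enumerate $V = \{v_1 < v_2 < \dots < v_n\}$ and assign to $v_i$ the interval $[c_i, c_i + 1]$ where the centers $c_1 < c_2 < \dots < c_n$ are chosen so that $c_j - c_i < 1$ exactly when $v_i$ and $v_j$ are adjacent. Concretely one can set $c_i = i\cdot\varepsilon + (\text{number of } j < i \text{ with } v_j \text{ not adjacent to } v_i \text{ or later})$—more cleanly, process the vertices left to right and push $c_{i+1}$ just past $c_i$ if $v_i v_{i+1} \in E$, but jump it past $c_m + 1$ for the largest $m$ such that $v_m$ is not adjacent to $v_{i+1}$ whenever such a non-edge appears. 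One then checks that $|c_i - c_j| < 1 \iff v_iv_j \in E$ using the consecutiveness of neighborhoods: the neighbors of $v_i$ are exactly a contiguous block around $v_i$, and the construction makes the length-$1$ window around $c_i$ cover exactly that block.

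The main obstacle is making the center-assignment argument fully rigorous: one must verify that the "umbrella" property (neighborhoods consecutive in the endpoint order, with left- and right-endpoint orders agreeing) is genuinely strong enough that a single greedy left-to-right placement never creates a spurious intersection with an earlier interval or misses a required one. This amounts to checking that if $v_i < v_j < v_k$ with $v_i v_k \in E$ then $v_i v_j, v_j v_k \in E$ (which follows from consecutiveness applied to $N[v_i]$ or $N[v_k]$), and then an induction on $i$ showing the invariant $c_j - c_i < 1 \iff v_i v_j \in E$ is maintained. Since this is a classical result of Roberts, I would in practice cite \cite{Roberts:Indifference} for the details and only sketch the construction; alternatively, the forbidden-subgraph characterization (proper interval graphs are exactly the $\{K_{1,3}, \text{net}, \text{tent}, C_{\ge 4}\}$-free graphs, equivalently those with a "proper vertex ordering") gives the same combinatorial order and the same construction applies verbatim.
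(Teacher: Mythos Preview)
The paper does not prove this proposition at all; it is quoted as a classical result of Roberts with a bare citation and no argument. So there is no ``paper's approach'' to compare your proposal against---your sketch stands on its own, and it is the standard argument.

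Two small remarks on the sketch itself. First, the perturbation step has a wrinkle: if two intervals in the proper representation coincide exactly, a \emph{generic} perturbation can create proper containment, so ``proper containment is an open condition'' is not quite the right justification (non-containment is the closed one). The easy fix is to translate coinciding intervals rigidly by distinct tiny amounts, or to collapse twins first. Second, your greedy placement rule ``push $c_{i+1}$ just past $c_i$ if $v_iv_{i+1}\in E$'' is not sufficient by itself: even when $v_iv_{i+1}\in E$ there can be an earlier $v_m$ with $v_mv_i\in E$ but $v_mv_{i+1}\notin E$, which forces $c_{i+1}>c_m+1$ rather than $c_i+\varepsilon$. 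The clean way to organize the induction is via the non-decreasing function $r(i)=\max\{j:v_iv_j\in E\ \text{or}\ j=i\}$ (monotonicity is exactly the umbrella property you state), placing $c_i$ so that $c_{r(i)}<c_i+1<c_{r(i)+1}$. You already flag this bookkeeping as the part needing care and propose to cite Roberts for it, which is entirely appropriate here.
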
 

However, this does not hold in the simultaneous case where every simultaneous unit interval representation is clearly a simultaneous proper interval representation of the same graph, but not every simultaneous proper interval representation implies a simultaneous unit interval representation; see Figure~\ref{fig:SPneqSU}.

We use the well-known characterization of proper interval graphs using \emph{straight enumerations}~\cite{Deng:1996:LRA:586666.586674}.
Two adjacent vertices $u,v\in V$ are \emph{indistinguishable} if we have $N[u]=N[v]$ where $N[u]=\{v\colon uv\in E(H)\}\cup \{u\}$ is the closed neighborhood.
Being indistinguishable is an equivalence relation and we call the equivalence classes \emph{blocks} of $G$.  
We denote the block of $G$ that contains vertex $u$ by $B(u,G)$. Note that for a subgraph $G'\subseteq G$ the block $B(u,G')$ may contain vertices in $V(G')\setminus B(u,G)$ that have the same neighborhood as $u$ in $G'$ but different neighbors in $G$. 
Two blocks $B$, $B'$ are \emph{adjacent} if and only if $uv\in E$ for (any) $u\in B$ and $v\in B'$.
A linear order $\sigma$ of the blocks of $G$ is a \emph{straight enumeration} of $G$ if for every block, the block
and its adjacent blocks are consecutive in $\sigma$.  A proper interval
representation $R$ defines a straight enumeration $\sigma(R)$ by ordering
the intervals by their starting points and grouping together the
blocks.  Conversely, for each
straight enumeration $\sigma$, there exists a corresponding
representation $R$ with
$\sigma = \sigma(R)$~\cite{Deng:1996:LRA:586666.586674}.  A \emph{fine enumeration} of a graph $H$ is a linear order $\eta$ of $V(H)$ such that for $u\in V(H)$ the closed neighborhood $N_H[u]$ is consecutive in $\eta$.

\begin{proposition}[\cite{roberts1968representations,Deng:1996:LRA:586666.586674,Hell:2002:FDA:586843.586923}]
\label{proper:theorem:straightEnum}
\label{proper:theorem:uniqueEnumeration}
\label{proper:fine-enumeration}
\begin{inparaenum}[(i)]
\item  A graph is a proper interval graph if and only if it has a fine enumeration.
\item A graph is a proper interval graph if and only if it admits a
  straight enumeration.
\item A straight enumeration of a connected proper interval graph is
  unique up to reversal.
\end{inparaenum}
\end{proposition}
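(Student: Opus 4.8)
The plan is to treat the three parts essentially separately: endpoint‑ordering arguments for (i) and (ii), and an inductive ``peeling'' argument for (iii).

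\textbf{Part (i).} For the forward direction I would start from a proper interval representation $R=\{I_v=[\ell_v,r_v]\mid v\in V\}$ and, after a harmless perturbation, assume that all $2|V|$ endpoints are distinct; properness then forces the order of the vertices by left endpoint to coincide with their order by right endpoint. Call this common linear order $\eta$. For a fixed vertex $v$, the neighbours of $v$ occurring after $v$ in $\eta$ are exactly the $u$ with $\ell_u\le r_v$, which --- left endpoints increasing along $\eta$ --- form an initial segment of the $\eta$‑suffix of $v$; symmetrically, the earlier neighbours form a final segment of the $\eta$‑prefix of $v$. Thus $N_H[v]$ is an interval of $\eta$, so $\eta$ is a fine enumeration. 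For the converse, let $\eta=(v_1,\dots,v_n)$ be a fine enumeration and write $N_H[v_i]=\{v_{m_i},\dots,v_{k_i}\}$ with $m_i\le i\le k_i$. If $i<j$ but $k_i>k_j$, then $v_j$ lies strictly between $v_i$ and $v_{k_i}$ in $\eta$, so consecutiveness of $N_H[v_{k_i}]$ gives $v_j\in N_H[v_{k_i}]$ and hence $k_i\le k_j$, a contradiction; so $k_1\le\dots\le k_n$, and dually $m_1\le\dots\le m_n$. Placing left endpoints at $1<2<\dots<n$ and choosing right endpoints $r_i$ greedily with $\ell_{k_i}\le r_i<\ell_{k_i+1}$ and $r_1<r_2<\dots<r_n$ (possible because $k_i$ is monotone) yields a representation whose intersections are exactly the edges of $H$ and which, having strictly increasing left \emph{and} right endpoints, contains no interval properly inside another.

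\textbf{Part (ii).} Given a fine enumeration $\eta$, I would first show that indistinguishable vertices occur consecutively in $\eta$. Suppose $N_H[u]=N_H[v]$ and $w$ lies $\eta$‑between $u$ and $v$; then $w\in N_H[u]$ by consecutiveness, so $w$ is adjacent to both $u$ and $v$. Picking any $x$ witnessing $N_H[w]\neq N_H[u]$ and distinguishing cases on the $\eta$‑position of $x$, one always reaches a contradiction via consecutiveness of one of $N_H[w]$, $N_H[u]$, $N_H[v]$ (e.g.\ if $x\sim u,v$ but $x\not\sim w$, then $w$ lies between $u$ and $v$, forcing $x\sim w$). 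Hence $\eta$ induces a linear order $\sigma$ of the blocks; as each $N_H[v]$ is a union of whole blocks that is consecutive in $\eta$, the corresponding blocks are consecutive in $\sigma$, so $\sigma$ is a straight enumeration. Conversely, a straight enumeration refined arbitrarily within each block is a fine enumeration, so together with (i) we get (ii). (Alternatively, the straight enumeration is read directly off the endpoint order of a proper representation.)

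\textbf{Part (iii).} Passing to the \emph{block graph} $H^{*}$ --- one vertex per block, adjacencies inherited --- in which all closed neighbourhoods are distinct, it suffices to prove that a connected such graph has a unique straight enumeration up to reversal. Fix one, $(B_1,\dots,B_t)$, with $t\ge2$. First, consecutive blocks are adjacent: if $B_i\not\sim B_{i+1}$, then $N[B_i]$, being consecutive, containing $B_i$ and missing $B_{i+1}$, lies in $\{B_1,\dots,B_i\}$, and symmetrically $N[B_{i+1}]\subseteq\{B_{i+1},\dots,B_t\}$, so no edge can join the two sides --- contradicting connectivity. Also, $N[B_1]$ is a clique (if $B_1\sim B_j$ and $B_1\sim B_{j'}$ with $j<j'$, consecutiveness of $N[B_{j'}]$ forces $B_j\sim B_{j'}$), and in any straight enumeration the reach functions $B\mapsto(\ell(B),r(B))$ with $N[B]=\{B_{\ell(B)},\dots,B_{r(B)}\}$ are non‑decreasing in both coordinates. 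I would then induct on the number of blocks: given a second straight enumeration $\tau$, show that $B_1$ is an endpoint block of $\tau$ as well; after reversing $\tau$ if necessary, delete $B_1$, verify that the remainder is connected and that $\sigma$ and $\tau$ induce straight enumerations of it (possibly with consecutive blocks merged, which one checks is the only kind of merge that occurs); apply the induction hypothesis; and lift the resulting equality past the deleted block, using that $N[B_1]$ is a clique forming a prefix of $\tau$.

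\textbf{Main obstacle.} Parts (i) and (ii) are routine. The heart of the argument is (iii), and the delicate point is that the endpoint blocks of a straight enumeration cannot be pinned down by any purely local condition --- interior blocks whose neighbourhood is a clique exist --- precisely because ``which end is the left end'' is exactly the reversal freedom. Showing that the end blocks of one straight enumeration must be the end blocks of every other therefore requires combining the \emph{global} monotonicity of the reach functions with the clique structure of $N[B_1]$, and carefully tracking the block merges produced when an end block is removed; this bookkeeping is the main work.
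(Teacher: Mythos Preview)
The paper does not prove this proposition: it is quoted as a known result (with citations to Roberts, Deng--Hell--Huang, and Hell--Huang) and used as a black box throughout; no argument for any of the three parts appears in the body or in the appendix. There is therefore no ``paper's own proof'' to compare your proposal against.

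On your sketch itself: parts~(i) and~(ii) follow the standard endpoint-ordering arguments and are correct. For~(iii) you correctly locate the crux---that an endpoint block $B_1$ of one straight enumeration $\sigma$ must be an endpoint of any other straight enumeration $\tau$---and you are right that simpliciality alone does not pin down endpoints (interior blocks with clique closed neighbourhoods do exist in $H^{*}$). But you do not actually supply this step; ``global monotonicity of the reach functions together with the clique structure of $N[B_1]$'' is not yet an argument. One clean way to close the gap: if $B$ is a $\tau$-endpoint then $H^{*}\setminus N[B]$ is a $\tau$-suffix and hence connected (consecutive $\tau$-blocks are adjacent); whereas if $B=B_m$ is $\sigma$-interior with $N[B_m]=\{B_l,\dots,B_r\}$ and $1<l\le r<t$, then no edge of $H^{*}$ can join $\{B_1,\dots,B_{l-1}\}$ to $\{B_{r+1},\dots,B_t\}$ (any such edge would drag $B_m$ into one of the two closed neighbourhoods by $\sigma$-consecutiveness), so $H^{*}\setminus N[B]$ is disconnected---a contradiction. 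In the boundary cases $l=1$ or $r=t$, the clique property of $N[B_1]$ (resp.\ $N[B_t]$) forces $N[B_m]=N[B_1]$ (resp.\ $=N[B_t]$), and distinctness of closed neighbourhoods in $H^{*}$ gives $m\in\{1,t\}$. With this step in hand your peeling induction (and the routine block-merging bookkeeping) goes through.
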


\subsection{Simultaneous Graphs}
\label{prel:sec:simul}

A \emph{simultaneous graph} is a tuple $\mathcal G = (G_1, \dots, G_k)$ of
graphs $G_i$ that may each share vertices and edges. Note that this definition
differs from the one we gave in the introduction. This way the input for the
simultaneous representation problem is a single entity.  The size
$|\mathcal{G}|$ of a simultaneous graph is $\sum_{i = 1}^k |G_i|$. We call
$\mathcal G$ \emph{connected}, if $\bigcup_{i=1}^k G_i$ is connected. A
\emph{simultaneous (proper/unit) interval representation} $\mathcal{R} = (R_1,
\dots, R_k)$ of $\mathcal{G}$ is a tuple of representations such that $R_i \in
\mathcal{R}$ is a (proper/unit) interval representation of graph $G_i$ and the
intervals representing shared vertices are identical in each representation.  A
simultaneous graph is a \emph{simultaneous (proper/unit) interval graph} if it
admits a simultaneous (proper/unit) interval representation.

An important special case is that of \emph{sunflower graphs}.  The
simultaneous graph $\mathcal G$ is a \emph{sunflower} graph if each
pair of graphs $G_i,G_j$ with $i \ne j$ shares exactly the same
subgraph $S$, which we then call the \emph{shared graph}.
Note that, for $\mathcal G$ to be a simultaneous interval graph, it is
a necessary condition that $G_i \cap G_j$ is an induced subgraph of
$G_i$ and $G_j$ for $i,j=1,\dots,k$.  In particular, in the sunflower
case the shared graph $S$ must be an induced subgraph of each $G_i$.
The following lemma allows us to restrict ourselves to instances whose
union graph $\bigcup_{\mathcal G} = \bigcup_{i=1}^k G_i$ is connected.

\begin{restatable}[$\star$]{lemma}{unionconnected}
  \label{lem:union-connected}
  Let $\mathcal G = (G_1,\dots,G_k)$ be a simultaneous graph and let
  $C_1,\dots,C_l$ be the connected components of~$\bigcup_{\mathcal G}$.
  Then $\mathcal G$ is a simultaneous (proper) interval graph if and
  only if each of the graphs $\mathcal G_i = (G_1 \cap C_i,\dots,G_k
  \cap C_i)$, $i=1,\dots,l$ is a simultaneous (proper/unit) interval
  graph.
\end{restatable}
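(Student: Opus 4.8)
The statement is a standard "decompose along connected components" lemma, and the proof splits naturally into two implications. The forward direction ($\mathcal{G}$ a simultaneous (proper/unit) interval graph $\Rightarrow$ each $\mathcal{G}_i$ is) is immediate: given a simultaneous representation $\mathcal{R} = (R_1,\dots,R_k)$ of $\mathcal{G}$, simply restrict each $R_j$ to the vertices lying in the component $C_i$. Since $C_i$ is an induced subgraph of $\bigcup_\mathcal{G}$, and $G_j \cap C_i$ is an induced subgraph of $G_j$, the restricted intervals still realize exactly the edges of $G_j \cap C_i$; the shared intervals remain identical across the $R_j$; and the proper/unit property is inherited by any subset of the intervals. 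So the restriction is a simultaneous (proper/unit) interval representation of $\mathcal{G}_i$.

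For the reverse direction, suppose each $\mathcal{G}_i = (G_1 \cap C_i,\dots,G_k \cap C_i)$ has a simultaneous (proper/unit) interval representation $\mathcal{R}^{(i)} = (R_1^{(i)},\dots,R_k^{(i)})$. The plan is to place these representations side by side on the real line so that the intervals coming from different components are pairwise disjoint. Concretely, I would first shift and (for the proper case, if desired) scale each $\mathcal{R}^{(i)}$ so that all its intervals lie inside a bounded window $W_i = [\ell_i, r_i]$, and then choose the windows $W_1, W_2, \dots, W_l$ to be pairwise disjoint, say with $r_i < \ell_{i+1}$. For graph $G_j$, define $R_j := \bigcup_{i=1}^l R_j^{(i)}$. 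A vertex $v \in V(G_j)$ lies in exactly one component $C_{i(v)}$, so $R_j$ assigns it the interval it received in $R_j^{(i(v))}$; this is well-defined and the shared vertices get identical intervals across all $j$ because they already did within each component. Two vertices in the same component have intersecting intervals iff they are adjacent in $G_j \cap C_i = $ the induced subgraph, which matches adjacency in $G_j$ since components are unions of connected components; two vertices in different components are non-adjacent in $G_j$ (different components of $\bigcup_\mathcal{G}$ are certainly in different components of $G_j \subseteq \bigcup_\mathcal{G}$) and their intervals lie in disjoint windows, hence do not intersect. So $R_j$ is an interval representation of $G_j$, and $\mathcal{R} = (R_1,\dots,R_k)$ is a simultaneous interval representation of $\mathcal{G}$.

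It remains to check the proper/unit refinement. For the unit case, one cannot rescale, but one also need not: translating a unit interval representation keeps all lengths equal to $1$, so after translating each $\mathcal{R}^{(i)}$ far enough to the right the windows become disjoint automatically (each $R_j^{(i)}$ occupies a bounded window of some width $w_i$, and we may set $\ell_{i+1} = r_i + 1$). For the proper case, translation already suffices as well, since "no interval properly contains another" is preserved under translation, and an interval in $W_i$ can never contain — properly or otherwise — an interval in a disjoint $W_{i'}$. Thus in all three cases $\mathcal{R}$ has the required property. The only mildly delicate point, and the one I would state carefully, is the observation that a connected component $C_i$ of $\bigcup_\mathcal{G}$ meets $G_j$ in a (possibly empty) union of connected components of $G_j$, so that adjacency in $G_j$ never crosses between distinct $C_i$'s; everything else is bookkeeping.
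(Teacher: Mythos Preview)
Your proof is correct and follows the same approach as the paper: restrict a given simultaneous representation to each component for the forward direction, and translate the component representations into disjoint windows on the real line for the converse. The paper's proof is considerably terser (it simply says ``place all intervals in $\mathcal R_i$ to the right of all intervals in $\mathcal R_{i-1}$''), while you spell out explicitly why no spurious intersections arise across components and why translation preserves the proper/unit property, but the underlying argument is identical.
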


\section{Sunflower Proper Interval Graphs}
\label{ch:SConnected}

In this section, we deal with simultaneous proper interval
representations of sunflower graphs.  We first present a combinatorial
characterization of the simultaneous graphs that admit such a
representation.  Afterwards, we present a simple linear-time
recognition algorithm.  Finally, we derive a combinatorial description of
all the combinatorially different simultaneous proper interval
representations of a connected simultaneous graph, which is a
prerequisite for the unit case.

\subsection{Characterization}
\label{simul:sec:rest}

Let $G = (V, E)$ be a proper interval graph with straight enumeration \oo and let $V_S \subseteq V$ be a subset of vertices.
We call \oo \emph{compatible} with a linear order \co of $V_S$ if, we have for $u,v \in V_S$ that \cOrd{\leq_\co}{u}{v}\eOrd implies \cOrd{\leq_\oo}{B(u, G)}{B(v, G)}\eOrd.

\begin{restatable}{lemma}{enumCombinationLemma}
\label{lem:EnumCombination}
Let $\mathcal{G} = (G_1, \dots, G_k)$ be a sunflower graph with shared
graph $S = (V_S, E_S)$.  Then $\mathcal G$ admits a simultaneous proper
interval representation $\mathcal R$ if and only if
there exists a linear order $\co$ of $V_S$ and straight enumerations $\oo_i$
for each $G_i$ that are compatible with $\co$.
\end{restatable}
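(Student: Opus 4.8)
The plan is to prove the two directions separately, exploiting Proposition~\ref{proper:theorem:straightEnum} to translate freely between simultaneous proper interval representations and straight enumerations. For the forward direction, suppose $\mathcal R = (R_1,\dots,R_k)$ is a simultaneous proper interval representation of $\mathcal G$. Each $R_i$ defines a straight enumeration $\oo_i = \oo(R_i)$ of $G_i$. The key observation is that the shared vertices $V_S$ receive identical intervals across all $R_i$, so they induce a consistent ordering by starting point; define $\co$ to be this common left-to-right order of the intervals $\{I_v \mid v \in V_S\}$ (breaking ties for vertices in the same block of $G_i$ arbitrarily, say by vertex index). I would then check that $\oo_i$ is compatible with $\co$: if $u <_\co v$ then $I_u$ starts no later than $I_v$ in $R_i$, and since $\oo_i$ is obtained by ordering intervals by starting point and grouping blocks, $B(u,G_i) \le_{\oo_i} B(v,G_i)$. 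The only subtlety is tie-breaking within a block, which I handle by noting that compatibility only constrains the block order, so vertices that end up in the same block of some $G_i$ cause no violation regardless of how $\co$ orders them.

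For the reverse direction, suppose we are given $\co$ on $V_S$ and compatible straight enumerations $\oo_i$ of each $G_i$. By Proposition~\ref{proper:theorem:straightEnum}(ii) each $\oo_i$ yields a proper interval representation $R_i$ with $\oo(R_i) = \oo_i$; the task is to choose these $R_i$ so that every shared vertex $v \in V_S$ gets the \emph{same} interval in all of them. The natural approach is to first fix, once and for all, intervals for the shared vertices that are consistent with $\co$ (placing them in the order $\co$, with equal-length unit-ish intervals — any concrete placement realizing $\co$ as a left-to-right order works), and then extend this partial placement to a full proper interval representation of each $G_i$ individually. Compatibility of $\oo_i$ with $\co$ guarantees that the prescribed order of the shared intervals agrees with the block order forced by $\oo_i$, so there is ``room'' to insert the remaining intervals of $G_i$ between and around the shared ones while respecting $\oo_i$; concretely I would interleave the non-shared blocks of $G_i$ into the gaps determined by $\oo_i$, scaling and shifting the shared intervals as needed but keeping them identical across the $G_i$ by agreeing on a common scale beforehand.

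The main obstacle I anticipate is exactly this gluing step: realizing the abstract statement ``compatible straight enumerations can be simultaneously realized'' as an explicit geometric construction. The delicate point is that $\oo_i$ prescribes not just the order of shared blocks but their interleaving with non-shared blocks, and these interleavings differ across the $G_i$; one must argue that a single choice of shared intervals can be simultaneously extended. I expect the cleanest route is to work entirely at the combinatorial level: show that a straight enumeration $\oo_i$ of $G_i$ that is compatible with $\co$ can be refined to a \emph{fine} enumeration (Proposition~\ref{proper:theorem:straightEnum}(i)) in which the shared vertices appear in the order $\co$, and then observe that fine enumerations agreeing on $V_S$ can be realized by representations agreeing on the shared intervals (e.g.\ assign vertex $v$ an interval centered at its position in the fine enumeration, with a small uniform radius, and check that shared vertices can be given globally consistent positions because $\co$ fixes their relative order in every $\oo_i$). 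I would also invoke Lemma~\ref{lem:union-connected} only implicitly — the argument here does not need connectivity, since straight enumerations of the components can be handled independently.
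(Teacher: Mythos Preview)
Your forward direction is correct and essentially identical to the paper's.

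For the reverse direction, your high-level plan (``fix the shared intervals once, then extend to each $G_i$'') is viable, but it is a genuinely different decomposition from the paper's, and the concrete realization you sketch at the end does not work. The paper instead builds, for each $G_i$, a linear order $\co_i$ on the multiset of \emph{interval endpoints} directly from $\oo_i$; compatibility with $\co$ forces the endpoints of shared vertices to appear in the same relative order in every $\co_i$, so these orders can be spliced into one global order $\co_M$ on all endpoints of all graphs (the shared endpoints act as fixed ``pegs'' and the non-shared endpoints of each $G_i$ are slotted into the gaps). Assigning each endpoint its integer rank in $\co_M$ then yields concrete intervals, automatically identical for shared vertices. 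This global-merge approach sidesteps any partial-representation-extension argument.

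Your route, by contrast, would need to justify the extension step: given a fixed proper interval representation of $S$ and a compatible straight enumeration $\oo_i$, why can one always extend to a proper representation of $G_i$? This is true, but it is essentially the partial representation extension result of Klav{\'i}k et al.\ (which the paper does invoke, but only later, for the \emph{algorithm} in Section~\ref{sec:simple-algorithm}, not for this lemma). More seriously, your final concrete suggestion---``assign vertex $v$ an interval centered at its position in the fine enumeration, with a small uniform radius''---fails: in a fine enumeration the closed neighbourhoods are consecutive but of varying width, so a single uniform radius cannot in general realize the adjacencies (and even granting Roberts' theorem, the unit-interval positions are not the integer ranks of a fine enumeration). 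You would need a more careful placement, at which point the paper's endpoint-ordering construction is both simpler and self-contained.
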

{\renewcommand{\proofname}{Proof Sketch} \begin{proof}
For a given representation $\mathcal{R}$ the straight enumerations $\oo_i = \oo(R_i)$ and linear order \co of $V_S$ given by their left endpoints in $\mathcal{R}$ clearly satisfy the lemma.
Conversely we build a linear order of interval endpoints from each $\oo_i$ that equals a proper interval representation.
As each $\oo_i$ is compatible with \co, all endpoint orderings allow the same ordering for vertices in $S$, thus permitting one global ordering of all endpoints.
Drawing the intervals according to this ordering then yields a simultaneous representation $\mathcal{R}$ since it extends each individual ordering.
\end{proof}}

Let $\mathcal{G}=(G_1,\dots,G_k)$ be a sunflower graph with shared graph $S = (V_S, E_S)$ and for each $G_i \in \mathcal{G}$ let $\oo_i$ be a straight enumeration of $G_i$.
We call the tuple $(\oo_1, \dots, \oo_k)$ a \emph{simultaneous enumeration} if
for any $i,j \in \{1, \dots, k\}$ and $u,v \in V_S$ we have \cOrd{<_{\oo_i}}{B(u,G_i)}{B(v,G_i)}\eOrd $\Rightarrow$ \cOrd{\leq_{\oo_j}}{B(u,G_j)}{B(v,G_j)}\eOrd.
That is, the blocks containing vertices of the shared graph are not ordered differently in any straight enumeration.

\begin{restatable}[$\star$]{theorem}{simulEnumerationstheorem}
  \label{the:simulEnums}
  Let $\mathcal G = (G_1,\dots,G_k)$ be a sunflower graph.
  There exists a simultaneous proper interval representation $\mathcal{R} = (R_1, \dots, R_k)$ of $\mathcal{G}$ if and only if there is a simultaneous enumeration $(\oo_1, \dots, \oo_k)$ of $\mathcal{G}$.
  If $(\oo_1, \dots, \oo_k)$ exists, there also exists $\mathcal{R}$ with $\oo(R_i) = \oo_i$ for each $R_i \in \mathcal{R}$.
\end{restatable}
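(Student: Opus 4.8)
The plan is to reduce this statement to Lemma~\ref{lem:EnumCombination}, which already characterizes the existence of a simultaneous proper interval representation in terms of a linear order $\co$ of $V_S$ and straight enumerations $\oo_i$ compatible with $\co$. So the real work is to show that the two conditions --- (a) existence of such a $\co$ together with compatible $\oo_i$, and (b) existence of a simultaneous enumeration $(\oo_1,\dots,\oo_k)$ --- are equivalent. The direction (a) $\Rightarrow$ (b) is the easy one: given $\co$ and compatible $\oo_i$, I claim $(\oo_1,\dots,\oo_k)$ is already a simultaneous enumeration. Indeed, if $B(u,G_i) <_{\oo_i} B(v,G_i)$ for $u,v\in V_S$, then compatibility of $\oo_i$ with $\co$ forbids $v <_\co u$ (that would force $B(v,G_i)\le_{\oo_i} B(u,G_i)$, contradicting that the two blocks are distinct since $u,v$ lie in different blocks of $G_i$ here); hence $u \le_\co v$, and then compatibility of $\oo_j$ with $\co$ gives $B(u,G_j)\le_{\oo_j} B(v,G_j)$, which is exactly the required implication. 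The addendum about realizing $\mathcal R$ with $\oo(R_i)=\oo_i$ is immediate from the construction in the proof of Lemma~\ref{lem:EnumCombination}, which builds $\mathcal R$ from exactly these $\oo_i$.

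The interesting direction is (b) $\Rightarrow$ (a): from a simultaneous enumeration I must extract a single linear order $\co$ of $V_S$ that is compatible with every $\oo_i$. The natural candidate is: for $u,v\in V_S$, set $u\le_\co v$ if $B(u,G_i) \le_{\oo_i} B(v,G_i)$ for some (equivalently, as I will argue, every) index $i$ for which this places them in a definite order, and break remaining ties (vertices that are never separated) arbitrarily but consistently. More precisely, I would first define a relation by $u \prec v$ iff there exists $i$ with $B(u,G_i) <_{\oo_i} B(v,G_i)$. The simultaneous-enumeration property says that if $u\prec v$ via index $i$, then $B(u,G_j)\le_{\oo_j} B(v,G_j)$ for all $j$, so in particular we never have both $u\prec v$ and $v\prec u$; hence $\prec$ is a strict partial order (antisymmetry is exactly this, and transitivity follows because each $\oo_j$ respects all the witnessed inequalities simultaneously). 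I then let $\co$ be any linear extension of $\prec$. Compatibility of each $\oo_i$ with $\co$ then needs: $u <_\co v \Rightarrow B(u,G_i)\le_{\oo_i} B(v,G_i)$. If $u<_\co v$ and yet $B(v,G_i) <_{\oo_i} B(u,G_i)$, then $v\prec u$, so $\co$ being a linear extension of $\prec$ would force $v <_\co u$, a contradiction.

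The main obstacle, and the step I would be most careful about, is the subtlety flagged in the preliminaries: for a subgraph $G_i$, the block $B(u,G_i)$ can contain vertices outside $B(u,\bigcup_{\mathcal G})$, and two shared vertices $u,v\in V_S$ that lie in the \emph{same} block of $G_i$ might lie in different blocks of $G_j$. This is exactly why $\prec$ need only be a partial (not total) order, and why I introduce the tie-breaking linear extension. I need to verify that this causes no inconsistency --- i.e., that if $u,v$ are in the same block of $G_i$ but different blocks of $G_j$, then the order of $B(u,G_j),B(v,G_j)$ in $\oo_j$ is unconstrained by $G_i$ and thus freely incorporated into $\co$, and conversely that once $\co$ is fixed, the blocks of $S$-vertices appear in $\oo_i$ in an order consistent with $\co$ even when several $S$-vertices collapse into one $\oo_i$-block (in which case the constraint $B(u,G_i)\le_{\oo_i} B(v,G_i)$ is satisfied with equality, regardless of how $\co$ orders $u,v$). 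Checking transitivity of $\prec$ carefully --- chasing a chain $u\prec v\prec w$ through possibly different witnessing indices and using the simultaneous-enumeration implication at each link to conclude $u\prec w$ --- is the one place where a genuine (if short) argument is needed, as opposed to unwinding definitions.
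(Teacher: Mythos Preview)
Your proposal is correct and follows essentially the same route as the paper: both directions go through Lemma~\ref{lem:EnumCombination}, and for the nontrivial direction the paper defines exactly your relation $\prec$ (calling it $\co_P$), asserts it is a partial order, and takes a linear extension $\co$. Your write-up is in fact more careful than the paper's, which simply states that $\co_P$ is a partial order without spelling out the transitivity argument you flag; your observation that a witness $B(u,G_i)<_{\oo_i}B(v,G_i)$ propagates to $B(u,G_j)\le_{\oo_j}B(v,G_j)$ for every $j$ is precisely what makes the chain $u\prec v\prec w$ collapse into a single witnessing index.
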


\subsection{A Simple Recognition Algorithm}
\label{sec:simple-algorithm}

In this section we develop a very simple recognition algorithm for
sunflower graphs that admits a simultaneous proper interval representation based on Theorem~\ref{the:simulEnums}.

Let~$\mathcal G = (G_1,\dots,G_k)$ be a sunflower graph with shared
graph $S=(V_S,E_S)$.  By Proposition~\ref{proper:theorem:straightEnum}, for each
graph $G_i$, there exists a PQ-tree $T_i'$ that describes exactly the
fine enumerations of $G_i$.  We denote by~$T_i = T_i'|S$ the
projection of $T_i$ to the vertices in $S$.  The tree $T_i$ thus
describes all proper interval representations of~$S$ that can be
extended to a proper interval representation of~$G_i$.  Let $T$ denote
the intersection of $T_1,\dots,T_k$.  By definition, $T$ represents
all proper interval representations of $S$ that can be extended to a
proper interval representation of each graph $G_i$.  Thus,
$\mathcal G$ admits a simultaneous proper interval representation if
and only if~$T$ is not the null-tree.

If~$T$ is not the null-tree, we can obtain a simultaneous
representation by choosing any ordering $O \in \consistent(T)$ and
constructing a simultaneous representation~$\mathcal S$ of~$S$.  Using
the algorithm of~Klav{\'i}k et al.~\cite{Klavik2017}, we can then
independently extend $\mathcal S$ to representations~$R_i$ of
$G_i$.  Since the trees~$T_i$ can be computed in time linear in the
size of the graph~$G_i$, and the intersection of two trees takes
linear time, the testing algorithm takes time linear in the total size
of the~$G_i$.  The representation extension of Klav{\'i}k et al.~\cite{Klavik2017} runs
in linear time.  We therefore have the following theorem.

\begin{theorem}
  \label{the:spirRecognition}
  Given a sunflower graph~$\mathcal G=(G_1,\dots,G_k)$, it can be
  tested in linear time whether~$\mathcal G$ admits a simultaneous
  proper interval representation.
\end{theorem}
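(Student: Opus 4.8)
The plan is to reduce the problem, via Lemma~\ref{lem:EnumCombination}, to deciding whether a single PQ-tree on the shared vertex set $V_S$ is non-empty, and then to realise every step by a known linear-time PQ-tree or representation-extension subroutine.

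First, for each graph $G_i$ I would check whether $G_i$ is a proper interval graph and, if so, build a PQ-tree $T_i'$ on $V(G_i)$ whose consistent orderings are exactly the fine enumerations of $G_i$. By Proposition~\ref{proper:fine-enumeration} such a $T_i'$ exists if and only if $G_i$ is a proper interval graph, and it can be computed with the Booth--Lueker algorithm~\cite{BOOTH1976335} in time $O(|G_i|)$, the consecutiveness constraints being the closed neighborhoods $N_{G_i}[v]$, of total size $O(|G_i|)$. If some $G_i$ fails this test, then $\mathcal{G}$ cannot be a simultaneous proper interval graph and the algorithm rejects. Otherwise I would take the projection $T_i := T_i'|V_S$, computable in time $O(|V_S|)$, and form the intersection $T := T_1 \sect \dots \sect T_k$ by $k-1$ successive PQ-tree intersections, each costing $O(|V_S|)$, hence $O(k\cdot|V_S|) = O(\sum_i |G_i|)$ in total.

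For correctness, note that $\consistent(T_i)$ is exactly the set of linear orders of $V_S$ obtained by restricting a fine enumeration of $G_i$ to $V_S$; since every block of $G_i$ is consecutive in each fine enumeration and the blocks occur in straight-enumeration order, this coincides with the set of orders of $V_S$ that are compatible, in the sense of Section~\ref{simul:sec:rest}, with some straight enumeration of $G_i$. Hence $T$ is not the null-tree if and only if there is a linear order $\co$ of $V_S$ together with, for every $i$, a straight enumeration of $G_i$ compatible with $\co$, which by Lemma~\ref{lem:EnumCombination} is equivalent to $\mathcal{G}$ admitting a simultaneous proper interval representation. So the test is correct, and its running time is $O(\sum_i|G_i|) = O(|\mathcal G|)$. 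If an explicit representation is also wanted when $T$ is non-null, one picks any $O\in\consistent(T)$, realises it as a proper interval representation $\mathcal S$ of $S$, and extends $\mathcal S$ independently to a proper interval representation $R_i$ of each $G_i$ with the linear-time extension algorithm of Klav\'ik et al.~\cite{Klavik2017}; such an extension exists because $O\in\consistent(T_i)$, and the resulting tuple $(R_1,\dots,R_k)$ is a simultaneous proper interval representation because $\mathcal S$ fixes the intervals of all shared vertices.

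The one step requiring genuine care is the identification, used in the correctness argument, of ``restrictions to $V_S$ of fine enumerations of $G_i$'' with ``orders of $V_S$ compatible with some straight enumeration of $G_i$''; this is the bridge that lets the PQ-tree intersection faithfully encode the hypothesis of Lemma~\ref{lem:EnumCombination}, and it relies on the standard structural facts that a block of a proper interval graph is consecutive in every fine enumeration and that blocks appear in straight-enumeration order. Everything else is a direct composition of linear-time black boxes: PQ-tree construction, PQ-tree projection and intersection, and proper-interval representation extension.
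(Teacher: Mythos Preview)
Your proposal is correct and follows essentially the same approach as the paper: build a PQ-tree of fine enumerations for each $G_i$, project to $V_S$, intersect, and test for the null-tree, then use the extension algorithm of Klav\'ik et al.\ to produce a representation. The only cosmetic difference is that you invoke Lemma~\ref{lem:EnumCombination} directly for correctness, whereas the paper phrases the same step as ``$T_i$ describes all proper interval representations of~$S$ extendable to~$G_i$'' (implicitly relying on Theorem~\ref{the:simulEnums}); your explicit bridge between ``restrictions of fine enumerations'' and ``orders compatible with a straight enumeration'' is a welcome bit of added precision.
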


\subsection{Combinatorial Description of Simultaneous Representations}

Let $\mathcal{G}$ be a sunflower proper interval graph with shared
graph $S$ and simultaneous representation $\mathcal{R}$.  Then, each
representation $R \in \mathcal{R}$ uses the same intervals for
vertices of $S$ and implies the same straight enumeration
$\oo_S(\mathcal{R}) = \oo_S(R) = \oo(\{I_v \in R : v \in V(S)\})$.

\begin{lemma}
\label{simul:lemma:fixSubgraphOrdering}
Let $\mathcal{G}$ be a connected sunflower proper interval graph with shared
graph $S$.  Across all simultaneous proper interval representations
$\mathcal{R'}$ of $\mathcal{G}$, the straight enumeration
$\oo_S(\mathcal{R})$ of $S$ is unique up to reversal.
\end{lemma}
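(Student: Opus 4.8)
The plan is to argue that the straight enumeration $\oo_S(\mathcal R)$ of the shared graph $S$ is forced by its simultaneous nature together with the uniqueness statement of Proposition~\ref{proper:theorem:uniqueEnumeration}(iii). First I would observe that since $\mathcal G$ is connected, the union graph $\bigcup_{\mathcal G}$ is connected, but individual $G_i$ and also $S$ need not be. The key reduction is therefore: reduce the connectivity of $\bigcup_{\mathcal G}$ to a statement about how the straight enumerations $\oo(R_i)$ of the pieces are ``linked'' through shared vertices. Concretely, I would take two arbitrary simultaneous representations $\mathcal R$ and $\mathcal R'$ and show $\oo_S(\mathcal R) \in \{\oo_S(\mathcal R'), \oo_S(\mathcal R')^r\}$.

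The main steps: (1) For each $i$, by Theorem~\ref{the:simulEnums} the tuples $(\oo(R_1),\dots,\oo(R_k))$ and $(\oo(R_1'),\dots,\oo(R_k'))$ are simultaneous enumerations. (2) Within a single connected component of some $G_i$, its straight enumeration is unique up to reversal (Proposition~\ref{proper:theorem:uniqueEnumeration}(iii)); hence $\oo(R_i)$ restricted to that component either agrees with $\oo(R_i')$ or is its reversal. (3) Build an auxiliary ``overlap'' graph $H$ whose nodes are the connected components of all the $G_i$, with an edge between two components if they share at least two vertices whose blocks are distinct (so that the relative order of those two shared vertices transfers a consistent ``orientation bit'' between the two components). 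Since $\bigcup_{\mathcal G}$ is connected and $S$ is an induced subgraph shared by all $G_i$, I would argue $H$ is connected, or at least that every component of $S$ and every shared vertex lies in a single connected piece of this overlap structure. (4) Propagate the orientation bit along $H$: fixing the orientation on one component fixes it on all neighbours, hence on all of $H$ by connectedness; this shows the pair $(\mathcal R,\mathcal R')$ induces a single global ``same/reversed'' choice, and in particular $\oo_S(\mathcal R)$ and $\oo_S(\mathcal R')$ agree up to a single global reversal.

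The hard part will be step (3)--(4): making precise how connectivity of $\bigcup_{\mathcal G}$ forces the overlap graph $H$ to be connected enough, and in particular handling shared vertices $u,v$ that happen to lie in the \emph{same} block of some $G_i$ (so their order is not pinned down there) but in different blocks of another $G_j$. The subtlety noted in the preliminaries---that $B(u,G')$ can differ from $B(u,G)$ for a subgraph---means the block structure of $S$, of each $G_i$, and of $\bigcup_{\mathcal G}$ can all disagree, so the orientation-propagation argument has to track which graph's blocks are being compared at each step. I expect one needs a careful case analysis or an inductive argument over a spanning tree of the overlap graph, always using the definition of simultaneous enumeration to transport the order of a shared pair from one $G_i$ into $S$ and then into $G_j$, closing the loop. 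Once the overlap graph is shown connected and the propagation is set up, the conclusion that $\oo_S(\mathcal R) = \oo_S(\mathcal R')$ or $\oo_S(\mathcal R) = \oo_S(\mathcal R')^r$ follows immediately.
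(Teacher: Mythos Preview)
Your overlap-graph approach has a real gap. Take $S=\{a,b,c\}$ with no edges, let $G_1$ consist of the path $a,x_1,b$ together with the isolated vertex $c$, and let $G_2$ consist of the path $b,x_2,c$ together with the isolated vertex $a$. The union is the path $a,x_1,b,x_2,c$, so $\mathcal G$ is connected, and the lemma does hold here: one checks that $\sigma_S(\mathcal R)$ must be $a<b<c$ or its reverse. But any two components coming from different $G_i$ share at most \emph{one} vertex, so your graph $H$ has no edges and there is nothing along which to propagate an orientation bit. More fundamentally, orientation bits of components do not determine $\sigma_S$ by themselves: the placement of the singleton component $\{c\}$ of $G_1$ before or after the path component is a separate degree of freedom, not an orientation, and it is precisely this kind of choice (coupled with the analogous one in $G_2$) that pins down $\sigma_S$. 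Your steps (3)--(4) would need to track inter-component order as well, at which point the auxiliary graph no longer does the work you want it to.

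The paper avoids both issues by arguing about \emph{consecutivity} rather than orientation. Fix one simultaneous representation $\mathcal R$ and take any two blocks $B_i,B_{i+1}$ of $S$ that are consecutive in $\sigma_S(\mathcal R)$. From connectedness of $\bigcup_{\mathcal G}$ one argues that some $G\in\mathcal G$ contains $B_i$ and $B_{i+1}$ in a single connected component $C$; since $S$ is induced in $G$, distinct blocks of $S$ sit in distinct blocks of $G$, and by Proposition~\ref{proper:theorem:uniqueEnumeration}(iii) the straight enumeration of $C$ is unique up to reversal. Hence in \emph{every} simultaneous representation no block of $S$ can land between $B_i$ and $B_{i+1}$. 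Doing this for every consecutive pair fixes the entire block sequence of $S$ up to one global reversal. In the example above this picks out $G_1$ for the pair $\{a\},\{b\}$ and $G_2$ for the pair $\{b\},\{c\}$, with no need for the two path components to share more than the single vertex $b$.
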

\begin{proof}
  Let $\mathcal{R}$ be a simultaneous representation of $\mathcal{G}$
  and $\oo_S(\mathcal{R})$ the straight enumeration of $S$ induced by
  $\mathcal{R}$.  Since $\mathcal{G}$ is connected, for any two blocks $B_i$ and $B_{i+1}$ of $S$
  consecutive in $\oo_S(\mathcal{R})$, there exists a graph
  $G \in \mathcal{G}$ such that $B_i$ and $B_{i+1}$ are in the same
  connected component of $G$.  Since $S$ is an induced subgraph of
  $G$, for any two vertices $u,v \in V(S)$ with $B(u,S) \neq B(v,S)$ we have $B(u,G) \neq B(v,G)$.  This means that a straight enumeration of
  $G$ implies a straight enumeration of $S$.  Additionally, the
  straight enumeration of each connected component of $G$ is unique up
  to reversal by Proposition~\ref{proper:theorem:uniqueEnumeration}.
  As a result, for any proper interval representation $R$ of $G$, the
  blocks $B_i$ and $B_{i+1}$ are consecutive in $\oo_S(R)$.  This
  holds for any two consecutive blocks in \oo, which means that the
  consecutivity of all blocks of $S$ is fixed for all simultaneous
  representations of $\mathcal{G}$.  As a consequence
  $\oo_S(\mathcal{R})$ is fixed up to complete reversal.
\end{proof}
Let $G$ be a proper interval graph consisting of the connected
components $C_1, \dots, C_k$ with straight enumerations
$\oo_1, \dots, \oo_k$.  Let \cOrd{\conc}{\oo_1}{\cdots}{\oo_k}\eOrd be
a straight enumeration of $G$.  Then we say the straight enumeration
$\oo' = $
\cOrd{\conc}{\oo_1}{\cdots}{\oo_{i-1}}{\oo_i^r}{\oo_{i+1}}{\cdots}{\oo_{k}}\eOrd
is \emph{obtained from \oo by reversal of $C_i$}.  For a sunflower
graph $\mathcal{G}$ containing $G$ with shared graph $S = (V_S, E_S)$,
we call a component $C = (V_C, E_C)$ of $G$ \emph{loose}, if all
vertices $V_S \cap V_C$ are in the same block of $S$.  Reversal of
loose components is the only ``degree of freedom'' among simultaneous
enumerations, besides full reversal, and is formally shown in the appendix.

\begin{figure}[tb]
	\centering
	\includegraphics{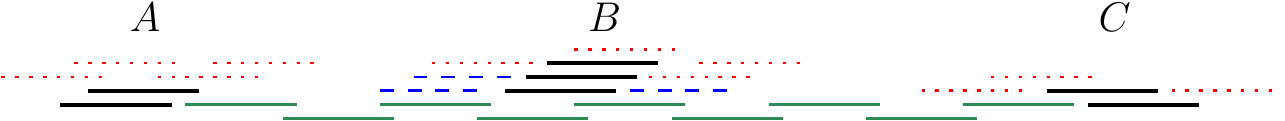}
  \caption{Simultaneous proper interval representation of $G_1$ (green solid), $G_2$ (red dotted), $G_3$ (blue dashed) with shared graph $S$ (black bold). $S$ has three blocks $A$, $B$, $C$.  
  We denote the component of $G_i$ containing a block $D$ by $C_D^i$.
  $C_A^2$, $C_B^2$, $C_B^3$, $C_C^2$ are loose.  $C_A^2$ is independent. $(C_B^2,C_B^3)$ is a reversible part. $(C_C^2)$ is not a reversible part, since $C_C^1$ is aligned at $C$ and not loose.}
	\label{fig:looseDefs}
\end{figure}

To obtain a complete characterization, we now introduce additional
terms to specify which reversals result in simultaneous enumerations (see Figure~\ref{fig:looseDefs}).
Let $\mathcal G = (G_1,\dots,G_k)$ be a connected sunflower proper
interval graph with shared graph $S$.  We say a component $C$ of a
graph in $\mathcal{G}$ \emph{aligns} two vertices $u,v \in S$ if they
are in different blocks of $C$, i.e., $B(u, C)\neq B(v, C)$.  If in addition $u$
and $v$ are in the same block $B$ of $S$, we say $C$ \emph{is oriented
  at} $B$.  If there is another component $C'$ among graphs in
$\mathcal{G}$ oriented at $B$, the orientation of their straight
enumerations in a simultaneous enumeration of $\mathcal{G}$ are
dependent; that is, they cannot be reversed independently.  This is shown formally
in the appendix.
 
For each block $B$ of $S$, let $\mathcal{C}(B)$ be the connected components among graphs in $\mathcal{G}$ oriented at $B$.
Since a component may contain $B$ without aligning vertices, we have $0 \leq |\mathcal{C}(B)| \leq k$.
If $\mathcal{C}(B)$ contains only loose components, we call it a \emph{reversible part}. Note that a reversible part $\mathcal C(B)$ contains at most one component of each graph $G_i$.
Additionally, we call a loose component $C$ \emph{independent}, if it does not align any two vertices of $S$. 
Let $(\sigma_1,\dots,\sigma_k)$ and $(\sigma'_1,\dots,\sigma'_k)$ be tuples of straight enumerations of $G_1,\dots,G_k$. 
We say $(\sigma'_1,\dots,\sigma'_k)$ is \emph{obtained from $(\sigma_1,\dots,\sigma_k)$ through reversal} of reversible part $\mathcal C(B)$, if $\sigma'_1,\dots,\sigma'_k$ are obtained by reversal of all components in $\mathcal C(B)$.

\begin{restatable}[$\star$]{theorem}{simulEnumOperationsTheorem}
  \label{the:simEnumChar}
  Let $\mathcal G = (G_1,\dots,G_k)$ be a connected sunflower graph with shared graph $S$ and simultaneous enumeration $\rho = (\oo_1, \dots, \oo_k)$.
  Then $\rho' = (\oo_1', \dots, \oo_k')$ is a simultaneous enumeration of $\mathcal{G}$ if and only if $\rho'$ can be obtained from $\rho$ or $\rho^r$ through reversal of independent components and reversible parts.
\end{restatable}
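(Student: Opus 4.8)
The plan is to prove both directions by carefully tracking how the block order of $S$ inside each $\oo_i$ can change. By Lemma~\ref{simul:lemma:fixSubgraphOrdering}, the induced straight enumeration $\oo_S(\mathcal R)$ of the shared graph is unique up to reversal; so after possibly replacing $\rho'$ by $(\rho')^r$ we may assume $\rho$ and $\rho'$ induce the \emph{same} order on the blocks of $S$. Fix this common block order of $S$ as a reference. For the ``if'' direction I would first argue that reversal of an independent component $C$ — one that aligns no two vertices of $S$ — keeps the tuple a simultaneous enumeration: reversing $C$ does not change the relative order of any $V_S$-blocks in the containing $\oo_i$ (since $C$ contributes a contiguous reversed block-run all of whose $S$-vertices lie in a single block of $S$), hence the defining implication of a simultaneous enumeration is untouched. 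For a reversible part $\mathcal C(B)$, all its components are loose, so each lies on a set of blocks of $S$ that collapses to the single block $B$; reversing all of them simultaneously again leaves the $S$-block order in every $\oo_i$ unchanged, and reverses a maximal ``$B$-local'' stretch consistently across all graphs, so the result is still a valid straight enumeration of each $G_i$ and still a simultaneous enumeration. Iterating over a set of independent components and reversible parts, these reversals commute (they act on disjoint component sets, or on the same component in a way that is an involution), so any sequence of them yields a simultaneous enumeration.

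For the ``only if'' direction, the key step is a \emph{local-vs-global} argument on each graph $G_i$. Given the simultaneous enumeration $\rho'$ with the same $S$-block order as $\rho$, consider a single $G_i$. The straight enumeration of each connected component of $G_i$ is unique up to reversal (Proposition~\ref{proper:theorem:uniqueEnumeration}), so $\oo_i'$ is obtained from $\oo_i$ by independently reversing some subset of the components of $G_i$ and then possibly permuting the component-blocks along the line. First I would show the global order of components cannot change: any component $C$ of $G_i$ that aligns two vertices $u,v$ of $S$ pins down where $C$ sits relative to the fixed $S$-block order, and the union-connectedness together with Lemma~\ref{simul:lemma:fixSubgraphOrdering} forces the non-aligning (``$S$-collapsing'') components into their slots as well; hence only reversals of components are available, not rearrangements. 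Next, a component $C$ of $G_i$ that is \emph{not} loose (its $S$-vertices span at least two blocks of $S$) cannot be reversed at all without changing the $S$-block order of $\oo_i$, which we have fixed — so such components are frozen. Thus the only components whose orientation may differ between $\rho$ and $\rho'$ are loose ones, and among those the independent ones may flip freely, while a non-independent loose component $C$ oriented at a block $B$ flips the local order of the vertices of $V_S\cap B$ inside $\oo_i$; the simultaneous-enumeration condition then forces every component (in any $G_j$) oriented at $B$ to flip in lockstep with $C$. Hence the set of flipped loose components is a union of whole reversible parts $\mathcal C(B)$ (plus, possibly, independent components), which is exactly the claimed description.

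Wrapping up, I would phrase the argument as: define the \emph{difference set} $D$ of components whose orientation differs between $\rho$ (after the global-reversal normalization) and $\rho'$; show $D$ consists only of loose components; partition $D$ into the components oriented at no block (independent) and, for each block $B$ of $S$, the components of $D$ oriented at $B$; prove that this latter set is either empty or all of $\mathcal C(B)$ (using the dependency of co-oriented components, which the excerpt states is proven in the appendix, and the fact that a reversible part is precisely a $\mathcal C(B)$ consisting only of loose components); and conclude that $\rho'$ is obtained from $\rho$ (or $\rho^r$) by reversing this collection of independent components and reversible parts.

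I expect the main obstacle to be the careful bookkeeping in the ``only if'' direction: ruling out \emph{rearrangements} of component-blocks (as opposed to mere reversals) within a $G_i$, and precisely establishing that flipping one non-independent loose component at $B$ mandates flipping \emph{all} components oriented at $B$ across all graphs — including the subtle point that a component $C_D^i$ of $G_i$ that contains $B$ but does \emph{not} align vertices there imposes no constraint and may sit on either side, so it is genuinely the \emph{oriented} components that must move together. Handling the interaction when a single component is oriented at several blocks of $S$ (so that several reversible parts are in fact forced to be the same set) is the delicate case, and I would isolate it as a short lemma before assembling the final classification.
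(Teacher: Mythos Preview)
Your proposal is correct and follows essentially the same route as the paper: the paper first establishes (Lemma~\ref{lem:SPIR_seOnly1blockRev}) that any simultaneous enumeration $\rho'$ differs from $\rho$ or $\rho^r$ only by reversals of loose components (using that every component of every $G_i$ meets $S$, so the fixed $S$-block order pins down the component order and rules out rearrangements), and then uses the dependency lemma (Lemma~\ref{lem:alignedMeansDependent}) to show that two components oriented at the same block $B$ must flip together, forcing the set of reversed components to decompose into independent components and full reversible parts. One simplification worth noting: the ``delicate case'' you anticipate---a single component oriented at several blocks of $S$---does not arise among the components that can actually be reversed, since a \emph{loose} component has all its $S$-vertices in a single block of $S$ and can therefore be oriented at that block only; a component oriented at two distinct blocks is automatically non-loose, hence frozen, and its presence in some $\mathcal C(B)$ simply certifies that $\mathcal C(B)$ is not a reversible part.
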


\section{Sunflower Unit Interval Graphs}
\label{sec:unit}
In the previous section we characterized all simultaneous enumerations for a sunflower proper interval graph $\mathcal G$.
We say a simultaneous proper/unit interval representation of a sunflower graph $\mathcal G$ \emph{realizes} a simultaneous enumeration $\zeta=(\zeta_1,\dots,\zeta_k)$ of $\zeta$, if for $i\in\{1,\dots,k\}$ the representation of $G_i$ corresponds to the straight enumeration $\zeta_i$.
In Section~\ref{sec:suig_enums} we provide a criterion which determines for a given simultaneous enumeration $\zeta$ of $\mathcal G$ whether there is a simultaneous unit interval representation of $\mathcal G$ that realizes $\zeta$. Namely, the criterion is the avoidance of a certain configuration in a partial vertex order of $\bigcup_\mathcal G$ \emph{induced by} $\zeta$. 
In Section~\ref{sec:recognition} we combine these findings to efficiently recognize simultaneous unit interval graphs.

\subsection{Simultaneous Enumerations of Sunflower Unit Interval Graphs}
\label{sec:suig_enums}
We first obtain a combinatorial characterization by reformulating the problem of finding a representation as a restricted graph sandwich problem~\cite{golumbic1995graph}.

\begin{restatable}[$\star$]{lemma}{Sandwich}
  \label{lem:Hsandwitch} 
  A sunflower graph $\mathcal G$ has a simultaneous unit interval representation that realizes a simultaneous enumeration $\zeta=(\zeta_1,\dots,\zeta_k)$ if and only if there is some graph $H$ with $V(H)=V(\mathcal G)$ that contains the graphs $G_1,\dots,G_k$ as induced subgraphs and has a fine enumeration $\sigma$ such that for $i\in\{1,\dots,k\}$ straight enumeration $\zeta_i$ is compatible with $\sigma$ on $V_i$.
\end{restatable}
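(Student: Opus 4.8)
The plan is to prove both directions by translating between a simultaneous unit interval representation and a sandwich graph $H$, using the unit-to-proper machinery (Proposition~\ref{the:unitproperEqu}) together with the fine-enumeration characterization (Proposition~\ref{proper:fine-enumeration}). First I would establish the forward direction: suppose $\mathcal G$ has a simultaneous unit interval representation $\mathcal R=(R_1,\dots,R_k)$ realizing $\zeta$. Since all intervals in all $R_i$ have unit length, take the intersection graph $H$ of the multiset of intervals $\{I_v : v\in V(\mathcal G)\}$ (shared vertices contribute a single interval). Then $H$ is a unit interval graph on $V(\mathcal G)$ by construction, each $G_i$ is an \emph{induced} subgraph of $H$ because $R_i$ is a valid representation of $G_i$ and the vertices of $G_i$ keep exactly their intervals, and by Proposition~\ref{proper:fine-enumeration} $H$ has a fine enumeration $\sigma$ — indeed the order of left endpoints gives one. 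The compatibility of $\zeta_i$ with $\sigma$ on $V_i=V(G_i)$ is immediate: $\sigma$ restricted to $V_i$ refines the left-endpoint order of $R_i$, which by definition of $\oo(R_i)$ groups into the blocks of $G_i$ in the order $\zeta_i$; I should be slightly careful here about ties among left endpoints, but since $H$ only constrains consecutivity of closed neighborhoods, any tie-breaking consistent with block order of each $G_i$ works, and such a common tie-breaking exists precisely because $\zeta$ is a simultaneous enumeration.

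For the reverse direction, suppose $H$ exists with the stated properties. By Proposition~\ref{the:unitproperEqu}, $H$ being a proper interval graph (it has a fine enumeration, hence is proper by Proposition~\ref{proper:fine-enumeration}(i)) means $H$ is a unit interval graph; fix a unit interval representation $R_H$ of $H$ whose left-endpoint order is consistent with the given fine enumeration $\sigma$ (such a representation exists by the standard construction from a fine/straight enumeration). Now define $R_i$ to be the restriction of $R_H$ to $V(G_i)$. Since $G_i$ is an \emph{induced} subgraph of $H$, $R_i$ is a valid unit interval representation of $G_i$; shared vertices get identical intervals across all $R_i$ because they all inherit the same interval from $R_H$. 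Hence $\mathcal R=(R_1,\dots,R_k)$ is a simultaneous unit interval representation. It remains to check it realizes $\zeta$, i.e.\ that $\oo(R_i)=\zeta_i$. The left-endpoint order of $R_i$ is the restriction of that of $R_H$, which refines $\sigma$, which by hypothesis is compatible with $\zeta_i$ on $V_i$; compatibility says this refinement orders the blocks of $G_i$ consistently with $\zeta_i$, and since $\oo(R_i)$ is exactly the induced block order, $\oo(R_i)=\zeta_i$.

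The main obstacle I anticipate is the careful handling of compatibility versus equality of block orders — the definition of ``compatible'' only gives an implication $u<_\co v \Rightarrow B(u)\le_\oo B(v)$, so one must argue that having compatibility of $\zeta_i$ with $\sigma$ (a \emph{total} order of $V(G_i)$) actually pins down the full straight enumeration $\zeta_i$, not merely a coarsening of it. This works because $\sigma$, being a fine enumeration of $H\supseteq G_i$ with $G_i$ induced, visits every vertex of $G_i$, so the induced block order is total on the blocks of $G_i$; compatibility then forces it to agree with $\zeta_i$ everywhere. A second, more cosmetic point is the tie-breaking of endpoints in the forward direction, which I would dispatch by noting that we may always perturb a unit interval representation so that all $2n$ endpoints are distinct without changing the intersection graph or any block order, and then $\sigma$ is simply the left-endpoint order. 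With these two points handled, the remaining steps are the routine verifications sketched above.
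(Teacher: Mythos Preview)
Your proposal is correct and follows essentially the same approach as the paper: in the forward direction you take $H$ to be the intersection graph of all unit intervals in the simultaneous representation, and in the reverse direction you take a unit representation of $H$ realizing $\sigma$ and restrict it to each $V_i$. The paper's own proof is a terse four-line sketch of exactly this argument; your version simply fills in the details (the tie-breaking of left endpoints and the compatibility-versus-equality issue) that the paper leaves implicit.
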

Our approach is to obtain more information on what graph $H$ and the fine enumeration $\sigma$ must look like. We adapt a characterization of Looges and Olariu~\cite{looges1993optimal} to obtain four implications that can be used given only partial information on $H$ and $\sigma$ (as given by Lemma~\ref{lem:Hsandwitch}); see Figure~\ref{fig:ExtensionSteps}.   
For the figures in this section we use arrows to represent a partial order between two vertices. We draw them solid green if they are adjacent, red dotted if they are non-adjacent in some graph $G_i$, and black dashed if they may or may not be adjacent.

\begin{theorem}[Looges and Olariu~\cite{looges1993optimal}]
  \label{the:3vertexCondition}
  A vertex order of a graph $H=(V,E)$ is a fine enumeration if and only if for $v,u,w \in V$ with $v<_\sigma u<_\sigma w$ and $vw\in E$ we have $vu, uw\in E$.
\end{theorem}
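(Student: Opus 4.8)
The plan is to prove Theorem~\ref{the:3vertexCondition} directly from the definition of a fine enumeration, which requires that for every vertex $u$ the closed neighborhood $N_H[u]$ is consecutive in $\sigma$. The two directions are both short, and neither involves any real computation; the ``hard'' part is merely keeping track of the correct vertex lying in the consecutive block.

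For the forward direction, suppose $\sigma$ is a fine enumeration and let $v <_\sigma u <_\sigma w$ with $vw \in E$. Then $w \in N_H[v]$, and trivially $v \in N_H[v]$. Since $N_H[v]$ is consecutive in $\sigma$ and $u$ lies strictly between $v$ and $w$ in $\sigma$, we conclude $u \in N_H[v]$, hence $vu \in E$ (note $u \ne v$). Symmetrically, $v \in N_H[w]$ and $w \in N_H[w]$, and $u$ lies between them in $\sigma$, so $u \in N_H[w]$, giving $uw \in E$. This establishes the three-vertex condition.

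For the converse, assume the three-vertex condition holds for $\sigma$; I must show every $N_H[u]$ is consecutive in $\sigma$. Fix $u$ and take any two vertices $x, y \in N_H[u]$ with $x <_\sigma y$, together with an arbitrary $z$ satisfying $x <_\sigma z <_\sigma y$; it suffices to show $z \in N_H[u]$. Consider the position of $u$ relative to $x, y, z$ in $\sigma$. If $u <_\sigma x$: then $u <_\sigma x <_\sigma y$ with $uy \in E$, so by the condition $ux \in E$ and (applying it to $u <_\sigma z <_\sigma y$, which needs $uy \in E$, already known) we also want $z$ adjacent to $u$ — indeed $u <_\sigma z <_\sigma y$ with $uy \in E$ gives $uz \in E$, so $z \in N_H[u]$. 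If $y <_\sigma u$: symmetrically $x <_\sigma z <_\sigma u$ with $xu \in E$ yields $zu \in E$. The remaining case is $x \le_\sigma u \le_\sigma y$. If $z \le_\sigma u$, then $x \le_\sigma z \le_\sigma u$; since $x \ne y$ and both are in $N_H[u]$, if $x = u$ then $z = u \in N_H[u]$ trivially, otherwise $x <_\sigma u$ with $xu\in E$ and $x \le_\sigma z \le_\sigma u$ forces $zu \in E$ (the boundary cases $z=x$ or $z=u$ are immediate). If instead $u \le_\sigma z$, then $u \le_\sigma z \le_\sigma y$ and the analogous argument using $uy \in E$ gives $uz \in E$. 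In every case $z \in N_H[u]$, so $N_H[u]$ is consecutive, and $\sigma$ is a fine enumeration.

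The only subtlety worth flagging is the bookkeeping around equalities: when $u$ coincides with one of $x,y$ or when $z$ coincides with an endpoint, the three-vertex condition as stated applies only to strict inequalities and to a genuine edge $vw$, so these degenerate cases must be dispatched separately (they are all immediate since a vertex is always in its own closed neighborhood). Apart from that, the proof is a routine case analysis on the cyclic position of $u$ among $x$, $z$, $y$.
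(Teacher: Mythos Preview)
Your proof is correct. Note, however, that the paper does not actually supply its own proof of this statement: Theorem~\ref{the:3vertexCondition} is quoted as a known characterization due to Looges and Olariu and is used as a black box. So there is no ``paper's proof'' to compare against; you have simply filled in a standard argument directly from the definition of a fine enumeration (consecutiveness of closed neighborhoods), which is exactly how one would verify it. The minor bookkeeping around equalities in Case~3 could be tightened (e.g., if $x=u$ then $z\le_\sigma u$ is impossible because $x<_\sigma z$), but these are cosmetic issues and the logic goes through.
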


\begin{restatable}[$\star$]{corollary}{fourVertexCondition}
  \label{cor:4vertexCondition} 
  A vertex order of a graph $H=(V,E)$ is a fine enumeration if and only if there are no four vertices $v,u,x,w \in V$ with $v\le_\sigma u\le_\sigma x\le_\sigma w$ and $vw\in E$ and $ux\not\in E$.
\end{restatable}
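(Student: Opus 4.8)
\textbf{Proof proposal for Corollary~\ref{cor:4vertexCondition}.}

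The plan is to derive the four-vertex condition from the three-vertex condition of Theorem~\ref{the:3vertexCondition} by two implications. First I would prove the forward direction: assume $\sigma$ is a fine enumeration and suppose for contradiction that there are vertices $v \le_\sigma u \le_\sigma x \le_\sigma w$ with $vw \in E$ and $ux \notin E$. If $u = v$ or $x = w$ (or $u = x$), the pair $\{u,x\}$ is forced to coincide with an edge or a trivial pair, contradicting $ux \notin E$; so we may assume $v <_\sigma u <_\sigma x <_\sigma w$ are four distinct vertices. Applying Theorem~\ref{the:3vertexCondition} to $v <_\sigma u <_\sigma w$ with $vw \in E$ gives $uw \in E$. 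Now applying Theorem~\ref{the:3vertexCondition} to $u <_\sigma x <_\sigma w$ with $uw \in E$ gives $ux \in E$, a contradiction.

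For the reverse direction I would show that the absence of such a ``bad quadruple'' forces the three-vertex condition, which by Theorem~\ref{the:3vertexCondition} makes $\sigma$ a fine enumeration. Take $v <_\sigma u <_\sigma w$ with $vw \in E$; we must show $vu \in E$ and $uw \in E$. For $uw$, instantiate the four-vertex condition with the chain $v \le_\sigma u \le_\sigma u \le_\sigma w$ (i.e. $x := u$): since $vw \in E$, the hypothesis forbids $uu \notin E$ — but $uu$ is always a (trivial) non-edge unless we adopt the convention that the condition only constrains genuine pairs. To avoid this degeneracy, I would instead use the chain $v \le_\sigma u \le_\sigma w \le_\sigma w$ with the roles $(v,u,x,w) := (v,u,w,w)$: then $ux = uw$, and $vw \in E$ together with the forbidden-configuration hypothesis yields $uw \in E$. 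Symmetrically, the chain $(v,u,x,w) := (v,v,u,w)$ has $ux = vu$ and forces $vu \in E$. Thus both required edges are present, the three-vertex condition holds, and $\sigma$ is a fine enumeration.

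The main obstacle is purely a matter of bookkeeping around the non-strict inequalities: the statement uses $\le_\sigma$ throughout, so one must be careful that instantiations collapsing two of the four vertices are legitimate and that the clause ``$ux \notin E$'' is read as ``$u \ne x$ and $ux \notin E$'' (or, equivalently, that $ux \in E$ is interpreted reflexively). Once that convention is fixed, both directions reduce to two applications of Theorem~\ref{the:3vertexCondition} with no further work. I would state the convention explicitly at the start of the proof so that the collapsed instantiations $(v,v,u,w)$ and $(v,u,w,w)$ are unambiguous.
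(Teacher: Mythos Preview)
Your proof is correct and follows essentially the same strategy as the paper: both directions are reduced to Theorem~\ref{the:3vertexCondition}, with the reverse direction handled by collapsing two of the four vertices (the paper explicitly notes that the 3-vertex condition is the special case $v=u$ or $x=w$, which is exactly your instantiations $(v,v,u,w)$ and $(v,u,w,w)$). The only cosmetic difference is in the forward direction: you chain two applications of the 3-vertex condition through the intermediate edge $uw$, whereas the paper instead case-splits on whether $vx\in E$ and finds a violating triple directly; both arguments are two lines long and interchangeable.
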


\newcommand{\mydigit}{0.16}
  
\begin{figure}[tb]	
\captionsetup[subfigure]{justification=centering}
\begin{subfigure}[t]{0.05\textwidth}
\centering
	\includegraphics{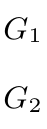}
\end{subfigure}
\begin{subfigure}[t]{\mydigit\textwidth}
\centering
	\includegraphics{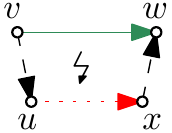}
  \caption{}
  \label{fig:forbiddenConfig}
\end{subfigure}
~
\begin{subfigure}[t]{\mydigit\textwidth}
\centering
	\includegraphics{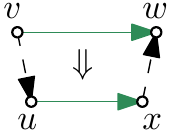}
  \caption{}
  \label{fig:HTExtensionStep_left}
\end{subfigure}
~
\begin{subfigure}[t]{\mydigit\textwidth}
\centering
	\includegraphics{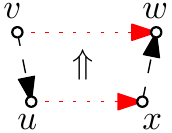}
  \caption{}
  \label{fig:HTExtensionStep_right}
\end{subfigure}
~
\begin{subfigure}[t]{\mydigit\textwidth}
\centering
	\includegraphics{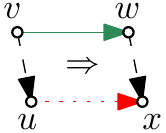}
  \caption{}
  \label{fig:EnumExtensionStep_left}
\end{subfigure}
~
\begin{subfigure}[t]{\mydigit\textwidth}
\centering
	\includegraphics{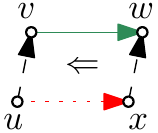}
  \caption{}
  \label{fig:EnumExtensionStep_right}
\end{subfigure}
\caption[]{(a): The forbidden configuration of Corollary~\ref{cor:4vertexCondition}. (b)-(e): The four implications of Corollary~\ref{cor:4implications}.
}\label{fig:ExtensionSteps}
\end{figure}

\begin{corollary}
  \label{cor:4implications}
  Let $H=(V,E)$ be a graph with fine enumeration $\sigma$. Let $v,u,x,w\in V$ and $u \leq_\sigma x$ as well as $v \leq_\sigma w$. Then we have (see Figure~\ref{fig:ExtensionSteps}):
  \begin{compactenum}[(i)]
    \item \label{itm:ext1} $vw\in E \wedge  v\le_\sigma u \wedge x\le_\sigma w \Rightarrow ux\in E$
    \item \label{itm:ext2} $ux\not\in E \wedge v\le_\sigma u \wedge x\le_\sigma w \Rightarrow vw\not\in E$
    \item \label{itm:ext3} $vw\in E \wedge ux\not\in E \wedge v\le_\sigma u \Rightarrow w<_\sigma x$
    \item \label{itm:ext4} $vw\in E \wedge ux\not\in E \wedge x\le_\sigma w \Rightarrow u<_\sigma v$.
  \end{compactenum}
\end{corollary}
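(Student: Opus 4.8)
The plan is to derive each of the four implications directly from the four-vertex forbidden configuration of Corollary~\ref{cor:4vertexCondition}, which states that a fine enumeration $\sigma$ contains no four vertices $a\le_\sigma b\le_\sigma c\le_\sigma d$ with $ad\in E$ and $bc\notin E$. In every case I start from the hypotheses, assume the conclusion fails, and exhibit exactly such a forbidden quadruple among (some permutation of) $v,u,x,w$, contradicting the assumption that $\sigma$ is a fine enumeration.

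For \ref{itm:ext1}: suppose $vw\in E$, $v\le_\sigma u$, $x\le_\sigma w$, but $ux\notin E$. Together with $u\le_\sigma x$ this gives the chain $v\le_\sigma u\le_\sigma x\le_\sigma w$ with $vw\in E$ and $ux\notin E$, which is precisely the forbidden configuration — contradiction, so $ux\in E$. Implication \ref{itm:ext2} is the contrapositive packaging of the same statement: if $ux\notin E$, $v\le_\sigma u$, $x\le_\sigma w$, yet $vw\in E$, we again obtain the chain $v\le_\sigma u\le_\sigma x\le_\sigma w$ with $vw\in E$, $ux\notin E$, a contradiction; hence $vw\notin E$. (Indeed \ref{itm:ext1} and \ref{itm:ext2} are logically equivalent given the ambient hypotheses, so one proof serves both.)

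For \ref{itm:ext3}: we have $vw\in E$, $ux\notin E$, $v\le_\sigma u$, and we must show $w<_\sigma x$. Suppose instead $x\le_\sigma w$. Combined with $v\le_\sigma u\le_\sigma x$ we obtain $v\le_\sigma u\le_\sigma x\le_\sigma w$ with $vw\in E$ and $ux\notin E$, the forbidden configuration; contradiction. Hence $w<_\sigma x$. Implication \ref{itm:ext4} is symmetric: from $vw\in E$, $ux\notin E$, $x\le_\sigma w$, suppose $v\le_\sigma u$ (the negation of $u<_\sigma v$); then $v\le_\sigma u\le_\sigma x\le_\sigma w$ again yields the forbidden configuration, so $u<_\sigma v$.

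I expect no real obstacle here — the corollary is essentially a bookkeeping reorganization of Corollary~\ref{cor:4vertexCondition}. The only mild subtlety worth a sentence is handling the non-strict inequalities: when some of $v,u,x,w$ coincide, the ``chain'' may have repeated vertices, but Corollary~\ref{cor:4vertexCondition} is stated with $\le_\sigma$ precisely to tolerate this, and a forbidden configuration with repeated endpoints ($v=u$ or $x=w$, say) still forces $vw\in E$ and $ux\notin E$ on the same or adjacent pair, which is impossible since adjacency of a vertex pair is well-defined. One should also note that in \ref{itm:ext3}–\ref{itm:ext4} the conclusion is a \emph{strict} inequality: this is automatic because $ux\notin E$ forbids $u=x$ and more to the point, if $w=x$ then $ux=uw$, and the chain $v\le_\sigma u\le_\sigma x=w$ with $vw\in E$ would need $uw\in E$ by Theorem~\ref{the:3vertexCondition}, contradicting $ux\notin E$; so $w\ne x$ and strictness follows. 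I would state the proof once for \ref{itm:ext1}/\ref{itm:ext2} and once for \ref{itm:ext3}/\ref{itm:ext4}, each in two or three lines.
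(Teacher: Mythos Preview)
Your proposal is correct and matches the paper's approach: the corollary is stated immediately after Corollary~\ref{cor:4vertexCondition} precisely because each item is obtained by assuming the negation of the conclusion and exhibiting the forbidden four-vertex configuration $v\le_\sigma u\le_\sigma x\le_\sigma w$ with $vw\in E$, $ux\notin E$. The paper does not spell out a separate proof for this corollary, but the appendix proofs of the restated versions (Lemmas~\ref{lem:refHTExtensionStep} and~\ref{lem:refEnumExtensionStep}) proceed exactly as you do; your extra paragraph about strictness in \ref{itm:ext3}--\ref{itm:ext4} is unnecessary since $\sigma$ is a linear order and the negation of $w<_\sigma x$ is already $x\le_\sigma w$, which you handle.
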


Now we introduce the forbidden configurations for \emph{simultaneous} enumerations of sunflower unit interval graphs.
Throughout this section let $\mathcal G = (G_1,\dots,G_k)$ be a sunflower graph with shared graph $S$ and simultaneous enumeration $\zeta=(\zeta_1,\dots,\zeta_k)$. 
Furthermore, let $V_i = V(G_i)$ and $E_i = E(G_i)$, for $i\in\{1,\dots,k\}$. Finally, let $V=V_1\cup\dots\cup V_k$.   
For a straight enumeration $\eta$ of some graph $H$ we say for $u,v\in V(H)$
that $u<_\eta v$, if $u$ is in a block before $v$, and we say $u\le_\eta v$, if
$u=v$ or $u<_\eta v$. We call $\le_\eta$ the \emph{partial order on $V(H)$
corresponding to $\eta$}. Note that for distinct $u,v$ in the same block we have neither $u>_\eta v$ nor $u\le_\eta v$. We write $u\le_i v$ and $u<_i v$ instead of $u\le_{\zeta_i} v$ and $u<_{\zeta_i} v$, respectively.

\begin{figure}[tb]
	\centering
	\includegraphics{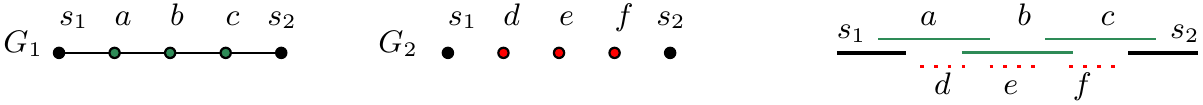}
	\caption{A sunflower graph $\mathcal{G}=(G_1,G_2)$ with shared vertices $s_1,s_2$ (black, bold). Let $\zeta$ be the simultaneous enumeration realized by the given simultaneous proper interval representation. In $(G_1,\zeta_1)$ we have the $(s_1,s_2)$-chain $C=(s_1,a,b,c,s_2)$ of size $5$ (green, solid). In $(G_2,\zeta_2)$ we have the $(s_1,s_2)$-bar $B=(s_1,d,e,f,s_2)$ of size $5$ (red, dotted). Hence, sunflower graph $\mathcal G$ has the conflict $(C,B)$ for the simultaneous enumeration $\zeta$.}
	\label{fig:conflict}
\end{figure}

Let $u,v\in V(S)$ with $u\ne v$.
 For $i\in\{1,\dots,k\}$ a \emph{$(u,v)$-chain} of \emph{size} $m\in\mathds N$ in $(G_i,\zeta_i)$ is a sequence $(u=c_1,\dots,c_m=v)$ of vertices in $V_i$ with $c_1<_i\dots<_i c_m$ that corresponds to a path in $G_i$.
A \emph{$(u,v)$-bar} between $u$ and $v$ of \emph{size} $m\in\mathds N$ in $(G_i,\zeta_i)$ is a sequence $(u=b_1,\dots,b_m=v)$ of vertices in $V_i$ with $b_1<_i\dots<_i b_m$ that corresponds to an independent set in $G_i$.
An example is shown in Figure~\ref{fig:conflict}.
If there is a $(u,v)$-chain $C$ in $G_i$ of size $l\ge 2$ and a $(u,v)$-bar $B$ in $(G_j,\zeta_j)$ of size at least $l$, then we say that $(C,B)$ is a \emph{$(u,v)$-(chain-bar-)conflict} and that $\mathcal G$ \emph{has conflict $(C,B)$ for $\zeta$}.
Note that one can reduce the size of a larger $(u,v)$-bar by removing intervals between $u$ and $v$. Thus, we can always assume that in a conflict, we have a bar and a chain of the same size $l\ge 2$.
Assume $\mathcal G$ has a simultaneous unit interval representation realizing $\zeta$. 
If a graph $G\in\mathcal G$ has a $(u,v)$-chain of size $l\ge 2$, then the
distance between the intervals $I_u,I_v$ for $u,v$ is smaller than $l-2$. On the
other hand, if a graph $G\in\mathcal G$ has a $(u,v)$-bar of size $l$, then the distance between $I_u,I_v$ is greater than $l-2$. Hence, sunflower graph $\mathcal G$ has no conflict.
The result of this section is that the absence of conflicts is not only necessary, but also sufficient.

\begin{restatable}{theorem}{SUIRchar}
  \label{the:SUIRchar}
  Let $\mathcal G$ be a sunflower proper interval graph with simultaneous enumeration $\zeta$.
  Then $\mathcal G$ has a simultaneous unit interval representation that realizes $\zeta$ if and only if $\mathcal G$ has no conflict for $\zeta$.
\end{restatable}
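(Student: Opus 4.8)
The forward direction is exactly the argument recorded before the statement: a $(u,v)$-chain of size $l$ in some $G\in\mathcal G$ forces the left endpoints of $I_u$ and $I_v$ to be at distance less than $l-1$, whereas a $(u,v)$-bar of size $l$ forces them to be at distance more than $l-1$; hence a conflict rules out any realizing representation. The plan is to prove the converse: assuming $\mathcal G$ has no conflict for $\zeta$, we build a simultaneous unit interval representation realizing $\zeta$.

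I would work through Lemma~\ref{lem:Hsandwitch}, so that it suffices to produce a graph $H$ on $V=V_1\cup\dots\cup V_k$ containing each $G_i$ as an induced subgraph and admitting a fine enumeration $\sigma$ compatible with every $\zeta_i$ on $V_i$. Rather than guessing $H$ and $\sigma$ directly, I would search for a position function $\ell\colon V\to\mathbb R$: set $uv\in E(H)$ iff $|\ell(u)-\ell(v)|<1$ and let $\sigma$ order $V$ by $\ell$. Then $(H,\sigma)$ is automatically realized by the unit intervals $[\ell(v),\ell(v)+1]$, so $\sigma$ is a fine enumeration of $H$, and the remaining requirements become a system $\mathcal D$ of constraints on $\ell$, grouped by graph: for $uv\in E_i$ with $u,v$ in the same block of $G_i$, $|\ell(u)-\ell(v)|<1$; for $uv\in E_i$ with $u<_i v$, $0<\ell(v)-\ell(u)<1$; and for $u<_i v$ nonadjacent in $G_i$, $\ell(v)-\ell(u)\ge 1$. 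Any solution of $\mathcal D$ satisfies $H[V_i]=G_i$ (the edge constraints reproduce exactly the adjacencies of $G_i$) and refines $<_i$ (so $\sigma$ is compatible with $\zeta_i$); conversely a realizing representation yields such an $\ell$. So it remains to show $\mathcal D$ is feasible when $\mathcal G$ has no conflict.

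$\mathcal D$ is a system of difference constraints with mixed strict/non-strict inequalities, hence feasible unless its constraint digraph $D$ contains a \emph{bad} cycle: a closed walk of negative total weight, or of weight zero using a strict edge. The four implications of Corollary~\ref{cor:4implications} are precisely the sound rules for composing two constraints of $\mathcal D$ (they derive an edge, a nonedge, and two order relations, respectively), so a bad cycle is the same as a derivation of a contradiction from the known adjacency and order facts. Two structural facts drive the argument. First, $D$ has no edge between $V_i\setminus V_S$ and $V_j\setminus V_S$ for $i\ne j$, so every cycle of $D$ either lies inside a single $G_i$ or splits, at vertices of $V_S$, into segments each inside a single $G_i$. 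Second, no cycle inside a single $G_i$ is bad, since $G_i$ with its straight enumeration $\zeta_i$ is realized by a genuine unit interval representation (by Proposition~\ref{the:unitproperEqu} and Proposition~\ref{proper:theorem:uniqueEnumeration}, applied per connected component), which solves the $G_i$-part of $\mathcal D$. Thus a bad cycle visits at least two vertices of $V_S$. The key lemma I would prove is that, within one $G_i$, the best bound derivable on $\ell(v)-\ell(u)$ for shared $u,v$ is controlled by chains and bars: the tightest upper bound equals (shortest $(u,v)$-chain size)$-1$ and is strict, and the tightest lower bound equals (longest $(u,v)$-bar size)$-1$; put differently, a shortest path between two shared vertices in the $G_i$-part of $D$ is witnessed by a chain or by a bar.

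Given the key lemma, a bad $2$-cycle on shared vertices $u,v$ says exactly that some $G_i$ has a $(u,v)$-chain no longer than some $(u,v)$-bar in some $G_j$, i.e.\ a conflict; so ``no conflict'' is equivalent to ``$D$ has no bad $2$-cycle,'' and it remains to exclude longer bad cycles. I would take a bad cycle using the fewest vertices of $V_S$, replace each of its segments by the single tightest chain- or bar-constraint it supplies (the key lemma), and, if the resulting cycle on $V_S$ still has length at least three, apply a shortcut step: because $\zeta$ is a simultaneous enumeration all the $\zeta_i$ order $V_S$ consistently, and the shortest-path distance inside each contributing graph satisfies the triangle inequality, so some chord yields a strictly shorter bad cycle, contradicting minimality. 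Hence the cycle has length two and is a conflict. The step I expect to be the main obstacle is precisely this contraction of a long bad cycle to a bad $2$-cycle, together with the careful bookkeeping of strict versus non-strict inequalities and of the block-coarsening that occurs when a vertex of $S$ is split among several blocks of some $G_i$; the rest is a routine translation between the representation, the sandwich reformulation of Lemma~\ref{lem:Hsandwitch}, and the difference-constraint system $\mathcal D$.
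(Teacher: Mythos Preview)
Your difference-constraint reformulation is a genuinely different route from the paper's scouting/zipping construction, and the setup is sound: feasibility of $\mathcal D$ is equivalent to the existence of a realizing representation, the key lemma about chains and bars witnessing the tightest single-graph bounds is correct (it follows from the fact that each $G_i$ individually has a unit representation realizing $\zeta_i$), and the observation that a bad cycle must pass through $V_S$ is right.

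The gap is the shortcut step. Your justification, ``the shortest-path distance inside each contributing graph satisfies the triangle inequality, so some chord yields a strictly shorter bad cycle,'' does not work as stated. The distances $d_i(\cdot,\cdot)$ within a fixed $G_i$ do satisfy the triangle inequality, but after you replace each segment by its tightest bound you are working with $\min_i d_i(\cdot,\cdot)$, and this minimum over graphs is \emph{not} a metric: different arcs of the reduced cycle may take their minimum in different $G_i$'s, so chording with a single-graph distance can increase the total weight rather than split badness. Concretely, the standard splitting argument needs the round trip $d(s,s')+d(s',s)$ along the chord to be nonpositive, but inside any single $G_i$ this round trip is nonnegative (no bad cycle in one graph), so adding a chord and splitting gives two cycles whose weights sum to $W+(\text{nonnegative})$ and neither need be bad.

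What does work, and what you would need to supply, is a \emph{splitting} argument rather than a chord: if $s_{t-1}<s_{t+1}<s_t$ (say $s_t$ is a local maximum of the cycle) and the bar edge $s_t\to s_{t+1}$ is realised in $G_j$, then because $s_{t-1}\in V_S\subseteq V_j$ one can split the $(s_{t+1},s_t)$-bar in $G_j$ at $s_{t-1}$ to get $b_j(s_{t+1},s_t)\le b_j(s_{t+1},s_{t-1})+b_j(s_{t-1},s_t)$, and then replace the two-arc corner $s_{t-1}\to s_t\to s_{t+1}$ by a single-arc detour inside $G_j$. Carrying this through for all corner types (chain--bar, bar--chain, with both relative orders of the neighbours) and tracking strictness is exactly the chain--bar bookkeeping the paper performs in its Property~(S2) of scouts; it is doable along your lines, but it is not ``the triangle inequality'' and it is the whole proof, not a detail. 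You correctly flag this as the main obstacle, but the proposal as written does not clear it.
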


Recall that $V_i=V(G_i)$ for $i\in\{1,\dots,k\}$ and $V=V_1\cup\dots\cup V_k$.
Let $\alpha^\star$ be the union of the partial orders on $V_1,\dots,V_k$ corresponding to $\zeta_1,\dots,\zeta_k$.
Set $\alpha$ to be the transitive closure of $\alpha^\star$. We call $\alpha$ the \emph{partial order on $V$ induced by $\zeta$}.
The rough idea is that the partial order on $V$ induced by the simultaneous enumeration $\zeta$ is extended in two sweeps to a fine enumeration of some graph $H$ that contains $G_1,\dots,G_k$ as induced subgraphs; see Figures~\ref{fig:scout_fail}, \ref{fig:scout_success}. 
For $(u,v)\in \alpha$ we consider $u$ to be to the left of $v$.
The first sweep (\emph{scouting}) goes from the right to the left and makes only necessary extensions according to Corollary~\ref{cor:4implications}\,($\ref{itm:ext4}$). If there is a conflict, then it is found in this step. Otherwise, we can greedily order the vertices on the way back by additionally respecting Corollary~\ref{cor:4implications}\,($\ref{itm:ext3}$) (\emph{zipping}) to obtain a linear extension where both implications are satisfied. In the last step we decide which edges $H$ has by respecting Corollary~\ref{cor:4implications}\,($\ref{itm:ext1}$).

\begin{figure}[tb]
	\centering
	\includegraphics[scale=1]{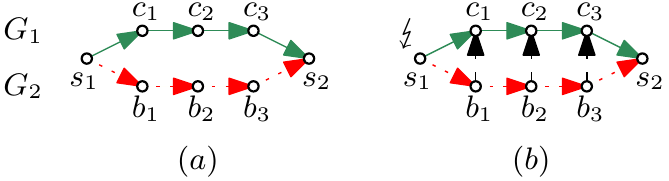}
  \caption{(a): A simultaneous enumeration with conflict. (b): Result with added orderings after scouting, starting at $s_2$ and finding the conflict in $s_1$.}
	\label{fig:scout_fail}
\end{figure}

\begin{figure}[tb]
	\centering
	\includegraphics[scale=1]{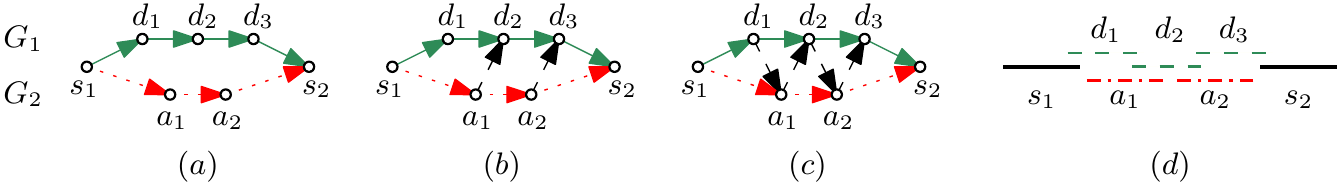}
	\caption{(a): A simultaneous enumeration without conflict. (b): Result with added orderings after scouting. (c): Resulting linear order after zipping. Note that $a_1$ comes before $d_2$ in the linear order thanks to scouting. Choosing otherwise would imply a contradiction at $s_2$. (d): Resulting unit interval representation for the sandwich graph.}
	\label{fig:scout_success}
\end{figure}

For $h\in \{1,\dots,k\}$ we say two vertices $u,v\in V_h$ are \emph{indistinguishable in $\mathcal G$} if we have $N_{G_i}(u) = N_{G_i}(v)$ for all $i\in \{1,\dots,k\}$ with $u,v\in V_i$. In that case $u$, $v$ can be represented by the same interval in any simultaneous proper interval representation. Thus, we identify indistinguishable vertices. If $u,v\in V_h$ are not indistinguishable, then we have $N_{G_j}(u)\ne N_{G_j}(v)$ for some $j\in\{1,\dots,k\}$. In that case $u$, $v$ are ordered by $\zeta_j$ and therefore by $\alpha$. That is, we can assume $\alpha$ to be a linear order on $V_i$ for $i\in\{1,\dots,k\}$.
Note that $u$, $v$ may be ordered even if they are indistinguishable in some input graphs.

 For $i\in\{1,\dots,k\}$, let $G_i^c=(V_i,\binom{V_i}{2}\setminus E_i)$ be the
 complement of $G_i$.  We set $E=\left\{(u,v)\in \alpha\mid uv\in E_1\cup\dots\cup E_k\right\}$ and $F=\left\{(u,v)\in \alpha\mid uv\in E(G_1^c\cup\dots\cup G_k^c)\right\}$.
We call a partial order $\sigma$ on $V$ \emph{left-closed} if we have
\begin{equation}\forall v,w,u,x\in V\colon ~(vw\in E~\wedge~ ux\in F~\wedge~
  x\le_\sigma w)~\Rightarrow~ u<_\sigma v. \label{eq:left-closed} \end{equation} 
  Note that a fine enumeration of a graph $H$ with $G_1,\dots,G_k$ as induced subgraphs is left-closed by Corollary~\ref{cor:4implications}\,($\ref{itm:ext4}$).
We describe the result of the first sweep with the following lemma.

\begin{lemma}
  \label{lem:leftClosedChar}
  A sunflower graph $\mathcal G$ has no conflict for a simultaneous enumeration $\zeta$ if and only if there is a left-closed partial order $\tau$ that extends the partial order on $V(\mathcal G)$ induced by $\zeta$.
\end{lemma}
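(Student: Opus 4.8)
The plan is to prove the two directions separately, with the ``conflict $\Rightarrow$ no left-closed extension'' direction being the easy one and the converse being the substantive ``scouting'' argument sketched before the lemma.

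For the forward (contrapositive) direction, suppose $\mathcal G$ has a $(u,v)$-conflict $(C,B)$ for $\zeta$, where $C = (u=c_1,\dots,c_l=v)$ is a $(u,v)$-chain in $(G_i,\zeta_i)$ and $B = (u=b_1,\dots,b_l=v)$ is a $(u,v)$-bar in $(G_j,\zeta_j)$, both of size $l\ge 2$. Assume for contradiction that some left-closed partial order $\tau$ extends $\alpha$. I would apply the defining implication \eqref{eq:left-closed} repeatedly, ``walking'' the bar and the chain toward each other: from $c_1 c_2\in E$, $b_{l-1}b_l\in F$, and $b_l = c_l \ge_\tau c_2$ (the latter since $c_2<_i c_l$ puts $(c_2,c_l)\in\alpha\subseteq\tau$ — here one must be slightly careful and note $c_2\le_\tau v = b_l$), we get $b_{l-1} <_\tau c_1 = u = b_1$; but then $b_1 \le_\tau b_{l-1}$ from the bar ordering contradicts antisymmetry of $\tau$ once $l$ is handled correctly. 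More precisely, the clean way is an induction: show $b_{l-t} \le_\tau c_{?}$ fails by peeling off one chain-edge and one bar-non-edge at a time, using at step $t$ the edge $c_tc_{t+1}\in E$ and non-edge $b_{l-t}b_{l-t+1}\in F$ together with the induction hypothesis to drive $b_{l-t}$ strictly left of $c_t$; when $t$ reaches $l-1$ this says $b_1 <_\tau c_1 = b_1$, the contradiction. I would set up the induction statement carefully so that the $\le$ versus $<$ bookkeeping in \eqref{eq:left-closed} works out (the hypothesis should be $b_{l-t+1}\le_\tau c_{t+1}$, say, so that the conclusion feeds the next step).

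For the converse, assume $\mathcal G$ has no conflict for $\zeta$; I must construct a left-closed partial order $\tau \supseteq \alpha$. The natural construction is to take $\tau$ to be the transitive closure of $\alpha$ together with all pairs $(u,v)$ forced by \eqref{eq:left-closed} — i.e., iterate: whenever $vw\in E$, $ux\in F$, and $x$ is currently $\le$-before $w$ but $u$ is not yet strictly before $v$, add the pair $(u,v)$; then re-close transitively; repeat until no change. This process terminates since $V$ is finite. The only thing to prove is that the resulting relation $\tau$ is still a \emph{partial order}, i.e., that no cycle is ever created — equivalently, that we never force both $u<_\tau v$ and $v\le_\tau u$. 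This is exactly where the no-conflict hypothesis is used: I would show that any cycle produced by the closure process can be ``unrolled'' into a $(u,v)$-chain in some $G_i$ and a $(u,v)$-bar in some $G_j$ of equal size, i.e., into a conflict for $\zeta$, contradicting the assumption. Concretely, a forced pair $(u,v)$ comes with a ``witness'' consisting of an $F$-edge $ux$ and an $E$-edge $vw$ with $x$ already $\le$ $w$; tracing the chain of such witnesses that would close a cycle, the $E$-edges concatenate (via shared endpoints and $\alpha$-comparabilities) into a path in a single $G_i$ — here one uses that $E$-edges inside one $G_h$ respect $\zeta_h$ and that the sunflower structure keeps the shared endpoints $u,v\in V(S)$ fixed — and symmetrically the $F$-edges concatenate into an independent set in a single $G_j$, and a counting argument on the lengths yields chain-size $\ge$ bar-size $\ge 2$. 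Finally, once $\tau$ is known to be a partial order, it is left-closed by construction (the fixpoint condition is precisely \eqref{eq:left-closed}) and it extends $\alpha$, completing the proof.

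The main obstacle I anticipate is the cycle/conflict extraction in the converse: one must argue that an inconsistency in the closure process is \emph{local} to one pair $G_i, G_j$ of sunflower graphs (so that it genuinely is a $(u,v)$-chain in one graph and a $(u,v)$-bar in another, with common endpoints in the shared graph $S$), rather than a ``mixed'' cycle spanning several graphs. I expect this to follow from the definition of $\alpha$ as the transitive closure of the \emph{individual} orders $\zeta_i$ plus the fact that $S$ is an induced subgraph of each $G_i$ (so that edges and non-edges among shared vertices are consistent across graphs), but making the reduction of a multi-graph cycle to a single-graph chain and a single-graph bar fully rigorous — keeping track of which $\alpha$-steps are ``for free'' and which carry an $E$- or $F$-label — is the delicate part and will likely require a careful normal-form argument on the witness structure, possibly choosing a \emph{shortest} offending cycle to rule out shortcuts.
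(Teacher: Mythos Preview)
Your forward direction is essentially the paper's argument, but your indexing is tangled: you propose to pair the chain edge $c_tc_{t+1}$ with the bar non-edge $b_{l-t}b_{l-t+1}$, walking the two sequences from opposite ends. That does not give a usable inductive hypothesis. The correct walk pairs $c_tc_{t+1}\in E$ with $b_tb_{t+1}\in F$ from the \emph{same} end: from $b_{t+1}\le_\tau c_{t+1}$ the left-closed rule yields $b_t<_\tau c_t$, so by induction $b_1<_\tau c_1$, contradicting $b_1=c_1=u$. This is exactly what the paper's sketch (``the $i$'th vertex of $C$ and $B$ must be ordered and distinct while the first vertex is shared'') says.

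For the converse, your global fixed-point closure is a legitimate alternative to the paper's right-to-left sweep, and both hinge on the same invariant: every pair $(u,v)$ forced by~\eqref{eq:left-closed} carries a \emph{chain-bar pair} witness anchored at a shared vertex $s\in V(S)$ (Property~(S\ref{itm:pzip4}) in the paper). Where your plan has a genuine gap is the cycle-extraction step. You assert that in a putative cycle ``the $E$-edges concatenate into a path in a single $G_i$'' and ``the $F$-edges concatenate into an independent set in a single $G_j$''; this is not true in general and is not what the paper proves. A cycle in the closure may mix witnesses from many pairs $(G_i,G_j)$, and neither the $E$-side nor the $F$-side need live in one graph. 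The paper does \emph{not} extract a single chain and a single bar from an arbitrary cycle. Instead it (i) builds, for each ordered pair $(i,j)$, a relation $\tau_{i,j}$ by processing vertices right-to-left and proves $\tau_{i,j}$ is antisymmetric --- here a violation really does produce a conflict between $G_i$ and $G_j$ because the chain-bar witnesses for $\tau_{i,j}$ stay inside $G_i\cup G_j$; and then (ii) shows that the union $\tau=\bigcup_{i,j}\tau_{i,j}$ is \emph{already transitive}, by splicing the chain-bar witnesses of $u<_{\tau_{i,j}}v$ and $v<_{\tau_{j,h}}w$ via Corollary~\ref{cor:4implications}(\ref{itm:ext3}) to obtain $u<_{\tau_{i,h}}w$. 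Step~(ii) is precisely the ``multi-graph cycle'' obstruction you flag, and it is handled by reducing a length-two chain of forced pairs to a single forced pair in the appropriate $\tau_{i,h}$, not by concatenating edges into one graph. If you want to salvage the global-closure approach, you would need to prove the analogue of~(S\ref{itm:pzip4}) as an invariant of your iteration and then reproduce the splicing argument; a shortest-cycle trick alone will not give you a chain and a bar in one graph each with both endpoints in $V(S)$.
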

{\renewcommand{\proofname}{Proof Sketch} \begin{proof}
  If there is a conflict $(C,B)$, then the partial order $\alpha$ induced by $\zeta$ cannot be extended to be left-closed since then for $i\in\{1,\dots, k-1\}$ the $i$'th vertex of $C$ and $B$ must be ordered and distinct while the first vertex is shared; see Figure~\ref{fig:scout_fail}.

  Otherwise, we process the vertices from the right to the left and add for each of them the implied orderings (each is considered as vertex $x$ in the definition of left-closed). 
  First consider the case of just two input graphs $G_1, G_2$.
  Let $X$ be the set of already processed vertices and let $\sigma$ be the current partial order. We next process a maximal vertex $x\in V\setminus X$. Let $x\in V _i$. Then we choose $u_i$ to be the rightmost vertex in $V_i$ with $u_ix\in F$ and for $j\ne i$ we choose $w_j$ to be the leftmost vertex in $V_j$ with $x\le w_j$ and $v_j$ to be the leftmost vertex in $V_j$ with $v_jw_j\in E$; see Figure~\ref{fig:scoutExtension}. Each of $u_i,v_j,w_j$ may not exist. If they do, we extend $\sigma$ to $\sigma'$ by adding the ordering $u_i\le_{\sigma'} v_j$. The other implied orderings are exactly those obtained by transitive closure.

  \begin{figure}[tb]	
    \captionsetup[subfigure]{justification=centering}
    \centering
    \begin{subfigure}[b]{0.4\textwidth}
      \includegraphics{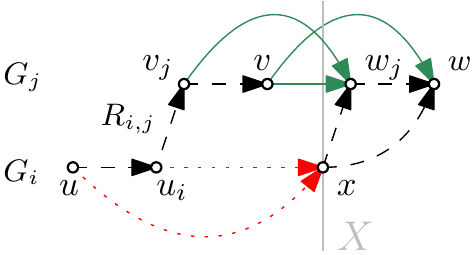}
      \caption{}
      \label{fig:scoutExtension}
    \end{subfigure}
    ~
    \begin{subfigure}[b]{0.17\textwidth} 
      \includegraphics{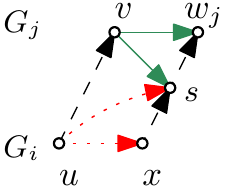}
      \caption{}
      \label{fig:CPpairStart}
    \end{subfigure}
    ~
    \begin{subfigure}[b]{0.2\textwidth} 
      \includegraphics{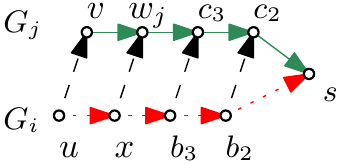}
      \caption{}
      \label{fig:CBpairStep}
    \end{subfigure}
    \caption{(a): The vertices $u_i,v_j,w_j$ as derived from $x$ and $X$. The introduced ordering $(u_i,v_j)$ is marked with $R_{i,j}$. 
    (b),(c): Both cases of a chain-bar pair for $u$ and $v$.
    \label{fig:scouting1}}  
  \end{figure}

  Two vertices $u\in V_1, v\in V_2$ are only ordered by $\alpha$ if there is a shared vertex $s$ with $u\le_\alpha s \le_\alpha v$ or $v\le_\alpha s \le_\alpha u$. The key observation is that if $u$ is ordered before $v$ due to a necessary extension, then there is a shared vertex $s$ and a $(v,s)$-chain and a $(u,s)$-bar of equal size (\emph{chain-bar pair}): If we have $x\le_\alpha w_j$, then there is a shared vertex $x\le_\alpha s\le_\alpha w_j$ and by Theorem~\ref{the:3vertexCondition} we obtain $us\in F$ and $vs\in E$, which yields a chain-bar pair; see Figure~\ref{fig:CPpairStart}. Otherwise we have a chain-bar pair for $x$ and $w_j$ that can be extended by $u$ and $v$; see Figure~\ref{fig:CBpairStep}. With the absence of conflicts this ensures that vertices ordered according to the left-closed property are actually distinct. 

  \begin{figure}[tb]	
\captionsetup[subfigure]{justification=centering}
\centering
\begin{subfigure}[b]{0.45\textwidth}
	\includegraphics{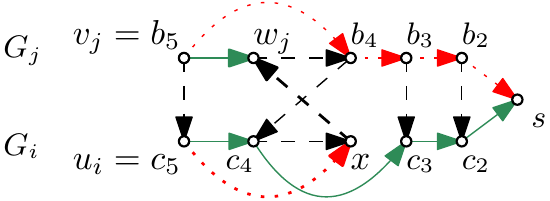}
	\caption{}
  \label{fig:scout_antisymmetry}
\end{subfigure}
~
\begin{subfigure}[b]{0.3\textwidth} 
	\includegraphics{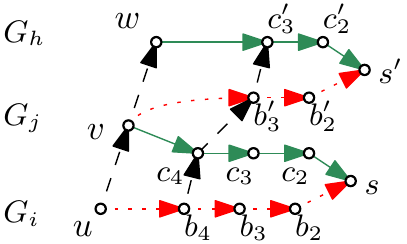}
	\caption{}
  \label{fig:scout_trans}
  \end{subfigure} 
  \caption{(a): Example situation for $(v_j,u_i)\in\tau_{i,j}\setminus\alpha^\star$. We have the $(v_j,s)$-bar $(v_j,b_4,b_3,b_2,s)$ and the $(u_i,s)$-chain $(u_i,c_4,c_3,c_2,s)$ and obtain $x\le_{\tau_{i,j}} w_j\le_{\tau_{i,j}} b_4\le_{\tau_{i,j}} c_4\le_{\tau_{i,j}} x$. (b): Example situation for the transitivity of $\tau$ where we have a chain-bar pair for $u,v$ as well as for $v,w$. We obtain $b_4 \le_{\tau_{i,j}} c_4\le_{\tau_{i,j}} b_3'\le_{\tau_{i,j}} c'_3$ and since $u<_\alpha b_4$ and $w<_\alpha c_3'$ we get $b_4\le_{\tau_{i,h}} c_3'$ in an appropriate induction and with $\tau_{i,h}$ being left-closed we obtain $u\le_{\tau_{i,h}} w$. (The base cases for the induction involve shared vertices and thereby only two input graphs.)}
\end{figure}

  Assume a new extension would violate the property of antisymmetry. This would mean we already had $v_j<_\sigma u_i$, which would imply a cyclic ordering of $x$, $w_j$ with elements of the (necessary) chain-bar pair for $v_j$, $u_i$ in a prior step; see Figure~\ref{fig:scout_antisymmetry}. 
  Finally, for more than two input graphs we obtain a corresponding ordering $\tau_{i,j}$ for each pair of input graphs $G_i,G_j$. Let $\tau=\bigcup_{i,j\in\{1,\dots,k\}} \tau_{i,j}$ be their union.  For $u<_{\tau_{i,j}}v <_{\tau_{j,h}} w$ we can prove $u <_{\tau_{i,h}} w$ by using chain-bar pairs and induction; see Figure~\ref{fig:scout_trans}. Hence, $\tau$ is already transitive and the other properties are easy to verify.  
\end{proof}}

By respecting the orderings obtained by scouting we avoid wrong decisions when greedily adding vertices to a linear ordering in the zipping step; see Figure~\ref{fig:scout_success}. 

\begin{lemma}
  \label{lem:zipping}
 Let $\mathcal G$ be a sunflower graph with a simultaneous enumeration $\zeta$. There is a left-closed linear order $\tau$ that extends the partial order $\alpha$ on $V(\mathcal G)$ induced by $\zeta$ if and only if there is a left-closed partial order $\sigma\supseteq\alpha$.
\end{lemma}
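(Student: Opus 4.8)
The forward direction is immediate: a left-closed linear order is in particular a left-closed partial order, so if such a $\tau$ extends $\alpha$ we may take $\sigma:=\tau$. For the converse, assume a left-closed partial order $\sigma\supseteq\alpha$ is given. Since any linear order containing $\sigma$ also contains $\alpha$, it suffices to produce a \emph{linear} extension $\tau\supseteq\sigma$ that is again left-closed. This is not automatic: passing to a linear extension creates new comparabilities, and if some new pair $x\le_\tau w$ is created while the matching pair $u,v$ has been linearised as $v\le_\tau u$, then \eqref{eq:left-closed} breaks. The purpose of \emph{zipping} is to choose the new comparabilities so that this never happens; concretely it builds $\tau$ greedily while respecting Corollary~\ref{cor:4implications}\,($\ref{itm:ext3}$).

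The plan is to process the vertices from left to right. Maintain the set $P$ of already-placed vertices, which form a prefix of $\tau$ (initially $P=\emptyset$); at each step append a vertex $x\in V\setminus P$ that is $\sigma$-minimal in $V\setminus P$ and that causes \emph{no immediate violation}, meaning there is no already-placed pair $v\le_\tau u$ together with some $w\notin P$ with $vw\in E$ and with $ux\in F$. Appending only $\sigma$-minimal vertices guarantees that, at every stage, the restriction of $\tau$ to $P$ is a linear extension of $\sigma|_P$, hence of $\alpha|_P$. I claim such a vertex $x$ always exists as long as $P\ne V$; granting this, the procedure terminates with a total order $\tau\supseteq\sigma\supseteq\alpha$.

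It then remains to check that the resulting $\tau$ satisfies \eqref{eq:left-closed}. Suppose not, witnessed by $v,w,u,x$ with $vw\in E$, $ux\in F$, $x\le_\tau w$ and $v\le_\tau u$. Since $u<_\alpha x$ and $\tau$ extends $\alpha$, vertex $u$ is placed before $x$, and then $v$ is placed before $x$ as well because $v\le_\tau u$; moreover $w$ is not placed strictly before $x$, as that would give $w<_\tau w$\,---\,wait, $w<_\tau x$, contradicting $x\le_\tau w$\,---\,so $w\notin P$ at the moment $x$ is appended. Hence at that moment we have an already-placed pair $v\le_\tau u$ with $vw\in E$, $w\notin P$, and $ux\in F$: exactly an immediate violation, so $x$ was not a legal choice, a contradiction. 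Thus $\tau$ is left-closed; and since for a total order the implications $(\ref{itm:ext3})$ and $(\ref{itm:ext4})$ of Corollary~\ref{cor:4implications} are contrapositives of one another, $\tau$ satisfies $(\ref{itm:ext3})$ too, as promised.

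I expect the existence claim\,---\,that the greedy never gets stuck\,---\,to be the main obstacle, as it is the point where the global hypothesis has to be turned into a local guarantee. The approach is to assume that every $\sigma$-minimal unplaced vertex causes an immediate violation, so each is blocked by a ``dangling'' edge $vw$ with $w$ unplaced and a placed non-edge partner $u$; following these dangling edges and non-edges backwards, and using repeatedly that $\sigma$ is left-closed (so that $u<_\sigma v$ can never coexist with $v\le_\tau u$ in the linear extension of $\sigma|_P$ built so far), one extracts a shared vertex $s$ together with a chain and a bar of equal size between $s$ and another vertex of the shared graph, i.e.\ a conflict\,---\,contradicting the existence of the left-closed $\sigma$ by Lemma~\ref{lem:leftClosedChar}. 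I anticipate this extraction to require the same chain-bar bookkeeping as the proof of Lemma~\ref{lem:leftClosedChar}, including its reduction from $k$ graphs to pairs of graphs through shared vertices.
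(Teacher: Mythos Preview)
Your forward direction and your verification that \emph{if} the greedy never gets stuck then the output $\tau$ is left-closed are both fine. The gap is exactly where you locate it yourself: the existence claim is not proved, and your sketch for it is both vague and aimed at the wrong target. You propose to show that if every $\sigma$-minimal unplaced vertex is blocked then a chain--bar conflict can be extracted, and then to invoke Lemma~\ref{lem:leftClosedChar}. But the blocking witnesses $v\le_\tau u$ live only in the \emph{linear} prefix you have built, not in $\sigma$; there is no reason they should be $\sigma$-comparable, so the left-closedness of $\sigma$ gives you nothing about them directly, and the chain--bar machinery of Lemma~\ref{lem:leftClosedChar} does not obviously apply. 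You would have to reconstruct that machinery in this new setting, and you have not indicated how.

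The paper sidesteps this difficulty entirely by a different construction. Rather than searching for a \emph{safe} $\sigma$-minimal vertex, it picks \emph{any} $\sigma'$-minimal vertex $u$ (where $\sigma'$ is the current, possibly already enlarged, partial order) and then \emph{proactively adds orderings to the partial order on the unplaced vertices} so as to maintain a right-closedness invariant on $U$: for all $u',v\in U$ and $w,x\in V$, if $vw\in E$, $u'x\in F$ and $v\le u'$ then $w<x$. Concretely, with $Y=\{y:\exists u'\in U,\,u'y\in E\}$ and $Z=\{z:uz\in F\}$, it adds $y<z$ for all $y\in Y$, $z\in Z$, and closes transitively. The crucial point is that there is no $y\in Y$, $z\in Z$ with $z\le_\sigma y$, because $\sigma$ is left-closed and every $u'\in U$ satisfies $u'\le_\sigma u$; this single observation yields antisymmetry and preservation of left-closedness directly, with no chain--bar bookkeeping and no appeal to Lemma~\ref{lem:leftClosedChar}. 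The algorithm therefore never gets stuck by design, and the argument stays within the left-closed hypothesis alone.
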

{\renewcommand{\proofname}{Proof Sketch} \begin{proof}
  Given $\sigma$ we process the vertices from the left to the right. We add in each step a leftmost vertex $u$ of the remaining vertices to a set $U$ of the processed vertices that are linearly ordered. We denote the current order by $\sigma'$. Vertex $u$ is then ordered before all other vertices in $V\setminus U$. To avoid that the left-closed property is violated when adding such orderings for another vertex, we ensure our extended order $\sigma''\supseteq \sigma'$ is \emph{right-closed on $U$} meaning that
  \begin{equation}
    \forall u,v\in U, w,x\in V\colon (vw\in E \wedge ux\in F\wedge v\le u) \Rightarrow w< x.
  \end{equation}

  \begin{figure}[tb]	
\captionsetup[subfigure]{justification=centering}
\centering
\begin{subfigure}[b]{0.3\textwidth}
	\includegraphics{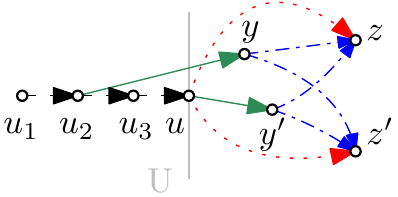}
	\caption{}
  \label{fig:Rzip}
\end{subfigure}
~
\begin{subfigure}[b]{0.3\textwidth} 
	\includegraphics{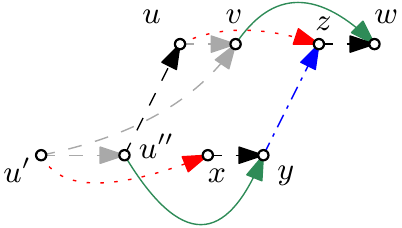}
	\caption{}
  \label{fig:RleftClosed}
  \end{subfigure} 
  \caption{(a): orderings added during a zipping step (blue dash-dotted). All vertices in $Y=\{y,y'\}$ are ordered before those in $Z=\{z,z'\}$. (b): The case for $\sigma''$ being left-closed where we have $x\le_{\sigma''} w$ due to transitivity. This means there is some ordering $(y,z)\in Y\times Z$ with $x\le_{\sigma'} y \le_{\sigma''} z \le_{\sigma'} w$. We further have a vertex $u''\in U$ with $u''y\in E$ and $uz\in F$. 
  Given vertices $u',v\in V$ with $u'x\in F$ and $vw\in E$ we obtain $u'<_{\sigma'} u''$ and $u<_{\sigma'} v$ since $\sigma'$ is left-closed. This yields $u'<_{\sigma''} v$.}
\end{figure}

  To this end, we consider the current vertex $u$ as vertex $u$ in the definition of right-closed and add all implied orderings in $\sigma''$. This means for each vertex $y\in Y = \{y\in V\mid \exists u'\in U\colon uy\in E\}$ and each vertex $z\in Z=\{z\in V\mid uz\in F\}$ we set $y\le_{\sigma''} z$; see Figure~\ref{fig:Rzip}. We further extend $\sigma''$ to be transitive. Note that there are no two vertices $y\in Y, z\in Z$ with $y\le_\sigma z$, since $\sigma$ is left-closed and for $u'\in U$ we have $u'\le_\sigma u$. With this observation we can verify that $\sigma''$ is antisymmetric and left-closed; see Figure~\ref{fig:RleftClosed}.
\end{proof}}

Finally, we construct a graph $H=(V,E')$ for which the obtained linear order $\tau$ is a fine enumeration. We do so by setting $E'=\{ux\in V^2 \mid \exists vw\in E\colon v\le_\tau u <_\tau x\le_\tau w\}$ in accordance with Corollary~\ref{cor:4implications}\,~($\ref{itm:ext1}$).

\begin{lemma}[$\star$]
  \label{lem:sandwich}
 Let $\mathcal G=(G_1,\dots,G_k)$ be a sunflower graph with a simultaneous enumeration $\zeta$. A linear order $\tau$ that extends the partial order on $V(\mathcal G)$ induced by $\zeta$ is a fine enumeration for some graph $H$ that has $G_1,\dots,G_k$ as induced subgraphs if and only if $\tau$ is left-closed.
\end{lemma}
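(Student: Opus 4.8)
The plan is to prove both directions of Lemma~\ref{lem:sandwich} using the characterization of fine enumerations from Theorem~\ref{the:3vertexCondition} (equivalently Corollary~\ref{cor:4vertexCondition}) together with Corollary~\ref{cor:4implications}. Throughout, recall that $\alpha$ (the partial order on $V$ induced by $\zeta$) is already a linear order on each $V_i$, and that $E$ and $F$ as defined collect the $\alpha$-ordered adjacent and non-adjacent pairs of $\bigcup_{\mathcal G}$, respectively. The graph $H=(V,E')$ is defined by $E'=\{ux \mid \exists vw\in E\colon v\le_\tau u<_\tau x\le_\tau w\}$.

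First I would show that if $\tau$ is a left-closed linear order extending $\alpha$, then $H$ has $G_1,\dots,G_k$ as induced subgraphs and $\tau$ is a fine enumeration of $H$. For the induced-subgraph part, fix $i$ and vertices $u,x\in V_i$; by symmetry assume $u<_\tau x$ (they are $\tau$-comparable since $\tau$ is linear, and if $u=x$ there is nothing to show). If $ux\in E_i$ then $ux\in E$ (as $u,x$ are $\alpha$-comparable, being in the same $V_i$), so taking $v=u$, $w=x$ witnesses $ux\in E'$. Conversely, if $ux\in E'$, pick the witnessing $vw\in E$ with $v\le_\tau u<_\tau x\le_\tau w$; I must show $ux\in E_i$. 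Suppose not, i.e.\ $ux\in\binom{V_i}2\setminus E_i$, hence $ux\in F$. Then apply the left-closed condition~\eqref{eq:left-closed} to $v,w$ and the pair $u,x\in F$: from $vw\in E$, $ux\in F$, and $x\le_\tau w$ we get $u<_\tau v$, contradicting $v\le_\tau u$. Hence $ux\in E_i$ and $H[V_i]=G_i$. For the fine-enumeration part I would verify the three-vertex condition of Theorem~\ref{the:3vertexCondition}: take $p<_\tau q<_\tau r$ with $pr\in E'$; then there is $vw\in E$ with $v\le_\tau p<_\tau r\le_\tau w$, so $v\le_\tau p<_\tau q$ and $p<_\tau q\le_\tau w$ (using $q\le_\tau r\le_\tau w$), whence both $pq\in E'$ and $qr\in E'$ directly from the definition of $E'$. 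Thus $\tau$ is a fine enumeration of $H$.

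For the converse, suppose $\tau$ extends $\alpha$ and is a fine enumeration of some graph $H'$ having $G_1,\dots,G_k$ as induced subgraphs. I want $\tau$ left-closed. The cleanest route is the remark already stated in the paper: ``a fine enumeration of a graph $H$ with $G_1,\dots,G_k$ as induced subgraphs is left-closed by Corollary~\ref{cor:4implications}\,(\ref{itm:ext4}).'' Concretely, take $v,w,u,x\in V$ with $vw\in E$, $ux\in F$, and $x\le_\tau w$; I must show $u<_\tau v$. Since $vw\in E\subseteq E(\bigcup_{\mathcal G})$, we have $vw\in E(H')$ (the $G_i$ are induced subgraphs, but more directly every edge of some $G_i$ is an edge of $H'$). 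Since $ux\in F$, the pair $u,x$ lies in a common $V_i$ and is non-adjacent in $G_i$, hence non-adjacent in $H'$ (induced subgraph); also $u\le_\tau x$ because $ux\in F\subseteq\alpha\subseteq\tau$. Now Corollary~\ref{cor:4implications}\,(\ref{itm:ext4}) applied to the fine enumeration $\tau$ of $H'$ with $vw\in E(H')$, $ux\notin E(H')$, $u\le_\tau x$, $x\le_\tau w$ yields $u<_\tau v$, as required. (One subtlety to spell out: Corollary~\ref{cor:4implications} is stated with hypotheses $u\le_\sigma x$ and $v\le_\sigma w$; here $v\le_\tau w$ follows from $v<_\tau w$? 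Not immediately — we only know $vw\in E$, which gives $v,w$ $\alpha$-comparable, and $\alpha\subseteq\tau$ orders them, say $v<_\tau w$; if instead $w<_\tau v$ then since $x\le_\tau w<_\tau v$ we already have $u\le_\tau x<_\tau v$, done. So we may assume $v<_\tau w$ and apply the corollary.)

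The only real obstacle is bookkeeping about when two vertices are guaranteed $\tau$-comparable: the definitions of $E$ and $F$ build in $\alpha$-comparability, and $\tau\supseteq\alpha$ is linear, so every pair appearing in a hypothesis is in fact $\tau$-ordered — this must be invoked carefully at each step but is not deep. A second minor point to handle is that in the converse we should argue directly with the $G_i$ rather than assuming $H'=H$; since the hypotheses of the left-closed condition involve only edges in $E$ and non-edges in $F$, which are determined by the $G_i$, the argument via Corollary~\ref{cor:4implications}\,(\ref{itm:ext4}) goes through for any common induced supergraph $H'$. With both directions established, Lemma~\ref{lem:sandwich} follows, and combined with Lemmas~\ref{lem:leftClosedChar} and~\ref{lem:zipping} it completes the characterization needed for Theorem~\ref{the:SUIRchar}.
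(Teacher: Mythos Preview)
Your proof is correct and follows essentially the same route as the paper: the forward direction constructs $H=(V,E')$ with $E'=\{ux\mid \exists vw\in E\colon v\le_\tau u<_\tau x\le_\tau w\}$, checks $E\subseteq E'$ and $E'\cap F=\emptyset$ via left-closedness, and verifies the three-vertex condition, while the reverse direction is exactly the remark that Corollary~\ref{cor:4implications}\,(\ref{itm:ext4}) forces any fine enumeration of a common induced supergraph to be left-closed. Your extra case split on whether $v<_\tau w$ or $w<_\tau v$ is harmless but unnecessary, since by definition $E\subseteq\alpha\subseteq\tau$, so $vw\in E$ already entails $v\le_\tau w$.
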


Combining Lemmas~\ref{lem:Hsandwitch}, \ref{lem:leftClosedChar}, \ref{lem:zipping} and \ref{lem:sandwich} we obtain Theorem~\ref{the:SUIRchar}. 
	
\SUIRchar*

\subsection{Recognizing Simultaneous Unit Interval Graphs in Polynomial Time}
\label{sec:recognition}
With Theorems~\ref{the:simEnumChar} and~\ref{the:SUIRchar} we can now efficiently recognize simultaneous unit interval graphs.

\begin{restatable}{theorem}{SUIRalg}
  Given a sunflower graph $\mathcal G=(G_1,\dots,G_k)$, we can decide in $O(|V|\cdot|E|)$ time, whether $\mathcal G$ is a simultaneous unit interval graph, where $(V,E)=G^\star=G_1\cup\dots\cup G_k$. If it is, then we also provide a simultaneous unit interval representation in the same time. 
\end{restatable}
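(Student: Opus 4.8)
The plan is to combine the two structural theorems of the paper into a two-phase algorithm: first enumerate (an implicit representation of) the possible simultaneous enumerations via Theorem~\ref{the:simEnumChar}, and then, for a suitably chosen simultaneous enumeration, test the conflict condition of Theorem~\ref{the:SUIRchar} and, if it holds, produce a representation by running through the scouting/zipping/sandwich pipeline of Lemmas~\ref{lem:leftClosedChar}, \ref{lem:zipping}, \ref{lem:sandwich}. First I would invoke Lemma~\ref{lem:union-connected} to reduce to the case where $G^\star$ is connected (splitting into components costs only linear time and does not increase the $O(|V|\cdot|E|)$ budget, since the components partition both $V$ and $E$). On a connected instance, I would run the linear-time recognition of Theorem~\ref{the:spirRecognition}: if $\mathcal G$ is not even a simultaneous proper interval graph, reject. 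Otherwise it yields, via the PQ-tree intersection, a straight enumeration $\oo_S$ of the shared graph $S$, and, extending each $G_i$ independently with the algorithm of Klav\'ik et al., one concrete simultaneous enumeration $\zeta^{(0)}=(\oo_1,\dots,\oo_k)$.

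The key point is that by Theorem~\ref{the:simEnumChar} every simultaneous enumeration is obtained from $\zeta^{(0)}$ or its reversal by independently reversing independent components and reversible parts. Reversing an independent component or a reversible part $\mathcal C(B)$ changes, for each block $B$ of $S$ at which some component is oriented, only the relative left-to-right direction of the vertices that component aligns, and these directions can be chosen independently across distinct blocks of $S$ and across independent components. The conflict condition of Theorem~\ref{the:SUIRchar} is about pairs of shared vertices $u,v$: a $(u,v)$-chain in one $G_i$ and a $(u,v)$-bar of the same size in some $G_j$. I would argue that the presence of such a conflict depends on the reversal choices only through the partial order $\alpha$ induced by $\zeta$ on $V$, and that — once $\oo_S$ is fixed — each local reversal either is forced (non-loose, i.e. the component's orientation is pinned by an aligned vertex elsewhere), or is genuinely free and, crucially, can be set to the ``least constraining'' direction for every block independently. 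Concretely, I would show that if \emph{some} simultaneous enumeration compatible with $\oo_S$ avoids all conflicts, then a \emph{canonical} one does: for each reversible part / independent component, pick the orientation that makes $\alpha$ as ``spread out'' as possible (a chain/bar of a given type survives a reversal, so the best we can do to kill conflicts is to orient each loose component to match the global direction dictated by $\oo_S$, which is itself forced up to total reversal). Thus it suffices to test the conflict condition for $O(1)$ (in fact, essentially one, up to global reversal) simultaneous enumeration per connected component.

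For the actual conflict test on a fixed $\zeta$, I would implement the scouting sweep of Lemma~\ref{lem:leftClosedChar} directly: process the vertices of $V$ from right to left in the induced partial order $\alpha$ (which we may take to be linear on each $V_i$ after identifying indistinguishable vertices), and for the current vertex $x \in V_i$ compute the vertices $u_i,v_j,w_j$ of Figure~\ref{fig:scoutExtension} for every $j\ne i$, adding the forced ordering $u_i \le v_j$ and maintaining the transitive closure. If this process ever tries to order two copies of a shared vertex, or violates antisymmetry, we have detected a conflict (by the proof of Lemma~\ref{lem:leftClosedChar}) and reject. Otherwise we obtain a left-closed partial order $\tau_0$; the zipping sweep of Lemma~\ref{lem:zipping} then extends it left-to-right to a left-closed \emph{linear} order $\tau$, maintaining the right-closed invariant on the processed prefix $U$; and finally Lemma~\ref{lem:sandwich} gives the sandwich graph $H=(V,E')$ with $E'=\{ux \mid \exists vw\in E\colon v\le_\tau u<_\tau x\le_\tau w\}$ for which $\tau$ is a fine enumeration, so a unit interval representation of $H$ (placing the $i$-th vertex of $\tau$ at position $i/n$, say, stretched so the fine-enumeration consecutivity holds) restricts to the desired simultaneous unit interval representation of $\mathcal G$. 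For the running time: each of the $O(|V|)$ scouting steps examines, for each of the $k$ graphs, the relevant $u_i,v_j,w_j$, which amounts to scanning incident edges / neighborhoods, and maintaining the transitive closure of the added orderings costs $O(|V|)$ amortized per step or $O(|V|^2)$ total; zipping is analogous; building $E'$ is a single scan over $E$ with a sweep over $\tau$, which is $O(|V|\cdot|E|)$ in the worst case. Hence the whole procedure runs in $O(|V|\cdot|E|)$, as claimed.

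\textbf{Main obstacle.} The part I expect to be delicate is the second paragraph: proving that it suffices to test only one simultaneous enumeration (up to reversal) per component, i.e., that there is a canonical ``conflict-minimizing'' choice of reversals and that conflicts for it imply conflicts for all valid $\zeta$. One has to rule out the possibility that avoiding a conflict between one pair $(u,v)$ of shared vertices forces a reversal that creates a conflict for a different pair $(u',v')$. I would handle this by showing that reversing a loose component never removes an existing chain or bar between shared vertices (it only reverses the internal order, but $u,v$ being in a common block of $S$ means the component does not reorder $u$ relative to $v$ in $\oo_S$), so the set of chain-bar conflicts is in fact \emph{independent} of the reversal choices among loose components — only the forced (non-loose) orientations matter, and those are determined by $\oo_S$ up to the single global reversal. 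If that independence claim turns out to need care at blocks where several components are oriented (where orientations are coupled), the fallback is to observe that such a coupled group still has only two global states, and that conflicts depend monotonically on which vertices of $S$ get ``aligned,'' so a greedy per-block choice is optimal.
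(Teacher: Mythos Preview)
Your overall pipeline (reduce to connected, get a base simultaneous enumeration via Theorem~\ref{the:spirRecognition}, describe all simultaneous enumerations via Theorem~\ref{the:simEnumChar}, test the conflict condition of Theorem~\ref{the:SUIRchar}, and then run scouting/zipping/sandwich to output a representation) is exactly the paper's, and your running-time bookkeeping for the last phase is fine. The genuine gap is in your second paragraph, and you correctly flag it as the obstacle --- but both your primary argument and your fallback are wrong.

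The claim that ``the set of chain-bar conflicts is independent of the reversal choices among loose components'' fails. Chain sizes are indeed invariant (a $(u,v)$-chain lives in a single component of some $G_i$ that contains shared vertices from two distinct blocks of $S$, hence is never loose). But bar sizes are not. Take $u,v$ in distinct blocks of $S$ and look at a maximal $(u,v)$-bar in $G_j$ when $u$ and $v$ lie in different components $C_u,C_v$ of $G_j$: the components strictly between $C_u$ and $C_v$ contribute their maximum independent set regardless of orientation, but the contribution from $C_u$ is the longest antichain to the \emph{right} of $u$ inside $C_u$, and symmetrically for $C_v$. If $C_u$ is loose (all its shared vertices in one block), reversing it swaps ``right of $u$'' and ``left of $u$'' and can genuinely change the bar size. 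Worse, the same loose component $C$ can sit as the \emph{left} endpoint component for some pair $(u,v)$ and as the \emph{right} endpoint component for another pair $(u',v')$, so one orientation of $C$ can kill the first conflict while creating the second. There is no monotone or ``most spread out'' canonical choice, and a greedy per-block decision can be forced into contradictions.

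What saves the day --- and what the paper actually does --- is that each potential conflict depends on the reversal state of at most \emph{two} loose components (the ones containing $u$ and $v$ in $G_j$). So the feasibility problem is a 2-SAT instance: one Boolean variable per independent component and per reversible part, and for every pair $(u,v)$, every $j$, and every one of the four reversal combinations of the two endpoint components that yields a bar at least as long as the shortest $(u,v)$-chain, a clause forbidding that combination. Satisfiability is then decided in time linear in the number of clauses, which is $O(|V|^2)$ after restricting to the leftmost/rightmost shared vertex in each component. You should replace the ``canonical enumeration'' paragraph by this 2-SAT argument; the rest of your plan then goes through.
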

{\renewcommand{\proofname}{Proof Sketch} \begin{proof}
  Here we establish polynomial time recognition, and the stated time is proven in the appendix. As discussed earlier, we can assume that $G^\star$ is connected. With Theorem~\ref{the:spirRecognition} we obtain a simultaneous enumeration $\zeta$ of $\mathcal G$, unless $\mathcal G$ is not a simultaneous proper interval graph. 
By Theorem~\ref{the:SUIRchar}, the sunflower graph $\mathcal G$ is a simultaneous unit interval graph if and only if there is a simultaneous enumeration $\eta$ for which $\mathcal G$ has no conflict. In that case $\eta^r$ also has no conflict.
With Theorem~\ref{the:simEnumChar} we have that $\eta$ or $\eta^r$ is obtained from $\zeta$ by reversals of reversible parts and independent components. Hence, we only need to consider simultaneous enumerations obtained that way. 

Since every single graph $G_i$ is proper, it has no conflict and we only need to consider $(u,v)$-conflicts with $u,v\in V(S)$, where $S$ is the shared graph. 
The minimal $(u,v)$-chains for $G_i$ are exactly the shortest paths in $G_i$ and thus independent from reversals.
On the other hand, for the maximal size of $(u,v)$-bars in $G_i$ only the reversals of the two corresponding components $C,D$ of $u,v$ are relevant, while components in-between always contribute their maximum independent set regardless of whether they are reversed. We can thus compute for $i,j\in\{1,\dots,k\}$, $u,v\in V(S)$ and each of the four combinations of reversal decisions (reverse or do not reverse) for the corresponding components $C,D$ of $u,v$, whether they yield a conflict at $(u,v)$.
  We can formulate a corresponding $2$-SAT formula $\mathcal F$:
  For every independent component and every reversible part we introduce a literal that represents whether it is reversed or not. For every combination of two reversal decisions that yields a conflict we add a clause that excludes this combination. If $\mathcal F$ is not satisfiable, then every simultaneous enumeration yields a conflict. Otherwise, a solution yields a simultaneous enumeration without conflict. We obtain a simultaneous unit interval representation by following the construction in Section~\ref{sec:suig_enums}. 
\end{proof}}

\section{Conclusion}
\label{ch:conclusion}

We studied the problem of simultaneous representations of proper
and unit interval graphs.  We have shown that, in the sunflower case,
both simultaneous proper interval graphs and simultaneous unit
intervals can be recognized efficiently.  While the former can be
recognized by a simple and straightforward recognition algorithm, the
latter is based on the three ingredients: 1) a complete
characterization of all simultaneous proper interval representations
of a sunflower simultaneous graph, 2) a characterization of the
simultaneous proper interval representations that can be realized by a
simultaneous unit interval representation and 3) an algorithm for
testing whether among the simultaneous proper interval representations
there is one that satisfies this property.

\subparagraph{Future Work.}
While our algorithm for (sunflower) simultaneous proper interval graphs has optimal linear running time, we leave it as an open problem
whether simultaneous unit interval graphs can also be recognized in linear time.  

Our main open question is about the complexity of sunflower
simultaneous interval graphs.  Jampani and Lubiw~\cite{Jampani2010}
conjecture that they can be recognized in polynomial time for any
number of input graphs.  However, even for three graphs the problem is
still open.

\bibliographystyle{plainurl}
\bibliography{references}

\appendix
\addtocontents{toc}{\protect\setcounter{tocdepth}{1}}
\section{Omitted Proofs from Section~\ref{ch:preliminaries}}

\unionconnected*

\begin{proof}
  Clearly, a simultaneous (proper) interval representation $\mathcal
  R$ of $\mathcal G$ induces a representation for each $\mathcal G_i$.
  Conversely, given simultaneous (proper) interval representations
  $\mathcal R_i$ of $\mathcal G_i$ for $i=1,\dots,l$, we can combine
  them such that all intervals in $\mathcal R_i$ are placed to the
  right of all intervals in $\mathcal R_{i-1}$ for $i=2,\dots,l$ to
  obtain a simultaneous (proper) interval representation $\mathcal R$
  of $\mathcal G$.
\end{proof}

\section{Formal Proofs for Sunflower Proper Interval Graphs}
\label{app:proofs:simulProp}
\subsection{Proof of Lemma~\ref{lem:EnumCombination}}
\label{app:proof:enumCombination}
\enumCombinationLemma*
\begin{proof}
  Assume $\mathcal{R} = (R_1, \dots, R_k)$ is a simultaneous proper interval representation of $\mathcal{G}$ with corresponding straight enumerations $\oo_i = \oo(R_i)$.
  Let \co be a linear order of the vertices $V_S$ according to their left endpoints in $\mathcal{R}$, breaking ties arbitrarily.
  We claim that each $\oo_i$ is compatible with $\co$.
  If $\oo(R_i)$ is not compatible with $\co$, there exist vertices \cOrd{<_\co}{u}{v}\eOrd such that \cOrd{<_{\oo_i}}{B(v, G_i)}{B(u, G_i)}\eOrd.
  By the definition of extracted straight enumeration, this implies that the interval of $v$ has its left endpoint before the interval of $u$ in $R_i$, which contradicts \cOrd{<_\co}{u}{v}\eOrd.

  \begin{figure}[tb]
    \centering
    \includegraphics[width=\linewidth]{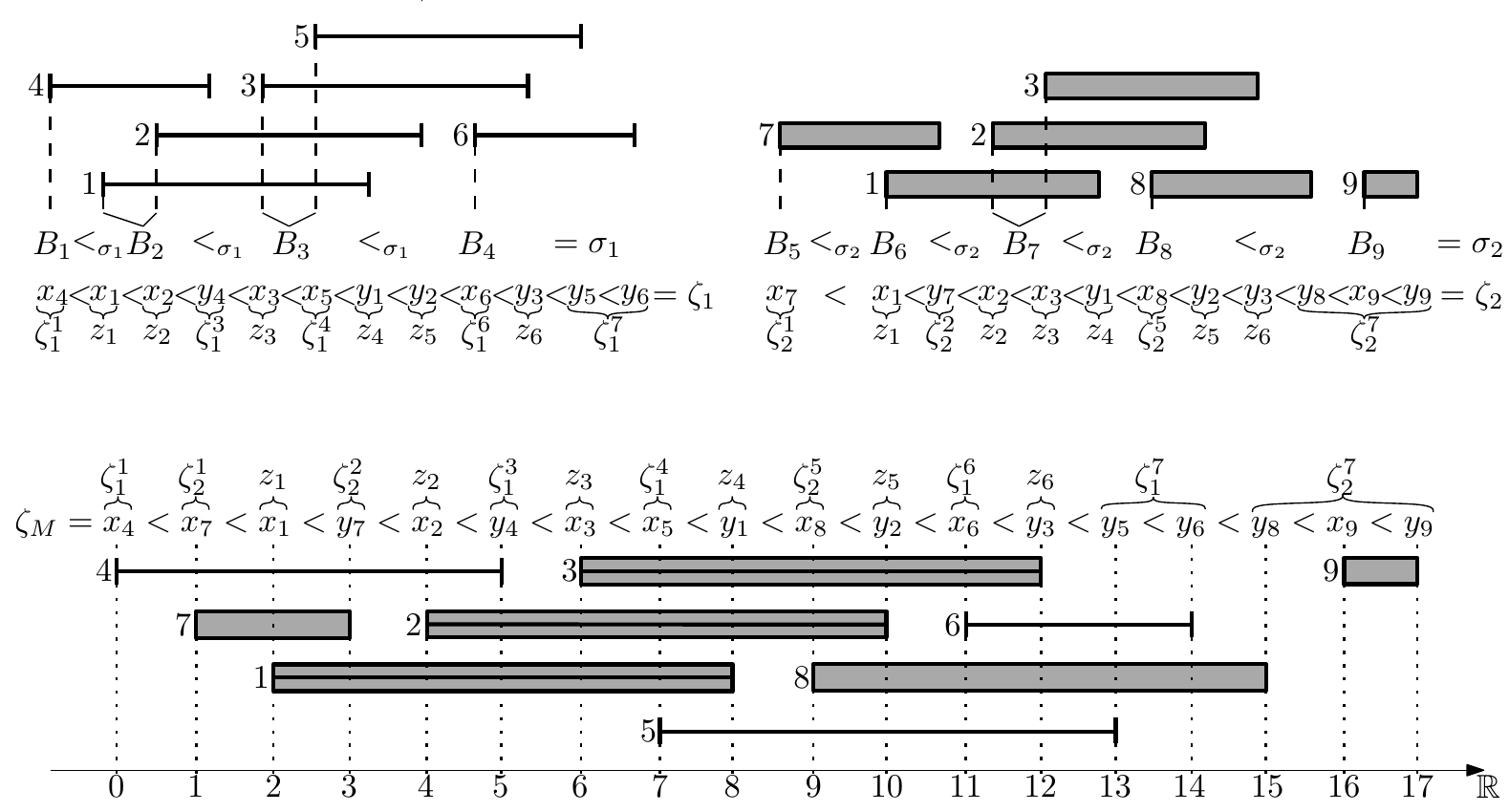}
    \caption{An illustration of the construction in the proof of Lemma~\ref{lem:EnumCombination}.
    Top: Graph $G_1$ on the left, Graph $G_2$ on the right. They share the clique $\{1, 2, 3\}$ and their straight enumerations $\oo_1, \oo_2$ are compatible with the ordering of shared vertices $1 < 2 < 3$.
    The interval representations of $G_1$ and $G_2$ are shown as visualization and are not a determining factor in the construction of $\co_1$ and $\co_2$.
    Bottom: The constructed simultaneous interval representation $\mathcal{R} = (R_1, R_2)$ with the linear order $\oo_M$ from which $\mathcal{R}$ is derived. The representation $R_1$ is illustrated by lines while $R_2$ is illustrated by boxes.}
    \label{simul:fig:megaOrder}
  \end{figure}

  Conversely, we show how to construct a simultaneous proper interval
  representation $\mathcal{R}$ of $\mathcal{G}$ using a linear order \co of
  $V_S$ and straight enumerations $\oo_i$ of each $G_i \in \mathcal{G}$ compatible with \co.
  An illustration of the following construction is given in Figure \ref{simul:fig:megaOrder}.
  For each graph $G_i = (V_i, E_i) \in \mathcal{G}$ with $V_i = \{v_1, \dots,
  v_q\}$, we construct a set of interval endpoints $M_i = \{x_1, y_1, \dots ,
  x_q, y_q \}$ and a linear order $\zeta_i$ of $M_i$ as follows.
  For any two vertices $v_j, v_l \in V_i$ we define $\co_i$ as
  \begin{align}
    \label{simul:function:intervalOrder}
    \text{\cOrd{\leq_{\co_i}}{x_j}{x_l}\eOrd and \cOrd{\leq_{\co_i}}{y_j}{y_l}\eOrd} \Leftrightarrow \; &
    \begin{cases}
      \text{\cOrd{<_{\oo_i}}{B(v_j, G_i)}{B(v_l, G_i)}\eOrd or}\\
      B(v_j, G_i) = B(v_l, G_i) \text{ and } j \leq l
    \end{cases}
    \\ \nonumber \\
    \label{simul:function:intervalIntersection}
    \begin{split}
      \text{\cOrd{\leq_{\co_i}}{x_j}{y_l}\eOrd} \Leftrightarrow \; &\text{\cOrd{\leq_{\oo_i}}{B(v_j, G_i)}{B(v_l, G_i)}\eOrd or } v_j \in N[v_l]
      \\
      \text{\cOrd{\leq_{\co_i}}{y_l}{x_j}\eOrd} \Leftrightarrow \; &\text{\cOrd{<_{\oo_i}}{B(v_l, G_i)}{B(v_j, G_i)}\eOrd and } v_j \notin N[v_l].
    \end{split}
  \end{align}
  Since $\oo_i$ is compatible with \co, it is clear that the linear order $\co'$ = \cOrd{<_{\co'}}{z_1}{\cdots}{z_{2p}}\eOrd of $\{x_j \in M_i, y_j \in M_i : w_j \in V_S\}$ extending to $\co_i$ is the same for all $i \in \{1, \dots, k\}$.
  Therefore, each linear order $\co_i$ can be represented as $\co_i $ = \cOrd{\conc}{\co_i^1}{z_1}{\co_i^2}{\cdots}{z_{2p}}{\co_i^{2p+1}}\eOrd, which allows us to combine all $\co_i$ as follows.
  For the union set $M = \bigcup_{i = 1}^k M_i$, which contains the interval endpoints of vertices in $V_S$ only once, we construct the linear order $\co_M$ = \cOrd{\conc}{\co_1^1}{\cdots}{\co_k^1}{z_1}{\co_1^2}{\cdots}{\co_k^{2p+1}}\eOrd.
  Additionally, for each $m \in M$ we define $\#_M(m)$ as the number of elements before $m$ in $\co_M$.
  That is, $\#_M(m) = |\{m' \in M : \text{\cOrd{<_{\co_M}}{m'}{m}\eOrd}\}|$.
  This allows us to construct the distinct interval representations $R_i = \{[\#_M(x_v), \#_M(y_v)] : v \in V_i \}$, which yields the simultaneous interval representation $\mathcal{R} = (R_1,\dots,R_k)$.
  It remains to show that each $R_i$ is a proper interval representation and that it is an interval representation of the respective graph $G_i$.

  Suppose for sake of contradiction that there exists an interval representation $R_i \in \mathcal{R}$ that is not a proper interval representation.
  Then there exist two vertices $v_j \neq v_l \in V_i$ such that the interval of $v_j$ properly contains the interval of $v_l$ in $R_i$.
  By construction of $R_i$ this implies \cOrd{<_{\co_i}}{x_j}{x_l}\eOrd and \cOrd{<_{\co_i}}{y_l}{y_j}\eOrd.
  Using the construction rules (\ref{simul:function:intervalOrder}), this is only possible if $B(v_j, G_i) = B(v_l, G_i)$ and $j = l$, which contradicts the assumption $v_j \neq v_l$.

  To show that each interval representation $R_i \in \mathcal{R}$ is a representation of $G_i \in \mathcal{G}$, we show that $R_i$ models exactly the edges of $G_i$.
  For any two vertices $v_j,v_l \in V_i$ with $\{v_j,v_l\} \notin E_i$ we have $v_j \notin N[v_l]$ and thus $B(v_j, G_i) \neq B(v_l, G_i)$.
  Without loss of generality, assume \cOrd{<_{\oo_i}}{B(v_j, G_i)}{B(v_l, G_i)}\eOrd.
  By (\ref{simul:function:intervalIntersection}) it follows that \cOrd{<_{\co_M}}{y_j}{x_l}\eOrd, which means that the interval of $v_j$ ends before the interval of $v_l$ begins in $R_i$.
  Conversely, let $v_j,v_l \in V_i$ share an edge $\{u_j,v_l\} \in E_i$.
  By (\ref{simul:function:intervalIntersection}) it follows that \cOrd{\leq_{\co_M}}{x_j}{y_l}\eOrd and \cOrd{\leq_{\co_M}}{x_l}{y_j}\eOrd.
  This means that both intervals begin before either of them ends in $R_i$.
  Therefore each $R_i \in \mathcal{R}$ is a proper interval representation of $G_i \in \mathcal{G}$.  Note that by construction $\oo(R_i) = \oo_i$.
\end{proof}

\subsection{Proof of Theorem~\ref{the:simulEnums}}
\simulEnumerationstheorem*
\begin{proof}
  Let $S = (V_S, E_S)$ be the shared graph of $\mathcal{G}$.
  If $\mathcal{G}$ is a simultaneous proper interval graph, there exist straight enumerations $\oo_i$ of $G_i$ that are compatible with a linear order \co of $V_S$ by Lemma~\ref{lem:EnumCombination}.
  By definition of compatible, for any vertices \cOrd{\leq_\co}{u}{v}\eOrd we
  have \cOrd{\leq_{\oo_i}}{B(u,G_i)}{B(v,G_i)}\eOrd for each $i \in \{1,\dots, k\}$.
  Since \co and each straight enumerations are linear orders, we have \cOrd{\leq_{\oo_i}}{B(u,G_i)}{B(v,G_i)}\eOrd $\Leftrightarrow$ \cOrd{\leq_{\oo_j}}{B(u,G_j)}{B(v,G_j)}\eOrd for any $i,j \in \{1, \dots, k\}$.
  As a result, $(\oo_1, \dots, \oo_k)$ is a simultaneous enumeration of $\mathcal{G}$.

  Conversely, let $(\oo_1, \dots, \oo_k)$ be a simultaneous enumeration of $\mathcal{G}$.
  Then $\co_P = \{(u,v) \in V_S \times V_S \mid \exists i: B(u,G_i) <_{\oo_i} B(v, G_i)\}$ is a partial ordering of $V_S$.
  Let \co be a linear order of $V_S$ extending $\co_P$.
  By definition of simultaneous enumerations and construction of $\co_P$, each straight enumeration $\oo_i$ is compatible with \co.
  Then $\mathcal{G}$ is a simultaneous proper interval graph by Lemma~\ref{lem:EnumCombination}.
  Additionally, the construction in Lemma~\ref{lem:EnumCombination} yields a simultaneous proper interval representation $\mathcal{R} = (R_1, \dots, R_k)$ with $\oo(R_i) = \oo_i$ for each $R_i \in \mathcal{R}$.
\end{proof}
\subsection{Reversal of Loose Components}
\label{app:lemma:onlyLooseReversals}
\begin{lemma}
  \label{lem:SPIR_seOnly1blockRev}
  Let $\mathcal G = (G_1,\dots,G_k)$ be a sunflower proper interval graph with shared graph $S$ and simultaneous enumeration $(\oo_1, \dots, \oo_k)$.
  Then for any simultaneous enumeration $(\oo_1', \dots, \oo_k')$ of $\mathcal{G})$, any straight enumeration $\oo_i'$ can be obtained from $\oo_i$ or $\oo_i^r$ through reversal of loose components.
\end{lemma}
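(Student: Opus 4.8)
The plan is to fix a graph index $i$ and compare the straight enumeration $\oo_i$ with another straight enumeration $\oo_i'$ coming from a simultaneous enumeration $(\oo_1',\dots,\oo_k')$, viewing both as orderings of the blocks of $G_i$. By Proposition~\ref{proper:theorem:uniqueEnumeration}(iii), within each connected component $C$ of $G_i$ the straight enumeration is unique up to reversal, so $\oo_i'$ restricted to $C$ is either the restriction of $\oo_i$ to $C$ or its reverse; the only freedom is therefore (a) in which relative order the components of $G_i$ appear and (b) which of the two orientations each component takes. After possibly replacing $\oo_i$ by $\oo_i^r$, I want to argue that the relative order of components is unchanged and that only loose components can flip orientation.

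The key is to use the simultaneous-enumeration constraint together with connectivity of $\bigcup_{\mathcal G}$ via Lemma~\ref{simul:lemma:fixSubgraphOrdering}: across all simultaneous enumerations of $\mathcal G$, the induced straight enumeration $\oo_S$ of the shared graph $S$ is unique up to full reversal. First I would handle component order. Suppose two components $C,C'$ of $G_i$ appear in one order in $\oo_i$ and the opposite order in $\oo_i'$. Since $\bigcup_{\mathcal G}$ is connected, $C$ and $C'$ must each contain a vertex of $S$ (otherwise their vertices never interact with the rest and a separate argument — essentially Lemma~\ref{lem:union-connected} — shows they can be slid freely, but in the connected-union setting every component of every $G_i$ meets $S$, since $S$ is what ties the graphs together). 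Pick $u\in V(S)\cap V(C)$ and $v\in V(S)\cap V(C')$. The relative order of $B(u,S)$ and $B(v,S)$ in $\oo_S$ is forced by Lemma~\ref{simul:lemma:fixSubgraphOrdering}, and since $S$ is an induced subgraph of $G_i$, distinct blocks of $S$ sit in distinct blocks of $G_i$; hence the order of $B(u,G_i)$ and $B(v,G_i)$, and therefore of $C$ and $C'$, is determined up to the one global reversal. So after normalizing by $\oo_i^r$ the component order agrees with that of $\oo_i$.

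It then remains to show a component $C$ can change orientation between $\oo_i$ and $\oo_i'$ only if it is loose. If $C$ is not loose, it contains two vertices $u,v\in V(S)$ lying in different blocks of $S$, hence — again because $S$ is induced in $G_i$ — in different blocks of $C$, so $C$ \emph{aligns} $u,v$. The relative order of $B(u,G_i)$ and $B(v,G_i)$ inside $\oo_i$ is then dictated by the relative order of $B(u,S)$ and $B(v,S)$ in $\oo_S$, which by Lemma~\ref{simul:lemma:fixSubgraphOrdering} is fixed up to the global reversal we have already fixed; reversing $C$ alone would flip $u,v$ relative to the rest of $S$ without flipping everything, contradicting uniqueness of $\oo_S$ up to full reversal. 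Hence non-loose components keep their orientation, and every discrepancy between $\oo_i$ (or $\oo_i^r$) and $\oo_i'$ is a reversal of loose components, as claimed. The main obstacle I anticipate is the careful bookkeeping that blocks of $S$ inject into blocks of $G_i$ and that ``reversing a single component'' genuinely contradicts the up-to-reversal uniqueness of $\oo_S$ rather than being absorbed into it — i.e., ruling out the degenerate case where $\oo_S$ itself is symmetric enough that a partial reversal looks like a full one; this is handled by noting that a partial reversal of $G_i$ fixes the vertices of $S$ outside $C$ while moving those inside $C$, which for a non-loose $C$ changes at least one consecutivity in $\oo_S$.
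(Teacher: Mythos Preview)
Your proposal is correct and follows essentially the same route as the paper: reduce to the connected case, invoke Lemma~\ref{simul:lemma:fixSubgraphOrdering} to fix $\oo_S$ up to a single global reversal, observe that each component of $G_i$ meets $S$, and conclude that the component order is forced and only loose components may flip. The only cosmetic difference is that the paper first passes through Theorem~\ref{the:simulEnums} to obtain actual representations before citing Lemma~\ref{simul:lemma:fixSubgraphOrdering} (which is phrased for representations rather than enumerations); your worry about a ``degenerate symmetric $\oo_S$'' is unnecessary, since the argument only needs that a non-loose $C$ contains shared vertices from two distinct blocks of $S$, whose relative order is already pinned once the global reversal is fixed.
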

\begin{proof}
  By Theorem~\ref{the:simulEnums}, let $\mathcal{R} = (R_1, \dots, R_k)$ and $\mathcal{R'} = (R'_1, \dots, R'_k)$ be simultaneous proper interval representations of $\mathcal{G}$ with $R_i = \oo_i$ and $R'_i = \oo'_i$.
  Then by Lemma~\ref{simul:lemma:fixSubgraphOrdering} $\oo_S(\mathcal{R'})$ is either equal to $\oo_S(\mathcal{R})$ or $\oo_S(\mathcal{R})^r$.
  We now show that $\oo_i'$ can be obtained from $\oo_i$ through reversal of loose components if $\oo_S(\mathcal{R'}) = \oo_S(\mathcal{R})$.

  For each connected component of $G_i$, its straight enumeration is unique up to reversal by Theorem~\ref{proper:theorem:uniqueEnumeration}.
  This means that $\oo_i$ and $\oo'_i$ can only differ through reversal and reordering of straight enumerations of individual components of $G$.
  Since we can assume that the union graph $\bigcup_{\mathcal{G}}$ is connected by Lemma~\ref{lem:union-connected}, each component of $G$ contains at least one vertex and thus block of $S$.
  Since we have $\oo_S(R_i) = \oo_S(R'_i)$, the order of blocks of $S$ in $\oo_i$ and $\oo'_i$ is identical.
  This means that the straight enumerations of components of $G$ are ordered identically in $\oo_i$ and $\oo'_i$ and that no straight enumeration containing vertices from more than one block of $S$ is reversed.
  As a result, the only difference between $\oo_i$ and $\oo'_i$ is the reversal of loose components of $G$.

  If we have $\oo_S(\mathcal{R'}) = \oo_S(\mathcal{R})^r$ the same arguments can be used to show that $\oo'_i$ is obtainable from $\oo^r_i$ through reversal of loose components.
\end{proof}

\subsection{Dependent Orientation of Loose Components}
\label{app:lemma:partOrientation}
\begin{lemma}
  \label{lem:alignedMeansDependent}
  Let $\mathcal{G} = (G_1, \dots, G_k)$ be a sunflower proper interval graph with shared graph $S$ and simultaneous enumeration $(\oo_1, \dots, \oo_k)$.
  Let $C \subseteq G_i$ and $C' \subseteq G_j$ be components oriented at a block $B$ of $S$.
  Then there exist two vertices $u,v \in B$ such that $B(u,C) <_{\oo_i} B(v,C)$ and $B(u,C') <_{\oo_j} B(v,C')$.
\end{lemma}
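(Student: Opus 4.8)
The statement to prove is Lemma~\ref{lem:alignedMeansDependent}: if two components $C\subseteq G_i$ and $C'\subseteq G_j$ are both oriented at a block $B$ of the shared graph $S$, then there are vertices $u,v\in B$ whose blocks inside $C$ and inside $C'$ are ordered \emph{consistently}, i.e.\ $B(u,C)<_{\oo_i}B(v,C)$ and $B(u,C')<_{\oo_j}B(v,C')$ (rather than one ordering being the reverse of the other). The plan is to extract, from the fact that $C$ and $C'$ are oriented at $B$, a pair of shared vertices split by each; then to use the single global linear order $\co$ on $V_S$ guaranteed by a simultaneous enumeration (Theorem~\ref{the:simulEnums} / Lemma~\ref{lem:EnumCombination}) to force the two splittings to agree in direction.

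\textbf{Step 1: Produce witnesses.} Since $C$ is oriented at $B$, by definition $C$ aligns two vertices $a,b\in B$, meaning $B(a,C)\neq B(b,C)$; without loss of generality $B(a,C)<_{\oo_i}B(b,C)$. Likewise $C'$ aligns two vertices $a',b'\in B$ with $B(a',C')<_{\oo_j}B(b',C')$. These four vertices all lie in the single block $B$ of $S$, hence they are pairwise indistinguishable in $S$ and in particular form a clique in $S$ (and each of $G_i,G_j$, since $S$ is induced). The goal is to show we can take $u\in\{a,b\}$ and $v\in\{a,b\}$ — actually $u,v$ among $a,b,a',b'$ — witnessing both strict inequalities in the same direction.

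\textbf{Step 2: Use that $B$ is a block of $S$ but its vertices spread out in $C$.} The key point is that within a connected proper interval graph the straight enumeration is unique up to reversal (Proposition~\ref{proper:theorem:uniqueEnumeration}), so the blocks of $C$ restricted to $V_S\cap V(C)$ inherit a well-defined order up to reversal; and since $C$ aligns $a,b$, at least two distinct blocks of $C$ meet $B$. Now observe that the vertices of $B$ occupy a \emph{consecutive} set of blocks in $\oo_i$ (they are mutually adjacent, so this follows from the straight-enumeration property applied inside $C$, or more simply from the fact that a block of $S$ is a clique and cliques are consecutive in any straight enumeration). Consider the ``leftmost'' and ``rightmost'' vertices of $B$ in $C$: pick $u_C\in B$ minimizing and $v_C\in B$ maximizing the position of $B(\cdot,C)$ in $\oo_i$; since $C$ aligns two vertices of $B$ we get $B(u_C,C)<_{\oo_i}B(v_C,C)$. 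Do the same in $C'$ to get $u_{C'},v_{C'}$ with $B(u_{C'},C')<_{\oo_j}B(v_{C'},C')$.

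\textbf{Step 3: Tie them together via the global order $\co$.} By Theorem~\ref{the:simulEnums} a simultaneous enumeration yields (via Lemma~\ref{lem:EnumCombination}, or by the argument in the proof of Theorem~\ref{the:simulEnums}) a single linear order $\co$ on $V_S$ compatible with every $\oo_\ell$: for $s,t\in V_S$, $s<_\co t \Rightarrow B(s,G_\ell)\le_{\oo_\ell}B(t,G_\ell)$ for all $\ell$, and moreover $B(s,G_\ell)<_{\oo_\ell}B(t,G_\ell)\Rightarrow s<_\co t$ (this is exactly the content of being a simultaneous enumeration together with how $\co$ is built as a linear extension of $\co_P$). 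Applying the second implication inside $G_i$ with the component $C$: since $C$ is a union of blocks of $G_i$ and $B(u_C,C)<_{\oo_i}B(v_C,C)$ forces $B(u_C,G_i)<_{\oo_i}B(v_C,G_i)$ (distinct blocks of $C$ remain distinct blocks of $G_i$ because vertices in different blocks of $C$ have different neighborhoods already in $C\subseteq G_i$), we get $u_C<_\co v_C$; similarly $u_{C'}<_\co v_{C'}$. So in the \emph{single} order $\co$ restricted to $B$, both pairs point the same way. Now set $u$ to be whichever of $u_C,u_{C'}$ is smaller in $\co$ is not quite what we need directly; instead argue: all four vertices $u_C,v_C,u_{C'},v_{C'}$ lie in $B$, and $\co$ linearly orders them. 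I claim $u:=$ the $\co$-smallest of $u_C,u_{C'}$ and $v:=$ the $\co$-largest of $v_C,v_{C'}$ work — but I must also check these witness \emph{strict} inequalities in both $C$ and $C'$, which may fail if, say, $u_C=v_{C'}$. The cleaner route: show directly that $u_C<_\co v_C$ and $u_{C'}<_\co v_{C'}$ imply $B(u_C,C)<_{\oo_i}B(v_C,C)$ \emph{and} $B(u_C,C')\le_{\oo_j}B(v_C,C')$ (compatibility), and symmetrically; then a short case analysis on which of the $\le$ are actually $<$ — using that $C$ splits \emph{some} pair of $B$ into distinct $C$-blocks and so does $C'$ — pins down a common pair $u,v$ realizing both strict inequalities. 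Concretely, because $\co|_B$ is a linear order and $C$ has at least two distinct blocks meeting $B$, there is a consecutive-in-$\co$ pair $s<_\co t$ in $B$ with $B(s,C)\ne B(t,C)$ (refine: the $C$-block boundary within $B$ must fall somewhere along $\co|_B$); by compatibility $B(s,C)<_{\oo_i}B(t,C)$. Likewise a consecutive-in-$\co$ pair $s'<_\co t'$ with $B(s',C')\ne B(t',C')$. Take $u$ as the $\co$-minimum of $\{s,s'\}$-downset endpoint and $v$ the $\co$-maximum — finalize by noting that for any $\co$-ordered pair, compatibility gives the non-strict inequality in the \emph{other} component for free, and we chose the pair to straddle the genuine block boundary in its own component, giving strictness there; taking $u:=\min_\co(s,s')$ and $v:=\max_\co(t,t')$ then straddles \emph{both} boundaries, hence is strict in both.

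\textbf{Main obstacle.} The conceptually easy part is getting the \emph{non-strict} consistency from the single global order $\co$; the delicate part is upgrading to \emph{strict} inequalities in both components simultaneously, i.e.\ exhibiting one pair $u,v\in B$ that lands on the wrong side of the $C$-block boundary \emph{and} the $C'$-block boundary at once. The reason this is possible — and where the proof really uses that $B$ is a \emph{single} block of $S$ — is that all of $B$ is $\co$-comparable in one line, so the two boundaries are just two cut points on that line, and taking the pair $(\min_\co, \max_\co)$ of the union of the two boundary-straddling pairs straddles both. I would spell out exactly that cut-point argument, being careful that $\co|_B$ is genuinely a total order (it is, as a restriction of the linear extension $\co$ of $\co_P$), and that "distinct block in $C$" upgrades to "distinct block in $G_i$" so that compatibility of $\oo_i$ with $\co$ applies verbatim.
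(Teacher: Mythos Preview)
Your approach is correct and reaches the same conclusion as the paper, but it takes a slightly different route. You detour through the global compatible linear order $\co$ on $V_S$ (obtained from Theorem~\ref{the:simulEnums}) and then use a cut-point argument on $\co|_B$: find $\co$-consecutive pairs straddling a $C$-block boundary and a $C'$-block boundary, then take $u=\min_\co(s,s')$, $v=\max_\co(t,t')$ and use compatibility to get both strict inequalities. This works, and your identification of the ``main obstacle'' (upgrading $\le$ to $<$ simultaneously in both components) is exactly the right place to focus.

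The paper's proof is more direct and self-contained: it never invokes $\co$ but uses the raw definition of a simultaneous enumeration. It picks $s,t$ as the $\oo_i$-leftmost and $\oo_i$-rightmost vertices of $B$ in $C$, and $s',t'$ analogously for $C'$ in $\oo_j$, and then shows that $B(s,C)\cap B(s',C')$ and $B(t,C)\cap B(t',C')$ are each nonempty (in fact each contains one of the two candidate extremal vertices). The argument is a two-line case split: if $s'\notin B(s,C)$ then by leftmost-ness $B(s,C)<_{\oo_i}B(s',C)$, whence the simultaneous-enumeration implication gives $B(s,C')\le_{\oo_j}B(s',C')$, and leftmost-ness of $s'$ forces equality, so $s\in B(s',C')$. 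Your $\co$-based argument ultimately encodes the same comparison, but the paper's version avoids the extra machinery and the exploratory case analysis; it also sidesteps the boundary-pair refinement you needed, since working with the extremal blocks directly makes strictness automatic. Either way, your proof is sound once the dust settles.
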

\begin{proof}
  Let $s,t \in B$ be from the ``leftmost'' and ``rightmost'' block of $\oo_i$
  that contain vertices of $B$, respectively, i.e. for all $x \in B $ we have $B(s,C) \leq_{\oo_i} B(x,C) \leq_{\oo_i} B(t,C)$.
  Let $s',t' \in B$ be analogous vertices for $\oo_j$ and $C'$.
  Since $C$ and $C'$ are oriented at $B$ it follows that $B(s,C) \neq B(t,C)$ and $B(s',C') \neq B(t',C')$.
  This means that two vertices $u \in B(s, C) \cap B(s', C')$ and $v \in B(t, C) \cap B(t', C')$ fulfill the lemma.
  We now show that $u$ and $v$ exist, i.e. $B(s, C) \cap B(s', C') \neq \emptyset$ and $B(t, C) \cap B(t', C') \neq \emptyset$.

  We first show that $B(s,C) \cap B(s',C')$ contains $s$ or $s'$.
  Assuming $s' \notin B(s,C)$ it follows that $B(s,C) <_{\oo_i} B(s',C)$ and by definition of simultaneous enumerations $B(s,C') \leq_{\oo_j} B(s',C')$, which implies $B(s,C') = B(s',C')$.
  We use analogous arguments to show that $B(t,C) \cap B(t',C')$ contains $t$ or $t'$.
  Assuming $t' \notin B(t,C)$ it follows that $B(t',C) <_{\oo_i} B(t,C)$ and $B(t',C') \leq_{\oo_j} B(t,C')$, which implies $B(t',C') = B(t,C')$
\end{proof}

\subsection{Proof of Theorem~\ref{the:simEnumChar}}
\simulEnumOperationsTheorem*
\begin{proof}
  We first show that $\rho'$ is a simultaneous enumeration if it is obtained from $\rho$ through reversal of an independent component or a reversible part or if $\rho'=\rho^r$.
  Let  $i,j \in \{1, \dots, k\}$ and $u,v \in V_S$ with $B(u,G_i)\ne B(v,G_i)$. Since $\rho$ is a simultaneous enumeration, we have $B(u,G_i)<_{\sigma_i} B(v,G_i)\Rightarrow B(u,G_j)\le_{\sigma_j} B(v,G_j)$ and $B(u,G_i)>_{\sigma_i} B(v,G_j)\Rightarrow B(u,G_j)\ge_{\sigma_j} B(v,G_j)$.
  Let $\rho'=\rho^r$. Then we have $B(u,G_i)<_{\sigma'_i} B(v,G_j)\Rightarrow B(u,G_i)>_{\sigma_i} B(v,G_j)\Rightarrow B(u,G_j)\ge_{\sigma_j} B(v,G_j) \Rightarrow B(u,G_j)\le_{\sigma'_j} B(v,G_j)$.
  Next let $\rho'$ be obtained by a reversal of an independent component $C$ in $G_i$. Then we have for $x,y\in V_S\cap V(C)$ that $B(x,G_i)=B(y,G_i)$. For $u,v\in V_S$ with $\{u,v\}\not\subseteq V(C)$ we have for $j\in\{1,\dots,k\}$ that $B(u,G_j)\le_{\rho_j} B(v,G_j)\Leftrightarrow (u,G_j)\le_{\rho'_j} B(v,G_j)$ and thus $\rho'$ is also a simultaneous enumeration.
  Finally let $\rho'$ be obtained by reversal of a reversible part $\mathcal C(B)$.
  For $u,v\in V_S $ with $\{u,v\}\not\in V(B)$ we have as in the previous case for $j\in\{1,\dots,k\}$ that $B(u,G_j)\le_{\rho_j} B(v,G_j)\Leftrightarrow (u,G_j)\le_{\rho'_j} B(v,G_j)$.
  For $u,v\in V(B)$ we have for $i\in\{1,\dots,k\}$ with $B(v,G_i)\ne B(u,G_i)$ that $B(u,G_i)<_{\sigma_i} B(v,G_j)\Leftrightarrow B(u,G_i)>_{\sigma'_i} B(v,G_i)$.
  For $i,j\in\{1,\dots,k\}$ with $B(v,G_i)\ne B(u,G_i)$ and $B(v,G_j)\ne B(u,G_j)$, we obtain $B(u,G_i)<_{\sigma'_i} B(v,G_i)\Rightarrow B(u,G_i)>_{\sigma_i} B(v,G_i)\Rightarrow B(u,G_j)\le_{\sigma_j} B(v,G_j) \Rightarrow B(u,G_j)>_{\sigma'_j} B(v,G_j)$. We can conclude tuple $\rho'$ is also a simultaneous enumeration.

  It remains to show that every simultaneous enumeration $\rho'$ can actually be obtained from $\rho$ or $\rho^r$ through the provided reversals.
  By Lemma~\ref{lem:SPIR_seOnly1blockRev} we obtain $\rho'$ from $\rho$ or $\rho^r$ by reversal of loose components. Without loss of generality assume $\rho'$ can be obtained from $\rho$ by reversal of loose components.
  First note that the order of two vertices $u,v\in V_S$ by $\sigma_i$ with $i\in\{1,\dots,k\}$ and $B(u,G_i)\ne B(v,G_i)$ is affected by the reversal of the component $C$ in $G_i$ containing $u$ and $v$ and by no other reversal.
  Assume we have two components $C,C'$ of $G_i,G_j$ oriented at a block $B$ of $S$ where $C$ is reversed and $C'$ is not. By Lemma~\ref{lem:alignedMeansDependent} we have two vertices $u,v\in B$ such that $B(u,C) <_{\sigma_i} B(v,C)$ and $B(u,C') <_{\sigma_j} B(v,C')$. Since $C$ is reversed and $C'$ is not, we obtain $B(u,C) >_{\sigma'_i} B(v,C)$ and $B(u,C') <_{\sigma'_j} B(v,C')$ which contradicts $\rho'$ being a simultaneous enumeration.
  This implies, that for every block $B$ of $S$ either all components oriented at $B$ or none of them are reversed.
  If one of them is not loose, this implies they are all not contained in a reversible part. If they are all loose, then the reversal of all of them is just the reversal of the reversible part at $B$.
  Hence, we actually only reversed independent components and reversible parts.
\end{proof}

\section{Formal Proofs for Sunflower Unit Interval Graphs}
\subsection{Proof of Lemma~\ref{lem:Hsandwitch}}
\Sandwich*
\begin{proof}
  Given a simultaneous unit interval representation $R$ of $\mathcal G$ that realizes $\zeta$, one obtains $H$ as the intersection graph of all unit intervals in $R$ with a fine enumeration $\sigma$ compatible to $\zeta$ and thus with $\alpha\subseteq \sigma$.
  On the other hand, a unit interval representation of $H$ corresponding to a fine enumeration $\sigma\supseteq\alpha$ induces unit interval representations for $G_1,\dots,G_k$ that correspond to $\zeta_1,\dots,\zeta_k$ where $S$ is represented in the same way. 
\end{proof}

\subsection{Implications from the Forbidden Configuration}
\fourVertexCondition*
\begin{proof}
  The condition for three vertices provided of Theorem~\ref{the:3vertexCondition} (\emph{3-vertex condition}) consists of two special cases of the condition for four vertices $v$, $u$, $x$, $w$ of this corollary (\emph{4-vertex condition}) where $v=u$ or $x=w$ while the other three vertices are distinct.  
  On the other hand, if the 4-vertex condition is not met for $v,u,x,w$, then we have either $vx\in E$ or $vx\not\in E$. In the first case the 3-vertex condition is violated by $v,u,x$ and in the second case it is violated by $v,x,w$. Hence, the 3-vertex condition and the 4-vertex condition are equivalent. By Theorem~\ref{the:3vertexCondition} a vertex ordering is a fine enumeration if and only the 4-vertex condition is satisfied. 
\end{proof}


\begin{restatable}{lemma}{HTExtensionStep}
  \label{lem:refHTExtensionStep}
  Let graph $H=(V',E')$ have fine enumeration $\eta$ and let $u,v,w,x\in V'$ with $u\le_\eta v<_\eta w \le_\eta x$.
  Then we have  ~~
  \begin{inparaenum}[(i)]
  \item \label{itm:HTExtensionStep_1} $ux\in E'\Rightarrow vw\in E'$ ~~and~~
  \item \label{itm:HTExtensionStep_2} $vw\not\in E'\Rightarrow ux\not\in E'$.
  \end{inparaenum}
\end{restatable}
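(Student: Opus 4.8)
The plan is to obtain both statements directly from the four-vertex characterization of fine enumerations in Corollary~\ref{cor:4vertexCondition}. First I would observe that (ii) is precisely the contrapositive of (i): with the vertices $u,v,w,x$ and the hypothesis $u\le_\eta v<_\eta w\le_\eta x$ fixed, the implication ``$vw\notin E'\Rightarrow ux\notin E'$'' is logically equivalent to ``$ux\in E'\Rightarrow vw\in E'$''. Hence it suffices to establish (i).

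To prove (i), I would argue by contradiction. Suppose $ux\in E'$ but $vw\notin E'$. From $u\le_\eta v<_\eta w\le_\eta x$ we have in particular the non-decreasing chain $u\le_\eta v\le_\eta w\le_\eta x$. Now apply Corollary~\ref{cor:4vertexCondition}, matching its quadruple (there written with $v\le_\sigma u\le_\sigma x\le_\sigma w$, $vw\in E$, $ux\notin E$) to our vertices via: its first vertex $\mapsto u$, its second $\mapsto v$, its third $\mapsto w$, its fourth $\mapsto x$. Then its ordering condition becomes $u\le_\eta v\le_\eta w\le_\eta x$, its ``outer'' edge condition becomes $ux\in E'$, and its ``inner'' non-edge condition becomes $vw\notin E'$ --- so all of its hypotheses hold. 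This is exactly the forbidden configuration, contradicting the assumption that $\eta$ is a fine enumeration of $H$. Therefore $vw\in E'$, which is (i); statement (ii) follows by contraposition.

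The only point needing a little care is that Corollary~\ref{cor:4vertexCondition} is stated with non-strict inequalities, so the possible coincidences $u=v$ or $w=x$ are harmless: if, e.g., $u=v$ the configuration degenerates to the three-vertex condition ($v<_\eta w\le_\eta x$ with $vx\in E'$ and $vw\notin E'$), which is still covered. I do not anticipate a genuine obstacle here; the entire content of the lemma is a direct specialization of the already-established characterization, and the only real task is to line up the variable names correctly with those of Corollary~\ref{cor:4vertexCondition}.
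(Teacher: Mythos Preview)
Your proposal is correct and essentially the same as the paper's proof: both assume $ux\in E'$ and $vw\notin E'$ and derive a contradiction from the forbidden configuration. The only cosmetic difference is that the paper applies the three-vertex condition twice (first to get $vx\in E'$ from $ux\in E'$, then $vw\in E'$ from $vx\in E'$) instead of invoking the four-vertex Corollary~\ref{cor:4vertexCondition} in one shot.
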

\begin{proof}
  Assume one of the implications is false. Then we have $ux\in E(H)$ and $vw\not\in E(H)$.
  But since $u,x$ are adjacent, the fine enumeration property implies $v,x$ are adjacent and thus also $v,w$ are adjacent. A contradiction.
\end{proof}

\begin{restatable}{lemma}{EnumExtensionStep}
  \label{lem:refEnumExtensionStep}
  Let graph $H=(V',E')$ have fine enumeration $\eta$ and let $u,v,w,x\in V'$ with $v<_\eta w$ and $u <_\eta x$ such that $vw\in E',ux\not\in E'$.

  Then we have  ~~
  \begin{inparaenum}[(i)]
  \item \label{itm:EnumExtensionStep_1} $v\le_\eta u\Rightarrow w<_\eta x$ ~~and~~
  \item \label{itm:EnumExtensionStep_2} $x\le_\eta w\Rightarrow u<_\eta v$.
  \end{inparaenum}
\end{restatable}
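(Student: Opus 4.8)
The plan is to derive both implications by contradiction directly from the forbidden configuration of Corollary~\ref{cor:4vertexCondition} (equivalently, the 3-vertex condition of Theorem~\ref{the:3vertexCondition}). In both cases one starts from the common hypotheses $v<_\eta w$, $u<_\eta x$, $vw\in E'$ and $ux\notin E'$, negates the claimed strict conclusion, and observes that the resulting chain of $\le_\eta$-inequalities, together with the edge/non-edge pair $vw\in E'$, $ux\notin E'$, is exactly the four-vertex configuration that a fine enumeration forbids.

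For part~($\ref{itm:EnumExtensionStep_1}$) I would assume $v\le_\eta u$ and suppose for contradiction that $w<_\eta x$ fails, i.e.\ $x\le_\eta w$. Combining this with $v\le_\eta u$ and $u<_\eta x$ yields $v\le_\eta u\le_\eta x\le_\eta w$. Since $vw\in E'$ and $ux\notin E'$, this is precisely a quadruple as in Corollary~\ref{cor:4vertexCondition}, contradicting that $\eta$ is a fine enumeration of $H$. Hence $w<_\eta x$.

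Part~($\ref{itm:EnumExtensionStep_2}$) is symmetric: assume $x\le_\eta w$ and suppose $u<_\eta v$ fails, i.e.\ $v\le_\eta u$. Then $v\le_\eta u\le_\eta x\le_\eta w$ again holds (using $u<_\eta x$ and the case hypothesis $x\le_\eta w$), and with $vw\in E'$, $ux\notin E'$ this once more contradicts Corollary~\ref{cor:4vertexCondition}. Hence $u<_\eta v$.

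There is essentially no hard step here; the only point requiring a little care is that Corollary~\ref{cor:4vertexCondition} is stated with non-strict inequalities, so it applies even when some of $v,u,x,w$ coincide (for instance $v=u$ in part~($\ref{itm:EnumExtensionStep_1}$) or $x=w$ in part~($\ref{itm:EnumExtensionStep_2}$)), which is exactly why negating the strict conclusion down to a non-strict inequality suffices to invoke it. Alternatively one could appeal to the 3-vertex condition of Theorem~\ref{the:3vertexCondition} directly, distinguishing the cases $vx\in E'$ and $vx\notin E'$ as in the proof of Corollary~\ref{cor:4vertexCondition}; both routes give the same two-line argument.
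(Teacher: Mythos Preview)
Your proof is correct and essentially identical to the paper's. The paper combines both cases into a single contradiction argument (noting that the failure of either implication yields $v\le_\eta u$ and $x\le_\eta w$ simultaneously) and then invokes Lemma~\ref{lem:refHTExtensionStep}\eqref{itm:HTExtensionStep_1} rather than Corollary~\ref{cor:4vertexCondition}, but since that lemma is itself just a restatement of the forbidden configuration, this is the same argument in different packaging.
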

\begin{proof}
  Assume one of the implications is false. Then we have $v\le_\eta u$ and $x\le_\eta w$. In particular, we obtain the order $v\le_\eta u <_\eta x \le_\eta w$.
  Since $v,w$ are adjacent, this implies with Lemma~\ref{lem:refHTExtensionStep}\eqref{itm:HTExtensionStep_1} that $u,x$ are adjacent. A contradiction.
\end{proof}

\subsection{Scouting}
We describe the information for each step during scouting with the following structure.
An \emph{$m$-scout} of $\zeta$ is a tuple $(\sigma,X)$ where $\sigma\supseteq \alpha$ is a partial order on $V$ and where $X\subseteq V$ with $|X|=m$ such that 
\begin{enumerate}[(S1)]
  \item \label{itm:pzip3} For $w\in X$ and $v\in V\setminus X$ we have $v \not>_\sigma w$.
  \item \label{itm:pzip4} 
    For $(u,v)\in\sigma\setminus\alpha^\star$ with $u\in V_i$ and $v\in V_j$ there is a vertex $s\in V(S)$ such that $u\le_\sigma s\le_\sigma v$ or there are a $(v,s)$-chain $(v=c_l,\dots,c_1=s)$ in $(G_j,\zeta_j)$ and a $(u,s)$-bar $(u=b_l,\dots,b_1=s)$ in $(G_i,\zeta_i)$ with $c_{t},b_{t}\in X$ and $b_{t}\le_\sigma c_{t}$  for $1\le t < l$ with $l\ge 2$.
  \item \label{itm:pzip1} For $v,w\in V_j,u,x\in V_i$ with $w,x\in X$ as well as $v<_\sigma w$ and $u<_\sigma x$ such that $vw\in E_j$ and $ux\not\in E_i$, we have $x\le_\sigma w\Rightarrow u<_\sigma v$. 
\end{enumerate}

Property\,\ref{itm:pzip3} means that we actually go from the right to the left.
Property\,\ref{itm:pzip4} ensures that we have pairs of chains and bars for every newly ordered pair of vertices.
And Property\,\ref{itm:pzip1} ensures $\sigma$ satisfies the left-closed Property~\eqref{eq:left-closed} for $w,x\in X$.

With the following lemma we can grow a $|V|$-scout to obtain a left-closed partial order on $V$.

\begin{restatable}[$\star$]{lemma}{Scouting}
	\label{lem:props2partialorderStep}
	Let $(\sigma,X)$ be an $m$-scout of $\zeta$ with $m<|V|$.
	Then there is an $(m+1)$-scout of $\zeta$ unless $\mathcal G$ has a conflict for $\zeta$.
\end{restatable}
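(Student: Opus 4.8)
The statement to prove is Lemma~\ref{lem:props2partialorderStep}: from an $m$-scout $(\sigma,X)$ with $m<|V|$ we can produce an $(m+1)$-scout, or else $\mathcal G$ has a conflict for $\zeta$. The plan is to pick a maximal vertex $x \in V\setminus X$ with respect to $\sigma$ (this exists since $V\setminus X$ is finite and nonempty, and by Property\,\ref{itm:pzip3} every element of $X$ is $\not<_\sigma$ any element of $V\setminus X$, so such an $x$ remains maximal in all of $V$ among unprocessed vertices), and set $X' = X \cup \{x\}$. We then define the new partial order $\sigma'$ by adding exactly the orderings forced by treating $x$ as the vertex ``$x$'' in the left-closed condition~\eqref{eq:left-closed}, together with the transitive closure. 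Concretely, for each index $i$ with $x\in V_i$, let $u_i$ be the rightmost vertex of $V_i$ (in $\sigma$) with $u_i x \in F$, and for each $j$ let $w_j$ be the leftmost vertex of $V_j$ with $x \le_\sigma w_j$ and $v_j$ the leftmost vertex of $V_j$ with $v_j w_j \in E$; whenever $u_i, v_j, w_j$ all exist we add $u_i \le_{\sigma'} v_j$, and close transitively. This mirrors the construction sketched in the proof of Lemma~\ref{lem:leftClosedChar}.

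\textbf{Key steps.} First I would verify that $\sigma'$ is still a partial order, i.e.\ that adding these orderings plus transitive closure does not create a cycle (antisymmetry), which is where a conflict is detected: a violation of antisymmetry means we already had $v_j <_\sigma u_i$, and by Property\,\ref{itm:pzip4} that prior ordering is witnessed by a chain-bar pair, which combined with $x \le_\sigma w_j$ and the new (hypothetical) ordering closes up into a $(u,v)$-chain-bar conflict between a shared vertex and $x$'s endpoints; extracting this conflict is the key combinatorial step. Next I would re-establish each scout property for $(\sigma', X')$: Property\,\ref{itm:pzip3} holds because $x$ was chosen maximal among $V\setminus X$ and the added orderings only point toward already-processed or newly-processed vertices; Property\,\ref{itm:pzip4} holds because each newly added pair $(u_i, v_j)$ either straddles a shared vertex (using Theorem~\ref{the:3vertexCondition} to get $u_i s \in F$, $v_j s \in E$ as in Figure~\ref{fig:CPpairStart}) or extends an existing chain-bar pair for $x, w_j$ by one vertex on each side (Figure~\ref{fig:CBpairStep}), and transitive consequences inherit witnesses by concatenation; Property\,\ref{itm:pzip1} holds because the added orderings are precisely the left-closed consequences of $x$ being in $X'$, and the pre-existing consequences for $w,x \in X$ are untouched. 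Finally I would note that $\sigma' \supseteq \alpha$ is preserved since we only added orderings.

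\textbf{Main obstacle.} I expect the hard part to be the antisymmetry/conflict-extraction step, specifically showing that the only way a newly forced ordering closes a cycle is through an honest $(u,v)$-chain-bar conflict, and managing the bookkeeping that keeps the chain and bar the same size and anchored at a shared vertex. One must be careful that the chain-bar pair guaranteed by Property\,\ref{itm:pzip4} for the already-present ordering $v_j <_\sigma u_i$ has its witnessing vertices in $X$ with the right $\le_\sigma$-relations, so that prepending $x, v_j$ on one side and $x, u_i$ on the other (using the three-vertex condition to certify adjacency/non-adjacency) genuinely yields a chain and a bar in the respective $(G_\bullet, \zeta_\bullet)$ of equal length $\ge 2$ with common first vertex a shared vertex $s$. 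A secondary subtlety, relevant only for $k > 2$, is that $x$ may lie in several $V_i$ simultaneously, so the argument must be run over all relevant index pairs and the transitive closure over all the added $R_{i,j}$-type orderings handled uniformly; but since each individual added ordering carries its own two-graph chain-bar witness, the transitivity bookkeeping reduces to concatenating witnesses, as in Figure~\ref{fig:scout_trans}.
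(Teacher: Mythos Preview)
Your plan follows the paper's construction: pick a $\sigma$-maximal $x\in V\setminus X$, set $X'=X\cup\{x\}$, add the orderings $(u_i,v_j)$ for each relevant pair of indices, take the transitive closure, and re-verify the scout axioms. But you have misplaced the step where the no-conflict hypothesis is actually consumed, and this leaves a real gap in your outline.

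You expect the conflict to surface as a failure of \emph{antisymmetry}: if adding $(u_i,v_j)$ closes a cycle, then $(v_j,u_i)\in\sigma$ already, and you plan to extract a chain-bar conflict from its S2 witness. In the paper's argument, however, antisymmetry holds unconditionally---no conflict is ever invoked there. If $(v_j,u_i)\in\alpha^\star$, then Corollary~\ref{cor:4implications}(\ref{itm:ext3}) applied to $v_jw_j\in E$, $u_ix\in F$ forces $w_j<_\alpha x$, contradicting the choice of $w_j$. If $(v_j,u_i)\in\sigma\setminus\alpha^\star$, the S2 chain-bar witness for it, combined with the extremality of $w_j$ and $u_i$, produces a directed cycle \emph{inside $\sigma$}, contradicting that $\sigma$ was already a partial order. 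So this step does not detect a conflict; it simply cannot occur.

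The no-conflict hypothesis is used in verifying Property~\ref{itm:pzip1} for $X'$, and specifically in upgrading the weak inequality the construction delivers to the \emph{strict} inequality the axiom demands. For arbitrary $u,v,w,x'$ as in~\ref{itm:pzip1} with $x'=x$, one gets $u\le_\alpha u_i\le_{\sigma'} v_j\le_\alpha v$, hence $u\le_{\sigma'} v$; it remains to rule out $u=v$. If $u=v$ then $u\in V(S)$, and appending $u,v$ to the chain-bar pair that S2 supplies for $(x,w_j)$ (or, in the base case $x\le_\alpha s\le_\alpha w_j$, using Lemma~\ref{lem:refHTExtensionStep} to get $us\in F$ and $vs\in E$) yields a $(u,s)$-chain and a $(u,s)$-bar of equal size with both endpoints shared---an honest conflict. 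Your one-line treatment of~\ref{itm:pzip1} (``the added orderings are precisely the left-closed consequences'') glosses over exactly this strict-versus-weak distinction, and a proof written to your plan would leave it unjustified. The ``main obstacle'' you identify is not the obstacle; this is.
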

\begin{proof}
  We assume there is no conflict and construct an $(m+1)$-scout $(\tau,X')$ as follows.
  Since $|X|<|V|$ and $\sigma$ is a partial order on $V$, there is a maximal element $x$ in $V\setminus X$.
  We define $X'=X\cup\{x\}$ and obtain $|X'|=m+1$.  
  We first consider $\sigma$ restricted on every pair of graphs $G_i,G_j$ and enhance it according to Property\,\eqref{itm:pzip1} for $X'$ to a partial order $\tau_{i,j}$ that satisfies all properties of a scout.
  Then we take the union of our obtained scouts as $\tau$ and show it already is transitive. The remaining properties of scouts are then easily derived from the partial orders $\tau_{i,j}$ with $1\le i,j\le k$.
  For every pair of graphs $G_i,G_j$ we only need to apply the implication of Property~\eqref{itm:pzip1} once if applied at a specific configuration.
  For $x\in V_i$, we set $u_i$ to be the last vertex in $V_i$ before $x$, that is not adjacent to $x$, i.e., we set $u_i=\max_\alpha \left(V_i\setminus (X'\cup N_{G_i}(x))\right)$; see Figure~\ref{figA:Rij}.
  We set $w_j$ to be the first vertex in $V_j$ with $x\le_\sigma w_j$, i.e., we set $w_j=\min_\alpha\{w\in X'\cap V_j\mid x\le_\sigma w\}$.
  We further set $v_j$ to be the first vertex in $V_j\setminus X'$ that is adjacent to $w_j$, i.e., we set $v_j=\min_\alpha (N_{G_j}(w_j)\setminus X')$. 
  Note that $u_i,w_j,v_j$ may not exist.
  We define $R_{i,j} = \{(u_i,v_j)\}$ or $R_{i,j}=\emptyset$, if $(u_i,v_j)$ does not exist.
  We set $\tau_{i,j}$ to be the transitive closure of $(V_i\times V_j)\cap(\sigma\cup R_{i,j})$.

  \begin{figure}[tb]
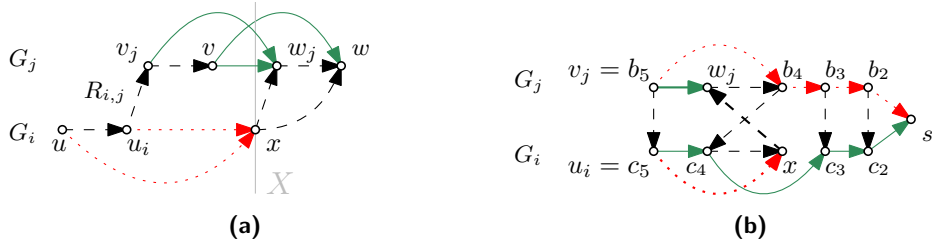
	
    \captionsetup[subfigure]{justification=centering}
    \centering
    \begin{subfigure}[b]{0.45\textwidth}
      \includegraphics{figures/Rij.pdf}
      \caption{}
      \label{figA:Rij}
    \end{subfigure}
    ~
    \begin{subfigure}[b]{0.45\textwidth}
      \includegraphics{figures/tauijAsymmetric.pdf}
      \caption{}
      \label{figA:tauijantisymmetric}
    \end{subfigure}
    \caption{On the left: The vertices $u_i,v_j,w_j$ as derived from $x$ and $X'$. We have $R_{i,j}=\{(u_i,v_j)\}$. On the right: The situation for $(v_j,u_i)\in\sigma\setminus\alpha^\star$. We have the $(v_j,s)$-bar $(v_j,b_4,b_3,b_2,s)$ and the $(u_i,s)$-chain $(u_i,c_4,c_3,c_2,s)$. }  
  \end{figure}

  We first show the relations $\tau_{i,j}$ are partial orders. Since they are by definition reflexive and transitive, it remains to show they are antisymmetric.

  \textbf{$\tau_{i,j}$ is antisymmetric:} For $1\le i,j\le k$ it suffices to show $(v_j,u_i)\not\in \sigma$ since $(u_i,v_j)$ is the only tuple added before building the transitive closure and $\sigma$ itself is transitive and antisymmetric. 
  Assume $v_j\le_h u_i$ with $h\in\{i,j\}$. This implies $i=j$ or $v_j\in V(S)$ or $u_i\in V(S)$. With Lemma~\ref{lem:refEnumExtensionStep}\eqref{itm:EnumExtensionStep_1} we obtain $w_j<_{h} x$ in contradiction to the definition of $w_j$.
  Hence, we can assume $(v_j,u_i)\in\sigma\setminus\alpha^\star$. By Property\,\eqref{itm:pzip4} we then have that there is a vertex $s\in V(S)$ such that there are a $(u_i,s)$-chain $(u_i=c_l,\dots,c_1=s)$ and a $(v_j,s)$-bar $(v_j=b_l,\dots,b_1=s)$ in $(G_j,\zeta_j)$ with $b_{l-1}\le_\sigma c_{l-1}$ and $l\ge 2$; see Figure~\ref{figA:tauijantisymmetric}. I.e., we have $v_j<_j b_{l-1}$ and $u_i<_i c_{l-1}$ and $v_jb_{l-1}\in E_j$ and $u_ic_{l-1}\not\in E_i$. With Lemma~\ref{lem:refEnumExtensionStep} we obtain $w_j\le_j b_{l-1}$ and $c_{l-1}\le_i x$. This yields $x<_\sigma w_j \le_j b_{l-1}\le_\sigma c_{l-1}\le_i x$ which contradicts the antisymmetry of $\sigma$. We obtain that $\tau_{i,j}$ is antisymmetric. It is thus a partial order.

  \textbf{$\tau_{i,j}$ satisfies Property\,\eqref{itm:pzip3}:}
  By choice of $x$ Property\,\eqref{itm:pzip3} is satisfied for $(\sigma,X')$
  Note that for $(u,v)\in \tau_{i,j}\setminus \sigma$ we have that $(u,v)$ is obtained from $(u_i,v_j)$ by transitivity and thus $u\le_{\tau_{i,j}} u_i <_i x$. I.e., we have $u\not\in X'$.
  Therefore Property\,\eqref{itm:pzip3} is satisfied for $(\tau_{i,j},X')$ on $V_i\cup V_j$.

  \begin{figure}[tb]
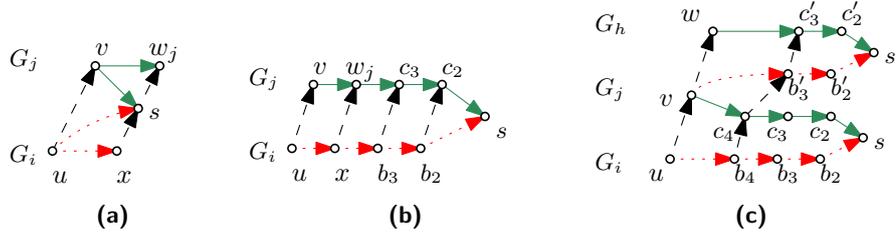
	
    \captionsetup[subfigure]{justification=centering}
    \centering
    \begin{subfigure}[b]{0.2\textwidth} 
      \includegraphics{figures/CBexpansion_left.pdf}
      \caption{}
      \label{figA:CBexpansion_left}
    \end{subfigure}
    ~
    \begin{subfigure}[b]{0.3\textwidth} 
      \includegraphics{figures/CBexpansion_right.pdf}
      \caption{}
      \label{figA:CBexpansion_right}
    \end{subfigure}
    ~
    \begin{subfigure}[b]{0.3\textwidth} 
      \includegraphics{figures/tauTrans.pdf}
      \caption{}
      \label{figA:tauTrans}
    \end{subfigure} 
    \caption{On the left and in the middle: Both cases for the construction of the $(v,s)$-chain and the $(u,s)$-bar to satisfy Property\,\eqref{itm:pzip4} for a tuple $(u,v)\in \tau_{i,j}\setminus\sigma$. On the right: The case of chains and bars in the proof of transitivity for $\tau$.}
  \end{figure}

  \textbf{$\tau_{i,j}$ satisfies Property\,\eqref{itm:pzip4}:}
  Let $u\in V_i,v\in V_j$ with $(u,v)\in \tau_{i,j}\setminus \sigma$; see Figures~\ref{figA:CBexpansion_left},\ref{figA:CBexpansion_right}. Then we have $u\le_\alpha u_i <_\sigma x$ and $v_j \le_\alpha v$. We further have $v<_\sigma w_j$ since otherwise $u<\sigma x \le_\sigma w_j\le_\sigma v$. By Lemma~\ref{lem:refHTExtensionStep} we obtain $vw_j\in E_j$ and $ux\not\in E_i$. By Property\,\eqref{itm:pzip4} there is a vertex $s\in V(S)$ such that $x\le_\sigma s\le_\sigma w_j$ or there are an $(w_j,s)$-chain $(w_j=c_l,\dots,c_1=s)$ in $(G_j,\zeta_j)$ and an $(x,s)$-bar $(x=b_l,\dots,b_1=s)$ in $(G_i,\zeta_i)$ with $c_{t},b_{t}\in X$ and $b_{t}\le_\sigma c_{t}$  for $1\le t < l$ with $l\ge 2$. 
  In the first case we have $s\in X'$ and with Property\,\eqref{itm:pzip3} we obtain $v<_\alpha s\le_\alpha w_j$. We further have $u\le_\alpha u_i<_\alpha x\le_\alpha s$. By Lemma~\ref{lem:refHTExtensionStep} we obtain $vs\in E_j$ and $us\not\in E_i$. In the second case $(v,w_j=c_l,\dots,c_1=s)$ is a $(v,s)$-chain in $(G_j,\zeta_j)$ and $(u,x=b_l,\dots,b_1=s)$ is a $(u,s)$-bar in $(G_i,\zeta_i)$ where we also have $w_j,x\in X'$ and $x\le_\sigma w_j$.
  We therefore have tuple $(v,s)$ is a $(v,s)$-chain of size $2$ in $(G_j,\zeta_j)$ and tuple $(u,s)$ is a $(u,s)$-bar of size $2$ in $(G_i,\zeta_i)$.
  Therefore Property\,\eqref{itm:pzip4} is satisfied for $\tau_{i,j}$.\\
  \textbf{$\tau_{i,j}$ satisfies Property\,\eqref{itm:pzip1}:}
  Let $v,w\in V_j,u,x'\in V_i$ with $x',w\in X'$ as well as $v<_{\tau_{i,j}} w$ and $u<_{\tau_{i,j}} x'$ such that $vw\in E_j$ and $ux'\not\in E_i$. 
  Assume $x\le_{\tau_{i,j}} w$.
  Since $\alpha\subseteq {\tau_{i,j}}$, and relation ${\tau_{i,j}}$ is antisymmetric, and $\alpha$ is a linear order on $V_i$ and on $V_j$, we obtain $v<_\alpha w$ and $u<_\alpha x'$.
  As argued for Property\,\eqref{itm:pzip4}, if $(x',w)\in \tau_{i,j}\setminus \sigma$, then $x'\not\in X'$. Therefore we have $x'\le_\sigma w$.  
  If $x'\ne x$, then we have $x',w\in X$ and thus $u<_\sigma v$ since $(\sigma,X)$ satisfies Property\,\eqref{itm:pzip1}.
  Hence, assume $x'=x$; see Figure~\ref{figA:Rij}.  
  By definition of $v_j,w_j,u_i$ we have $v_j\le_\alpha v<_\alpha w_j\le_\alpha w$ and $u\le_\alpha u_i<_\alpha x$.
  By Lemma~\ref{lem:refHTExtensionStep} we obtain $vw_j\in E_j$ and $ux\not\in E_i$. 
  This yields $u\le_\alpha u_i\le_{R_{i,j}} v_j \le_\alpha v$ and thus $u\le_{\tau_{i,j}} v$. 
  It remains to show $u\ne v$. Assume otherwise. Then we have by antisymmetry of $\tau_{i,j}$ that $u=v$. 
  If $i=j$, then we have $u<_\alpha w,x$ with $uw_j\in E_i$ and $ux\not\in E_i$. This contradicts $x\le_\sigma w_j$ (given by choice of $w_j$) by Lemma~\ref{lem:refEnumExtensionStep} since $\alpha_{|V_i}\subseteq \sigma$ is a fine enumeration of $G_i$. 
  Hence, we have $i\ne j$ and $u=v\in V(S)$. This implies $(x,w_j)\not\in \alpha^\star$.
  By Property\,\eqref{itm:pzip4} we have that there is a vertex $s\in V(S)$ such that $x\le_\sigma s\le_\sigma w_j$ or there are a $(w_j,s)$-chain $(w_j=c_l,\dots,c_1=s)$ in $(G_j,\zeta_j)$ and an $(x,s)$-bar $(x=b_l,\dots,b_1=s)$ in $(G_i,\zeta_i)$ with $c_{t},b_{t}\in X$ and $b_{t}\le_\sigma c_{t}$  for $1\le t < l$ with $l\ge 2$. 
  In the first case we have by Lemma~\ref{lem:refHTExtensionStep} that $us\not\in E_i$ and $vw_j\in E_j$. We then obtain the conflict consisting of the $(u,s)$-chain $(u=v,s)$ in $E_j$ and the $(u,s)$-bar $(u,s)$ in $E_i$ .
  In the second case we obtain the conflict $(C,B)$ with $(u=v,s)$-chain $C=(v,w_j=c_l,\dots,c_1=s)$ in $(G_j,\zeta_j)$ and $(u,s)$-bar $B=(u,x=b_l,\dots,b_1=s)$.
  Since we assumed there is no conflict we can conclude $u\ne v$. \\

  Finally, we define $\tau = \bigcup\limits_{1\le i,j\le k}\tau_{i,j}$.

  \textbf{$\tau$ is transitive:}
  We next show that $\tau$ is already transitive. Note that $\alpha\subseteq \tau_{1,2}\subseteq \tau$.
  Let $u\in V_i,v\in V_j, w\in V_h$ and $u\le_\tau v \le_\tau w$. If $u\le_\alpha v$ or $v\le_\alpha w$, then there is some $\tau_{l,m}$ with $u\le_{\tau_{l,m}} v \le_{\tau_{l,m}} w$ and thus $u\le_\tau w$. 
  Otherwise, we have $(u,v),(v,w)\in\tau\setminus \alpha^\star$ and Property\,\eqref{itm:pzip4} is satisfied for $(u,v),(v,w)$ with regards to $\tau$.
  If there is an $s\in V(S)$ with $u\le_\tau s \le_\tau v$ or $v\le_\tau s \le_\tau w$, then we have $u\le_\tau s \le_\tau w$.
  Otherwise, we have for $(u,v)$ a vertex $s\in V(S)$  with a $(u,s)$-bar $(u=b_l,\dots,b_1=s)$ in $(G_i,\zeta_i)$ and a $(v,s)$-chain $(v=c_l,\dots,c_1=s)$ in $(G_j,\zeta_j)$ such that $b_{l-1}\le_\tau c_{l-1}$ and $b_{l-1},c_{l-1},\in X'$; see Figure~\ref{figA:tauTrans}. 
  Further, we have for $(v,w)$ a vertex $s'\in V(S)$ with a $(v,s)$-bar $(v=b'_t,\dots,b'_1=s')$ in $(G_i,\zeta_i)$ and a $(w,s)$-chain $(w=c'_t,\dots,c'_1=s')$ in $(G_h,\zeta_h)$ such that $b'_{l-1}\le_\tau c'_{t-1}$ and $b'_{t-1}, c'_{t-1}\in X'$.
  This yields $b_{l-1}\le_\tau c_{l-1}$ and $b'_{l-1}\le_\tau c'_{t-1}$ and by the definition of chains and bars we have $ub_{l-1}\not\in\alpha$ and $vc_{l-1}\in \alpha$ and $vb'_{t-1}\not\in\alpha$ and $wc'_{t-1}\in \alpha$.
  By Lemma~\ref{lem:refEnumExtensionStep}\eqref{itm:EnumExtensionStep_1} we have $c_{l-1}\le_j b'_{t-1}$ and thus $b_{l-1}\le_\tau c_{l-1} \le_j b'_{t-1}\le_\tau c'_{t-1}$. 
  This means we have $b_{l-1}\le_\tau c'_{t-1}$ and $u\le_i b_{l-1}$ and $w\le_h c'_{t-1}$ and $ub_{l-1}\not\in \alpha$ and $wc'_{t-1}\in \alpha$.
  By Property\,\eqref{itm:pzip4} for $\tau$ we obtain $u\le_\tau w$. Hence, $\tau$ is transitive. \\
  \textbf{$\tau$ is antisymmetric:}
  Since $\tau$ is transitive it suffices to show there are no two distinct vertices $u,v\in V$ with $(u,v),(v,u)\in\tau$.
  Since $\tau$ is the union of the relations $\tau_{i,j}$ with $1\le i,j \le k$ and those share pairwise at most tuples of a set $V_l$ on which they coincide with $\alpha$, that is the case. We obtain that $\tau$ is a partial order.\\
  \textbf{$\tau$ satisfies Properties\,\ref{itm:pzip3},\ref{itm:pzip4},\ref{itm:pzip1}:}
  Since $\tau$ is the union of the relations $\tau_{i,j}$ with $1\le i,j \le k$ and those satisfy Properties\,\ref{itm:pzip3},\ref{itm:pzip4},\ref{itm:pzip1}, 
  the partial order $\tau$ itself also satisfies Properties\,\ref{itm:pzip3},\ref{itm:pzip4},\ref{itm:pzip1}.
\end{proof}

With the construction of a $|V|$-scout we just obtain a left-closed partial order.

\begin{restatable}[$\star$]{lemma}{Scouted}
	\label{lem:props2partialorder}
  If $\mathcal G$ has no conflict for $\zeta$,
	then there is a left-closed partial order $\sigma\supseteq\alpha$ on $V$.
\end{restatable}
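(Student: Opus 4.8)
The plan is to bootstrap Lemma~\ref{lem:props2partialorderStep} by induction on the size of the scout, and then to read the left-closed property directly off the definition of a scout on the full vertex set.

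First I would check that $(\alpha,\emptyset)$ is a $0$-scout of $\zeta$. Properties~\ref{itm:pzip3} and~\ref{itm:pzip1} hold vacuously because $X=\emptyset$, and $\alpha\supseteq\alpha$ is a partial order by construction, so the only point to verify is Property~\ref{itm:pzip4}. Take $(u,v)\in\alpha\setminus\alpha^\star$. Recall that, after identifying vertices indistinguishable in $\mathcal G$, $\alpha$ restricts on each $V_l$ to the linear order $\zeta_l$, which equals $\alpha^\star\cap(V_l\times V_l)$; together with the sunflower identities $V_i\cap V_j=V_S$ and $V_S\subseteq V_l$, the assumption $(u,v)\notin\alpha^\star$ forces $u$ and $v$ to lie outside $V_S$ in two \emph{distinct} graphs $V_p,V_q$ (otherwise $u$ and $v$ would share some $V_l$, whence $(u,v)\in\alpha\cap(V_l\times V_l)=\alpha^\star\cap(V_l\times V_l)$). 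Now take any witnessing $\alpha^\star$-chain $u=z_0,z_1,\dots,z_t=v$. If no $z_l$ lay in $V_S$, each $z_l$ would belong to a unique graph, and consecutive ones sharing a graph would force all of them --- including $v$ --- into $V_p$, a contradiction; hence the least $l$ with $z_l\in V_S$ exists. For $s:=z_l$ we obtain $u\le_\alpha s\le_\alpha v$, i.e.\ the first disjunct of Property~\ref{itm:pzip4} (the chain--bar disjunct being moot, as it refers to vertices of $X=\emptyset$). So $(\alpha,\emptyset)$ is a $0$-scout.

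Next, assuming $\mathcal G$ has no conflict for $\zeta$, Lemma~\ref{lem:props2partialorderStep} upgrades every $m$-scout with $m<|V|$ to an $(m+1)$-scout, so starting from $(\alpha,\emptyset)$ and applying it $|V|$ times yields a $|V|$-scout $(\sigma,X)$; since $X\subseteq V$ and $|X|=|V|$ we have $X=V$, and $\sigma\supseteq\alpha$ is a partial order by definition of a scout. Finally I would check that $\sigma$ is left-closed. Let $v,w,u,x\in V$ with $vw\in E$, $ux\in F$, and $x\le_\sigma w$. By the definitions of $E$ and $F$ this gives $(v,w),(u,x)\in\alpha$ --- hence $v<_\sigma w$ and $u<_\sigma x$ since $\alpha\subseteq\sigma$ --- together with an index $j$ with $v,w\in V_j,\ vw\in E_j$ and an index $i$ with $u,x\in V_i,\ ux\notin E_i$. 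As $X=V$ we have $w,x\in X$, so Property~\ref{itm:pzip1} applies and yields $u<_\sigma v$, which is exactly the conclusion demanded by~\eqref{eq:left-closed}. Thus $\sigma$ is a left-closed partial order extending $\alpha$.

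I expect the only genuinely delicate step to be the verification of Property~\ref{itm:pzip4} for the base case, i.e.\ the observation that the transitive closure $\alpha$ of $\alpha^\star$ introduces a comparability between vertices of two different input graphs only when some vertex of the shared graph $S$ lies between them in $\alpha$; the remainder is a routine induction via Lemma~\ref{lem:props2partialorderStep} plus matching the definition of a $|V|$-scout against~\eqref{eq:left-closed}.
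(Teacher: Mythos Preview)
Your proof is correct and matches the paper's approach exactly: establish that $(\alpha,\emptyset)$ is a $0$-scout, iterate Lemma~\ref{lem:props2partialorderStep} to a $|V|$-scout $(\sigma,V)$, and read off left-closedness from Property~\ref{itm:pzip1} with $X=V$. Your treatment of Property~\ref{itm:pzip4} in the base case is more detailed than the paper's one-line observation; one small imprecision is that $\alpha|_{V_l}$ need not literally equal $\zeta_l$ (other $\zeta_m$ may refine the order on $V_S\subseteq V_l$), but the fact you actually use, namely $\alpha\cap(V_l\times V_l)=\alpha^\star\cap(V_l\times V_l)$, does hold.
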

\begin{proof}
  We first show that $(\alpha,X=\emptyset)$ is a $0$-scout of $\zeta$.
  We have that $\alpha$ is a partial order.
  For Property\,\eqref{itm:pzip3} and Property\,\eqref{itm:pzip1} there is nothing to show, since $X=\emptyset$.
  For Property\,\eqref{itm:pzip4} observe that $\alpha\setminus \alpha^\star$ only contains tuples $(u,v)$ obtained by transitivity, which requires some vertex $s\in V(S)$ with $u\le_i s\le_j v$, where $u\in V_i,v\in V_j$.
  By Lemma~\ref{lem:props2partialorderStep} we obtain a $|V|$-scout $(\sigma,V)$ of $\zeta$.
  Let $v,w,u,x\in V$ with $vw\in E$ and $ux\in F$ and $x\le_\sigma w$.
  By definition of $E,F$ we have $1\le i,j\le k$ with $v,w\in V_j,u,x\in V_i$ and $vw\in E_j$ while $ux\not\in E_i$. Hence, we can use Property\,\eqref{itm:pzip1} and obtain $u<_\sigma v$.
\end{proof}

As a result we obtain Lemma~\ref{lem:leftClosedChar}.

\subsection{Zipping}
We use a structure similar to scouts for zipping. 
While scouts demand extensions according to Figure~\ref{fig:EnumExtensionStep_right} we now do so for Figure~\ref{fig:EnumExtensionStep_left} with Property~\ref{itm:zip3} below.
This prevents that our greedy choices violate the property of being left-closed.
At this point the construction always works and we no longer need to consider
chains and bars. Instead we just ensure the partial order remains left-closed.

An \emph{$m$-zip} of $\zeta$ is a tuple $(\sigma,U)$ where $\sigma$ is a partial order on $V$ with $\alpha\subseteq\sigma$
and where $U\subseteq V$ with $|U|=m$ such that 
\begin{enumerate}[(Z1)]
	\item  \label{itm:zip1} For $v\in U$ and $w\in V\setminus U$ we have $v<_\sigma w$.
	\item  \label{itm:zip2} $\sigma$ is a linear order on $U$.
  \item  \label{itm:zip3} For $u,v\in U$ and $w,x\in V$ with $vw\in E$ and $ux\in F$, we have $v\le_\sigma u\Rightarrow w<_\sigma x$. 
  \item  \label{itm:zip4} $\sigma$ is left-closed. 
\end{enumerate}

With the following lemma we grow a zip that spans $V$.
	
\begin{restatable}[$\star$]{lemma}{Zipping}
\label{lem:extendSatifyingPOstep}
	Let $(\sigma,U)$ be an $m$-zip of $\zeta$ with $m<|V|$.
  Then there is an $(m+1)$-zip of $\zeta$.
\end{restatable}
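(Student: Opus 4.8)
The plan is to carry out one step of the greedy strategy sketched for Lemma~\ref{lem:zipping}. Since $m<|V|$, choose a vertex $u$ that is $\sigma$-minimal in $V\setminus U$ and set $U'=U\cup\{u\}$. I build the new order in two stages. First, let $\sigma_1$ be the transitive closure of $\sigma\cup(\{u\}\times(V\setminus U'))$, i.e., $u$ is inserted just before every vertex not yet in $U'$; this is needed for~(Z1). Then set $Y=\{y\in V\mid \exists b\in U'\colon by\in E\}$ and $Z=\{z\in V\mid uz\in F\}$, and let $\sigma'$ be the transitive closure of $\sigma_1\cup(Y\times Z)$; the pairs in $Y\times Z$ are exactly those one is forced to add to keep $u$ compatible with the left-closed condition~\eqref{eq:left-closed} when $u$ plays the role of the $F$-vertex, and they are what will yield~(Z3) for $U'$. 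The claim is that $(\sigma',U')$ is an $(m+1)$-zip.

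Two facts organize the whole argument. \emph{(Minimality.)} No $q\in V\setminus U$ satisfies $q<_\sigma u$; since $uz\in F$ implies $u<_\alpha z$, in particular every $z\in Z$ lies in $V\setminus U'$. \emph{(No backward edge.)} For $y\in Y$ and $z\in Z$ we have $z\not\le_{\sigma_1}y$ — otherwise left-closedness of $\sigma_1$ applied to the quadruple $b,y,u,z$, with $b\in U'$ a witness of $y\in Y$, would give $u<_{\sigma_1}b$, impossible since $b\le_{\sigma_1}u$ by~(Z1). (This presupposes $\sigma_1$ is left-closed, which I would establish first; see below. That $\sigma_1$ is a partial order needs only minimality: a shortest directed cycle would use a $u$-edge, and since the only added edges leave $u$, the cycle must return to $u$ through $\sigma$, forcing $q\le_\sigma u$ for some $q\in V\setminus U'$, a contradiction.) From these two facts, $\sigma'$ is a partial order: a shortest directed cycle in $\sigma_1\cup(Y\times Z)$ must traverse a $Y\times Z$-edge, but going once around the cycle then forces $z\le_{\sigma_1}y$ for some $y\in Y$, $z\in Z$, contradicting the no-backward-edge fact. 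Properties (Z1) and (Z2) are then routine: each $v\in U$ has $v<_\sigma u$, so $U'$ is linearly ordered with $u$ on top; the $u$-edges put $u$ before all of $V\setminus U'$; and an antisymmetric relation extending a linear order on $U'$ must equal it, so $\sigma'$ is still linear on $U'$.

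The step I expect to be the main obstacle is left-closedness of $\sigma'$ (property~(Z4)); property~(Z3) falls out of the same reduction. Given a configuration $vw\in E$, $u'x\in F$, $x\le_{\sigma'}w$, take a shortest chain from $x$ to $w$ witnessing $x\le_{\sigma'}w$; by minimality it uses at most one $u$-edge, and by the no-backward-edge fact at most one $Y\times Z$-edge. If it uses neither, then $x\le_\sigma w$ and left-closedness of $\sigma$ gives $u'<_\sigma v$. If it uses a $u$-edge, the chain has the shape $x\le_\sigma u\to q\le_\sigma w$; then $u'x\in F$ forces $u'<_\sigma u$, hence $u'\in U$ by minimality, and I split on $v$: if $v\in V\setminus U'$ use the edge $u\to v$; if $v=u$ we are done; and if $v\in U$ — the delicate case — then, assuming $u'\not<_\sigma v$ and applying (Z3) of the $m$-zip to $v,u'\in U$, one gets $w<_\sigma x\le_\sigma u$, so $q<_\sigma u$, contradicting minimality. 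If it uses a $Y\times Z$-edge, the chain has the shape $x\le_{\sigma_1}y\to z\le_{\sigma_1}w$; left-closedness of $\sigma_1$ applied to $b,y,u',x$ (with $b\in U'$ a witness of $y\in Y$) gives $u'<_{\sigma_1}b\le_{\sigma_1}u$, and applied to $v,w,u,z$ gives $u<_{\sigma_1}v$, whence $u'<_{\sigma'}v$. Running exactly this reduction at the intermediate stage (only $\sigma$ and the $u$-edges present) shows $\sigma_1$ is left-closed, which licenses the $\sigma_1$-steps used above. Finally, (Z3) for $(\sigma',U')$: for $a,b\in U$ it is inherited from the $m$-zip since $\sigma'$ agrees with $\sigma$ on $U$; and for $a=u$ it is exactly the effect of the added edges $Y\times Z$, the required $w\ne x$ being supplied again by the no-backward-edge fact.
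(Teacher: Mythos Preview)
Your proof is correct and follows essentially the same strategy as the paper: pick a $\sigma$-minimal $u\in V\setminus U$, add the edges $\{u\}\times(V\setminus U')$ for (Z1) and the edges forced by (Z3), then use a ``no backward edge'' observation (your $z\not\le_{\sigma_1}y$; the paper's Observation~\eqref{eq:Rtrans}) to verify antisymmetry and left-closedness. The one organizational difference is that you split the construction into two stages and prove $\sigma_1$ is left-closed as an intermediate lemma, which lets you dispatch the $Y\times Z$-case of (Z4) by two clean applications of left-closedness of $\sigma_1$; the paper instead adds $Q$ and $R$ in one shot and argues directly from the structure of shortest chains. Your definition of $Y$ via $b\in U'$ (rather than $U$) is also slightly more generous than the paper's $R$, which cleanly covers the (Z3) case $a=b=u$; this is a harmless and arguably cleaner choice.
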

\begin{proof}
  Since $|U|<|V|$ and $\sigma$ is a partial order on $V$, there is a minimal element $u$ in $V\setminus U$.
  We define $U'=U\cup\{u\}$ and obtain $|U'|=m+1$.
  We denote the set of all tuples implied by Property\,\eqref{itm:zip1} by $Q$, i.e., we define $Q=\{u\}\times (V\setminus U)$. 
  We further define the set $R$ to be the set of all tuples implied by $Q$ with Property\,\eqref{itm:zip3},
  i.e., we set $R=\{(w,x)\in (V\setminus U)^2\mid \exists v\in U\colon v\le_\sigma u\wedge vw\in E\wedge ux\in F\}$.
  We finally set $\tau$ to be the transitive closure of $\sigma\cup Q\cup R$.

  By choice of $u$, with no edge in $Q\cup R$ ending in $U$, and with $Q\cup R\subseteq \tau$, the Properties~\eqref{itm:zip1},\eqref{itm:zip2},\eqref{itm:zip3} are satisfied for $\tau$. We prove the following observation for later use: 

  \begin{figure}[tb]	
    \centering
    \captionsetup[subfigure]{justification=centering}
    \begin{subfigure}[b]{0.3\textwidth}
      \includegraphics{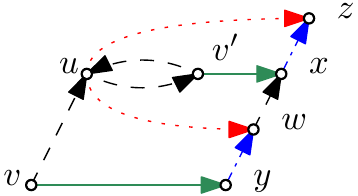}
      \caption{}
      \label{figA:Rtrans}
    \end{subfigure}
    ~
    \begin{subfigure}[b]{0.2\textwidth}
      \includegraphics{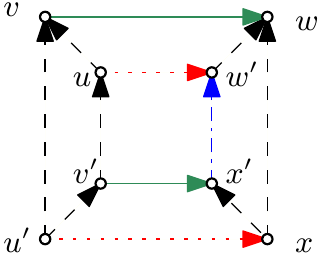}
      \caption{}
      \label{figA:zip4}
    \end{subfigure}
    \caption{On the left: Proof of Observation~\eqref{eq:Rtrans}. On the right: Proof that $\tau$ is left-closed. Arrows for tuples in $R$ are blue and dash dotted.}
  \end{figure}

  \begin{equation}\forall (y,w),(x,z)\in R\colon w\not\le_\sigma x \label{eq:Rtrans}\end{equation}
  Assume there are $(y,w),(x,z)\in R$ with $(w,x)\in\sigma$. See Figure~\ref{figA:Rtrans}.
  By definition of $R$ there are $v,v'\in V$ such that $vy\in E$ and  $uw\in F$ and $v\le_\sigma u$ as well as $v'x\in E$ and $uz\in F$ and $v'\le_\sigma u$.
  Since $\sigma$ is left-closed, we obtain $u<_\sigma v'$, a contradiction.\\

  \textbf{$\tau$ is antisymmetric:} 
  Assume there is a cycle $C$ in graph $G^\star=(V,\sigma\cup Q\cup R)$.
  By Property~\eqref{itm:zip2} and with no edge in $\tau\setminus\sigma$ ending in $U$, we know that $C$ contains no edge of $Q$.
  We can exclude all remaining sequences of edges in $\sigma\cup R$ for $C$ with Observation~\eqref{eq:Rtrans}:
  A cycle with edges in $\sigma$ and in $R$ is excluded with $\sigma$ being transitive. A cycle in $R$ is excluded with $\sigma$ being reflexive. And a cycle in $\sigma$ is excluded by $\sigma$ being antisymmetric. 
  Hence, relation $\tau$ is antisymmetric and thus a partial order.

  \textbf{$\tau$ is left-closed} 
  Let $v,w,u',x\in V$ with $vw\in E$ and $u'x\in F$ and $x\le_\tau w$. 
  If $u'\in U'$ or $v\in U'$, then we have $u'<_\tau v$ or $v<_\tau u'$ by Properties~\eqref{itm:zip1},\eqref{itm:zip2}. The latter case contradicts Property~\eqref{itm:zip3}. I.e., we have $u'<_\tau v$ as desired.

  Otherwise, we have $u',v\in V\setminus U'$ and thus also $x,w\in V\setminus U'$ since $v\le_\alpha w$ and $u'\le_\alpha x$. Then there must be a path in $(V,\sigma\cup R)$ from  $x$ to $w$.
  If $x\le_\sigma w$, then we obtain $u'<_\sigma v$ with Property~\eqref{itm:zip4} of $\sigma$. Otherwise, we obtain with Observation~\eqref{eq:Rtrans} that there are $x',w'\in V$ with $x\le_\sigma x'\le_R w'\le_\sigma w$. See Figure~\ref{figA:zip4}.
  From $(x',w')\in R$ we obtain a vertex $v'\in U$ such that $v'\le_\sigma u$ and $v'x'\in E$. Since $\sigma$ is left-closed, we obtain $u'<_\sigma v'$ and $u<_\sigma v$. This yields $u'<_\sigma v$.
\end{proof}

The result is the following lemma.

\begin{restatable}[$\star$]{lemma}{Zipped}
\label{lem:extendSatisfyingPO}
	Let $\sigma\supseteq\alpha$ be a left-closed partial order on $V$. 
	Then there is a left-closed linear order $\tau\supseteq\alpha$.
 \end{restatable}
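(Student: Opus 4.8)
The plan is to derive this lemma by iterating the inductive step already proved as Lemma~\ref{lem:extendSatifyingPOstep}, starting from the trivial zip and enlarging its ``processed'' set one vertex at a time until it covers all of~$V$.

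First I would verify that $(\sigma,\emptyset)$ is a $0$-zip of~$\zeta$. Since the second component is empty, Property~\eqref{itm:zip1} (which quantifies over $v\in U$) and Property~\eqref{itm:zip3} (which quantifies over $u,v\in U$) hold vacuously, and Property~\eqref{itm:zip2} holds because the empty relation is trivially a linear order on the empty set. Property~\eqref{itm:zip4} is precisely the hypothesis that $\sigma$ is left-closed, and the requirement $\alpha\subseteq\sigma$ is assumed as well. Hence $(\sigma,\emptyset)$ satisfies all defining conditions of a $0$-zip.

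Next I would apply Lemma~\ref{lem:extendSatifyingPOstep} repeatedly: as long as the current zip $(\sigma_m,U_m)$ has $|U_m|=m<|V|$, that lemma produces an $(m+1)$-zip. Iterating $|V|$ times yields a $|V|$-zip $(\tau,U)$ with $|U|=|V|$, hence $U=V$. From its defining properties the conclusion is immediate: by Property~\eqref{itm:zip2} the relation $\tau$ is a linear order on $U=V$, i.e.\ a linear order on $V$; by Property~\eqref{itm:zip4} it is left-closed; and every $m$-zip satisfies $\alpha\subseteq\sigma$ by definition, so $\alpha\subseteq\tau$. Thus $\tau$ is the desired left-closed linear order extending~$\alpha$.

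I do not anticipate any genuine obstacle here, since all the combinatorial work is carried out in the inductive step (Lemma~\ref{lem:extendSatifyingPOstep}), which is already available. The only point that deserves a moment's attention is confirming that the base case $(\sigma,\emptyset)$ really does meet all four zip conditions, but each of them is either vacuous or a direct restatement of a hypothesis.
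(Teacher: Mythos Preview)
Your proposal is correct and follows exactly the same approach as the paper: verify that $(\sigma,\emptyset)$ is a $0$-zip, iterate Lemma~\ref{lem:extendSatifyingPOstep} to obtain a $|V|$-zip $(\tau,V)$, and read off the conclusion from Properties~\eqref{itm:zip2} and~\eqref{itm:zip4}. You have simply spelled out in more detail why the base case satisfies all four zip conditions, which the paper leaves implicit.
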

\begin{proof}
  Note that $(\sigma,\emptyset)$ is a $0$-zip of $\zeta$.
  We obtain a $|V|$-zip $(\tau,V)$ of $\zeta$ by induction using Lemma~\ref{lem:extendSatifyingPOstep}. The statement holds for $\tau$ by Properties~\eqref{itm:zip2},\eqref{itm:zip4} of a $|V|$-zip.
\end{proof}

As a result we obtain Lemma~\ref{lem:zipping}.

\subsection{Constructing $H$ with fine enumeration $\tau$}
\begin{restatable}[$\star$]{lemma}{Sandwiching}
  \label{lem:extendEF}
  Let $\tau\supseteq\alpha$ be a left-closed linear order on $V$.  
  Then $\tau$ is a fine enumeration for a graph $H=(V,E')$ with $E\subseteq E'$ and $(F\cup F^r)\cap E'=\emptyset$.
\end{restatable}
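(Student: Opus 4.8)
The goal is to show that the left-closed linear order $\tau$ is a fine enumeration of the graph $H=(V,E')$ with $E'=\{ux\in V^2\mid \exists vw\in E\colon v\le_\tau u<_\tau x\le_\tau w\}$, and that $E\subseteq E'$ while $(F\cup F^r)\cap E'=\emptyset$. I would organize the proof around the four-vertex condition of Corollary~\ref{cor:4vertexCondition}, so the first task is to verify $E\subseteq E'$: for $vw\in E$ just take $u=v$, $x=w$ in the definition of $E'$. The inclusion $E\subseteq E'$ also gives immediately that $\tau$ restricted to $V_i$ still orders the vertices of $G_i$ consistently.

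The heart of the argument is that $(F\cup F^r)\cap E'=\emptyset$; equivalently, no pair in $F$ becomes an edge of $H$. Suppose $ux\in F$ with $u<_\tau x$ (by symmetry of $F^r$ this covers both orientations), and suppose for contradiction that $ux\in E'$, witnessed by some $vw\in E$ with $v\le_\tau u<_\tau x\le_\tau w$. Now I apply the left-closed property~\eqref{eq:left-closed}: from $vw\in E$, $ux\in F$ and $x\le_\tau w$ we get $u<_\tau v$, which directly contradicts $v\le_\tau u$. So $E'$ contains no pair of $F$, and since $F$ contains $(u,v)$ whenever $uv\notin E_i$ for the graph $G_i$ containing both, this shows each $G_i$ appears as an \emph{induced} subgraph of $H$ (an edge of $H$ between two vertices of $G_i$ can only come from $E$, hence from an edge of some $G_j$, and by the induced-subgraph hypothesis on the shared structure and the $F$-argument it must in fact be an edge of $G_i$; I would spell this out using that $E\subseteq E'$, $F\cap E'=\emptyset$, and $E_i\cup(\binom{V_i}{2}\setminus E_i)$ partitions the pairs inside $V_i$).

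It remains to check that $\tau$ is a fine enumeration of $H$, i.e. the four-vertex condition: there are no $v'\le_\tau u'\le_\tau x'\le_\tau w'$ with $v'w'\in E'$ and $u'x'\notin E'$. So assume $v'w'\in E'$; then there is $vw\in E$ with $v\le_\tau v'$ and $w'\le_\tau w$, hence $v\le_\tau v'\le_\tau u'\le_\tau x'\le_\tau w'\le_\tau w$. But then $u'x'$ satisfies the defining condition of $E'$ with the same witness $vw$ (namely $v\le_\tau u'<_\tau x'\le_\tau w$, after noting $u'\ne x'$ since otherwise the condition on four vertices is vacuous, or handling $u'=x'$ trivially as a non-edge-relevant case), so $u'x'\in E'$ — contradiction. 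Hence $\tau$ is a fine enumeration of $H$ by Corollary~\ref{cor:4vertexCondition}.

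The main obstacle I anticipate is not any single step but the bookkeeping around the induced-subgraph claim: one must be careful that every pair inside a single $V_i$ that lies in $F$ really is a non-edge of $G_i$ (this uses the earlier reduction that indistinguishable vertices are identified, so $\alpha$ linearly orders each $V_i$, together with the definition $F=\{(u,v)\in\alpha\mid uv\in E(G_1^c\cup\dots\cup G_k^c)\}$), and conversely that an $E'$-edge inside $V_i$ cannot be a $G_i$-non-edge. Everything else is a direct application of left-closedness and the four-vertex condition, so the proof should be short once the definitions are unwound carefully.
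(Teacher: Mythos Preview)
Your argument is correct and follows the paper's approach almost exactly: the same edge set $E'=\{ux\mid \exists vw\in E\colon v\le_\tau u<_\tau x\le_\tau w\}$, the same use of left-closedness to exclude $F$ from $E'$, and the same witness-propagation idea to verify the fine-enumeration property (you phrase it via the four-vertex condition of Corollary~\ref{cor:4vertexCondition}, while the paper checks consecutiveness of neighborhoods directly---these are equivalent). One small imprecision: your phrase ``by symmetry of $F^r$ this covers both orientations'' is not quite the reason the $F^r$ case is handled; the paper observes that $ux\in E'\cap F^r$ would force $u<_\tau x$ (from the definition of $E'$) while $(x,u)\in F\subseteq\alpha\subseteq\tau$ gives $x\le_\tau u$, a direct contradiction with antisymmetry of $\tau$---so make that explicit rather than appealing to symmetry. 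The induced-subgraph discussion you append is not part of this lemma's statement (it belongs to Lemma~\ref{lem:sandwich}), so you can safely omit it here.
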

\begin{proof}
  We set $E'=\{ux\in V^2 \mid \exists vw\in E\colon v\le_\tau u <_\tau x\le_\tau w\}$.
  Clearly, we have $E\subseteq E'$. On the other hand, an edge $ux\in E'\cap F$ would contradict $\tau$ being left-closed and an edge $ux\in E'\cap F^r$ would contradict transitivity of $\tau$. Let $ux\in E'$ with $u\le_\tau x$. Let $y\in V$ with $u <_\tau y <_\tau x$. By definition of $E'$ there are $v,w\in V$ with $v\le_\tau u<_\tau y <_\tau x \le_\tau w$. We obtain $uy,yx\in E'$. Hence, for $v\in V$ the neighborhood $N_H(v)$ is consecutive in $\tau$, and thus $\tau$ is a fine enumeration of $H$. 
\end{proof}

As a result we obtain Lemma~\ref{lem:sandwich}.

\subsection{Recognizing Sunflower Unit Interval Graphs Efficiently}
\label{sub:SUIRalgo}

For our runtime result, we relax the notion of chains and bars.
Let $\mathcal G=(G_1,\dots,G_k)$ be a simultaneous proper interval graph with simultaneous enumeration $\zeta$ and shared graph $S$.
A \emph{relaxed $(u,v)$-chain} is a $(u,v)$-chain in $G^\star=G_1\cup\dots\cup G_k$. A \emph{relaxed conflict at $(u,v)$} is a pair of a relaxed $(u,v)$-chain and a $(u,v)$-bar of the same size.
We relax the result of Theorem~\ref{the:SUIRchar} accordingly in Corollary~\ref{cor:shortestPathsSuffice}.

\begin{corollary}
  \label{cor:shortestPathsSuffice}
  Let $\mathcal G$ be a simultaneous proper interval graph with simultaneous enumeration $\zeta$.
  Then $\mathcal G$ has a simultaneous unit interval representation that realizes $\zeta$ if and only if $\mathcal G$ has no relaxed conflict for $\zeta$.
\end{corollary}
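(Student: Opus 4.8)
The plan is to derive the relaxed statement from Theorem~\ref{the:SUIRchar} by showing that the existence of a relaxed conflict is equivalent to the existence of an ordinary conflict. One direction is immediate: every ordinary $(u,v)$-conflict is in particular a relaxed $(u,v)$-conflict, since a $(u,v)$-chain in some $G_i$ is a $(u,v)$-chain in $G^\star = G_1\cup\dots\cup G_k$, and the bar is unchanged. Hence if $\mathcal G$ has a conflict for $\zeta$ it has a relaxed conflict, and by Theorem~\ref{the:SUIRchar} the absence of a relaxed conflict implies the absence of a conflict, which implies the existence of a simultaneous unit interval representation realizing $\zeta$.

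For the other direction I must show: if $\mathcal G$ has a relaxed conflict for $\zeta$, then it has an ordinary conflict for $\zeta$. So suppose there is a relaxed $(u,v)$-chain $C = (u = c_1 <_{?} \dots <_{?} c_m = v)$ in $G^\star$ together with a $(u,v)$-bar $B$ of size $m \ge 2$ in some $(G_j,\zeta_j)$, where $u,v \in V(S)$. The key point is that a path in $G^\star$ can be ``shortcut'' to a path that stays inside a single input graph without increasing its length, because the bar only constrains the distance between $I_u$ and $I_v$. More precisely, I would argue as follows. Recall from the discussion preceding Theorem~\ref{the:SUIRchar} that in any simultaneous unit interval representation realizing $\zeta$ (if one existed) a $(u,v)$-chain of size $l$ forces $I_u$ and $I_v$ to be at distance less than $l-2$, whereas a $(u,v)$-bar of size $l$ forces distance greater than $l-2$. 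Instead of invoking representations, I work combinatorially: the existence of the relaxed chain $C$ in $G^\star$ together with the simultaneous enumeration $\zeta$ yields, via the partial order $\alpha$ induced by $\zeta$, a bound on how far apart $u$ and $v$ can be pushed, and this bound is witnessed by an ordinary chain. Concretely, since $S$ is a shared induced subgraph and $u,v \in V(S)$, I consider the shortest $(u,v)$-path in $G^\star$; I claim its length is at least the length of the shortest $(u,v)$-path in $S$, and more importantly that a shortest $(u,v)$-path in $G^\star$ can be assumed to lie in a single $G_i$. This is because $\bigcup_{\mathcal G}$ restricted to the relevant vertices decomposes along cut vertices in $V(S)$: any shortest path that switches from $G_{i}$ to $G_{i'}$ must do so at a shared vertex $s \in V(S)$ with $u \le_\alpha s \le_\alpha v$ (or $v \le_\alpha s \le_\alpha u$), and then one of the two sub-paths $u\!-\!s$ or $s\!-\!v$ together with $B$ restricted accordingly already forms a conflict of the same or smaller size within a single graph; iterating removes all graph switches.

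The main obstacle I anticipate is making the ``shortcutting at shared vertices'' argument precise and confirming that it genuinely produces a chain and bar of the \emph{same} size, not merely comparable sizes — one has to track how the bar $B$ splits when the chain is split at a shared vertex $s$, using the fact (already noted after the definition of conflict) that a bar can always be shortened by deleting intervals between its endpoints, so only a lower bound on the bar size at $s$ is needed. I would formalize this by induction on the number of distinct input graphs the relaxed chain $C$ passes through: if $C \subseteq G_i$ for a single $i$ we are done; otherwise pick a shared vertex $s$ on $C$ at which $C$ changes graphs, split $C$ at $s$ into $C_1$ (from $u$ to $s$) and $C_2$ (from $s$ to $v$), split $B$ at the position of $s$ dictated by $\alpha$ into $B_1,B_2$ with $|B_1| + |B_2| \ge |B| + 1 = m+1$, so $|B_1| \ge |C_1|$ or $|B_2| \ge |C_2|$ by pigeonhole on sizes, and in either case the shorter pair is a relaxed conflict passing through strictly fewer graphs; apply induction. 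Once $C$ lies in a single $G_i$, we have an ordinary conflict and Theorem~\ref{the:SUIRchar} finishes the proof. (The symmetric case $v \le_\alpha s \le_\alpha u$ cannot occur for a vertex $s$ lying between $u$ and $v$ on a chain, since chains are $<_i$-increasing and $\alpha \supseteq {<_i}$, so $s$ necessarily satisfies $u \le_\alpha s \le_\alpha v$.)
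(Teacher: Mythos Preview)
Your proof of the implication ``no relaxed conflict $\Rightarrow$ representation exists'' via Theorem~\ref{the:SUIRchar} is correct and is exactly what the paper does for that direction.

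For the other direction, however, the paper takes a much shorter route than you do. It does \emph{not} attempt to show that a relaxed conflict implies an ordinary conflict. Instead it reuses the geometric distance argument verbatim: in any simultaneous unit interval representation, an edge of $G^\star$ between two vertices still means their unit intervals overlap, so a relaxed $(u,v)$-chain of size $l$ forces the distance between $I_u$ and $I_v$ to be less than $l-2$, while a $(u,v)$-bar of size $l$ in some $(G_j,\zeta_j)$ forces that distance to exceed $l-2$. Hence a relaxed conflict directly obstructs any realizing representation. No splitting, no induction, no pigeonhole.

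Your combinatorial route has a genuine gap. You claim that a shared vertex $s$ lying on the relaxed chain satisfies $u\le_\alpha s\le_\alpha v$, justifying this by ``chains are $<_i$-increasing and $\alpha\supseteq{<_i}$''. But a relaxed chain is merely a path in $G^\star$; it carries no monotonicity requirement with respect to any $\zeta_i$ or $\alpha$, and your justification conflates relaxed chains with ordinary chains. Without $u\le_\alpha s\le_\alpha v$ you cannot place $s$ inside the bar $B$ in $(G_j,\zeta_j)$, and the whole splitting step collapses. Even granting monotonicity, your pigeonhole bookkeeping is more delicate than you indicate: when $s$ is adjacent in $G_j$ to neighbouring bar elements on both sides, inserting $s$ costs two vertices rather than one, and the degenerate cases where $s$ is adjacent to $u$ or to $v$ (so that no $(u,s)$- or $(s,v)$-bar exists at all) need separate treatment. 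These are fixable, but the paper's two-line geometric argument sidesteps all of it.
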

\begin{proof}
  With Theorem~\ref{the:SUIRchar} we obtain the first statement from the last statement since conflicts are also relaxed conflicts. On the other hand, if we have a relaxed $(u,v)$-conflict, then we have a relaxed $(u,v)$-chain and a $(u,v)$-bar both of size $l\ge 2$. As argued before, the relaxed chain implies a distance less than $l-2$ between the intervals $I_u,I_v$ of $u,v$ and the bar implies a distance greater than $l-2$ between $I_u,I_v$ in any simultaneous unit interval representation of $\mathcal G$ that realizes $\zeta$. Thus, such a representation does not exist. Hence, the first statement also implies the second statement.   
\end{proof}

With this preparation we can now start with the actual proof.

\SUIRalg*

\begin{proof} 
  If $\mathcal G$ is not connected (i.e.\,graph $G^\star$ is not connected), then we just combine the  simultaneous unit interval representations of its connected components (i.e.\,the sunflower graphs corresponding to the components of $G^\star$) with some space between them in linear time. If one of the components is not a simultaneous unit interval graph, then $\mathcal G$ is neither. 
  Hence, assume $\mathcal G=(G_1,\dots,G_k)$ is connected. We have $|V|\in O(|E|)$.
  By Theorem~\ref{the:SUIRchar}, sunflower graph $\mathcal G$ is a simultaneous unit interval graph if and only if there is a simultaneous enumeration $\eta$ for which $\mathcal G$ has no conflict. Then also $\eta^r$ has no conflict.
  With Theorem~\ref{the:simEnumChar} we obtain that $\eta$ or $\eta^r$ is obtained from $\zeta$ by reversals of reversible parts and independent components. Hence, we only need to consider such simultaneous enumerations.

  Since every single graph $G_i$ is proper, we only need to consider $(u,v)$-conflicts with $u,v\in V(S)$.
  As argued for the construction of the partial order $\alpha$, we can identify vertices that are in the same block in all graphs $G_1,\dots,G_k$ in which they are both contained. The removed vertices obtain a reference to the corresponding vertex and copy its interval in the end. This is possible in $O(k|V|)\subseteq O(|V|^2)$ time.
  The minimal $(u,v)$-chains for a graph $G$ are exactly the shortest paths in $G$. The size of minimal relaxed $(u,v)$-chains is therefore a lower bound on the size of a minimal $(u,v)$-chain in any graph $G_i$.
  Since relaxed chains are shortest paths in $G^\star$, we obtain all sizes of minimal relaxed chains by breadth-first-searches in $G^\star$ starting at each vertex $v\in V_i=V(G_i)$ with a total runtime in $O(|V|\cdot|E|)$.
  On the other hand, for the maximal size of $(u,v)$-bars in $G_i$ only the reversals of the two corresponding components $C,D$ of $u,v$ are relevant, while components in-between always contribute their maximum independent set independently of whether they are reversed.
  We can reduce the needed time to compute any $(u,v)$-bar with $u,v\in V(S)$ with the following preparation. 
  For every component $C$ of some graph $G_i$ determine the size $\alpha(C)$ of its maximum independent set, which can be computed greedily in linear time, i.e., in $O(k|E|)$ in total.
  Further set $\alpha'(C)=\sum_{D\le_i C} \alpha(D)$. This allows us to compute the size of a maximum independent set strictly between any two components $C',D'$ of $G_i$ with $C'\le_i D'$ as $\alpha'(D')-\alpha'(C')-\alpha(D)$.
  We finally need the maximal bar from every vertex $v\in V(S)$ to both ends of its corresponding component $C$ in $G_i$, if $C$ is loose. These bars again can be greedily computed in linear time, i.e., in $O(|V|\cdot|E|)$ in total.

  We can thus compute for $i,j\in\{1,\dots,k\}$, $u,v\in V(S)$ and each of the four combinations of reversal decisions (reverse or do not reverse) for $C,D$, whether they yield a relaxed conflict at $(u,v)$ in constant time.
  By only considering components that contain not only shared vertices, and for those components considering only the leftmost and the rightmost shared vertex with regards to partial order $\alpha$, we use only $O(|V|^2)$ time in total. 

  If we obtain a relaxed conflict for some combination, then we know by Corollary~\ref{cor:shortestPathsSuffice} that this combination appears in no simultaneous enumeration that can be realized by a simultaneous unit interval representation. 
  Otherwise, we know that every simultaneous enumeration derived with the corresponding combination of reversal decisions has no conflict at $(u,v)$ in $G_i,G_j$.
  We obtain a total of $O(|V|^2)$ considered combinations.

  Since only two components are involved, we can formulate a corresponding $2$-SAT formula $\mathcal F$:
  For every independent component (not contained in the shared graph) and every reversible part, we introduce a literal that represents whether it is reversed or not. For every combination of reversal decisions that yield a conflict we add a clause that excludes this combination. Thereby $\mathcal F$ has $O(|V|^2)$ clauses. Satisfiability of $\mathcal F$ can be decided in linear time by a result of Aspvall et al.~\cite{apt-ltatt-79}, i.e., in $O(|V|^2)$. 
  If $\mathcal F$ is not satisfiable, then every simultaneous enumeration yields a relaxed conflict. 
  Otherwise, we obtain a simultaneous enumeration without conflict by just applying reversals accordingly. 
  By Theorem~\ref{the:SUIRchar}, we then have that $\mathcal G$ is a simultaneous unit interval graph and we obtain a simultaneous unit interval representation by construction along its proof as follows.

  Now consider the iterative construction of scouts according to Lemma~\ref{lem:props2partialorderStep}. 
  Instead of constructing a partial order $\sigma$, we construct a directed acyclic graph $G'$ whose transitive closure is $\sigma$.
  As a preparation we count in every vertex the number of incoming and outgoing edges, directed along $\alpha$. If an edge is added to $G'$, we adapt those counts.
  However, if the end of such an edge is chosen as $x$, we reduce the counter of its start. We can thereby choose a maximal vertex in $V\setminus X$ in constant time, by choosing a vertex with outdegree $0$.
  For every vertex $u\in V$, we further keep track of the first vertex $w$ in every graph $G_i$ according to $\alpha$, with $(u,w)\in E(G')$. 
  Since $x\in V_i$ is directly succeeded by the first vertex $x'$ in $X\cap V_i$, we have $(x,w_j)\in E(G')$ or $x'\le_\sigma w_j$. Thus, we find $w_j$ in constant time.
  We find $v_j,u_i$ as the first vertex adjacent to and before $w_j$ and the last non-adjacent vertex before $x$ as predecessor of the first adjacent one. In total we find them all in $O(|E|)$ time.  

  With adding edge $(u_i,v_i)$ to $G'$ the step of constructing the next scout is complete.
  We next construct zips according to Lemma~\ref{lem:extendSatifyingPOstep}.
  This construction is similar to the construction of the scouts. A minimal vertex $u$ can be found analogously to a maximal vertex $x$ before.
  By adding the vertices of $U$ to a list, we do not need to actually add edges of the form $(u,v)$ with $v\in V\setminus U$ to $G'$.
  Instead of adding all vertices of $R$ to $G'$, it also suffices to compute a maximal $v_j\le_\sigma u$ and to add the corresponding edge $(w_j,x_i)$ with maximal $w_j$ and minimal $x_i$ such that $v_j,w_j$ are adjacent and $u,x_i$ are not. This can be done analogously to finding $(u_i,v_j)$ in the construction of scouts.
  We finally follow the proof of Lemma~\ref{lem:extendEF} to decide adjacency between vertices of different graphs $G_i,G_j$. This can be done in linear time by going from left to right along our linear order of $V$ as follows. We keep track of the last vertex $w$ adjacent to all vertices visited so far, including the current vertex $v$. We then set $v$ to be adjacent to $w$ and to all vertices between $v$ and $w$.
  This takes $O(|V|^2)$ time in total. We obtain a fine enumeration, from which a unit interval representation of a graph $H$ that has $G_1,\dots,G_k$ as induced subgraphs can be obtained in linear time. This yields a simultaneous unit interval representation of $\mathcal G$. 
\end{proof}

\section{Non-Sunflower Simultaneous Proper and Unit Interval Graphs}
\label{ch:hardness}

In this section we consider the problems {\sc PropSimRep} and {\sc SimUintRep} without the restriction of sunflower intersection.  We show that, if the number $k$ of graphs is part of the input, then these problems are NP-complete.
Our reductions are similar to those used by Bok and Jedli\v{c}kov\'a~\cite{bok2018note}.

\begin{theorem}
  \label{thm:propsim-hardness}
  Recognizing Simultaneous Proper Interval Graphs is NP-complete.
\end{theorem}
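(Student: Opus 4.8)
To show that recognizing simultaneous proper interval graphs (the problem {\sc PropSimRep}, with $k$ part of the input and no sunflower restriction) is NP-complete, I would first observe that the problem is in NP: given a candidate simultaneous proper interval representation, we can verify in polynomial time that each $R_i$ is a proper interval representation of $G_i$ and that shared vertices receive identical intervals. Equivalently, by Proposition~\ref{proper:theorem:straightEnum}, a certificate can be taken to be a fine enumeration $\eta_i$ of each $G_i$ together with a check that they agree (up to the induced ordering of blocks) on shared vertices; this is polynomially checkable. So the work is in proving NP-hardness.

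\textbf{Reduction.} I would reduce from a suitable NP-complete problem; the natural candidates, and the ones implicitly suggested by the remark that the reductions resemble those of Bok and Jedli\v{c}kov\'a, are a variant of \textsc{Betweenness} or \textsc{(Monotone) Not-All-Equal 3-Sat}, where one must arrange elements on a line subject to local ordering constraints. The key gadget idea is that a path $P=(s,x_1,\dots,x_m,t)$ forming one of the graphs $G_i$, with only $s$ and $t$ shared, forces (by Proposition~\ref{proper:theorem:uniqueEnumeration}, uniqueness of the straight enumeration of a connected proper interval graph up to reversal) the shared vertices $s,t$ to lie at the two ends of that component, and a gadget path of suitable length can thus encode ``$s$ and $t$ are at distance at least $m$'' or force a particular relative order. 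By combining such paths with shared ``anchor'' vertices, one encodes each clause/constraint of the source instance as a small constant-size simultaneous graph $G_i$, using the freedom to choose $k$ large. I would build: (1) a global set of shared vertices representing the variables/elements and a fixed ``scaffold'' enforcing a common coordinate axis; (2) one graph $G_i$ per clause whose proper interval representations correspond exactly to the satisfying local assignments of that clause; and (3) verify that a global simultaneous representation exists iff the source instance is satisfiable, using Theorem~\ref{the:simulEnums}-style reasoning adapted to the non-sunflower case (here I cannot invoke it directly, since it is stated for sunflower graphs, so I argue directly via the block orderings and Lemma~\ref{simul:lemma:fixSubgraphOrdering}-type rigidity on the scaffold).

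\textbf{Correctness.} The forward direction is routine: a solution to the source instance yields an ordering of the shared vertices, which extends to each $G_i$ by construction of the gadgets, and Lemma~\ref{lem:EnumCombination}'s construction (or its non-sunflower analogue, gluing interval-endpoint orders that agree on shared endpoints) produces the simultaneous representation. The reverse direction requires showing the gadgets are ``tight'': any simultaneous proper interval representation, restricted to the scaffold, induces an order of the shared vertices (unique up to global reversal), and each clause-gadget $G_i$ being a proper interval graph forces its shared vertices into a configuration that witnesses a satisfying assignment of that clause. I would prove this by analyzing, for each gadget, which straight enumerations of $G_i$ are compatible with the fixed scaffold order, using the path-rigidity fact above and the observation that a path of $m$ internal vertices between two shared vertices rules out all enumerations placing them too close together.

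\textbf{Main obstacle.} The hardest part will be designing the scaffold and clause gadgets so that (a) they are genuinely proper interval graphs individually, (b) their pairwise intersections are induced subgraphs (a necessary condition noted in the preliminaries), and (c) the only degrees of freedom left in a simultaneous representation correspond exactly to the free choices in the source instance — i.e., eliminating ``unintended'' representations. In particular, because proper interval representations of a disconnected graph allow independent reversal of components (and reordering), I must add enough shared structure to pin down the relative placement of all components, which is delicate when $k$ is large; getting this rigidity argument airtight, rather than the high-level construction, is where the real effort lies. I would then note that the same construction, with unit-length gadgets, yields NP-completeness for {\sc SimUintRep} as well, which is handled in the subsequent theorem.
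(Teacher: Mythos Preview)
Your direction is correct and coincides with the paper's at the top level---reduce from \textsc{Betweenness} via path gadgets---but you are significantly overcomplicating the proper-interval case and conflating it with the unit case. The paper's reduction is almost trivial: take $G_0=(A,\emptyset)$ (the elements of~$A$ as isolated vertices), and for each triple $T_i=(a_i,b_i,c_i)$ let $G_i$ be the $5$-vertex induced path $a_i\,x_i\,b_i\,y_i\,c_i$ with $x_i,y_i$ fresh. All three of $a_i,b_i,c_i$ are shared with $G_0$, not just two endpoints. Because a path has a unique straight enumeration up to reversal, any proper interval representation of $G_i$ places $b_i$ strictly between $a_i$ and $c_i$; conversely, a betweenness ordering of $A$ gives $R_0$ and the two private vertices can always be inserted. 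No scaffold beyond $G_0$, no rigidity arguments about component placement, and no ``distance at least $m$'' reasoning is needed.

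Two specific points where your plan drifts: (i) your basic gadget shares only the two endpoints $s,t$ of a path, which by itself imposes \emph{no} constraint on the relative order of $s$ and $t$ (the path may be reversed), so it cannot encode a betweenness triple without further machinery; sharing the middle element is the key. (ii) The ``path of length $m$ forces distance $\ge m$'' idea is a metric argument that is meaningless for proper interval representations and belongs to the \emph{unit} reduction, where the paper indeed replaces each of $x_i,y_i$ by a chain of $2n$ vertices spread over $2n-1$ auxiliary graphs. In short, your anticipated ``main obstacle'' (scaffold design, eliminating unintended representations) does not arise here; the entire proper-interval proof fits in a paragraph once you pick the right gadget.
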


\begin{proof}
  The problem is clearly in NP, as we can guess the ordering of the
  endpoints of the intervals in a simultaneous representation and
  verify (in polynomial time) whether the resulting representation is
  a simultaneous proper interval representation of the input graphs.

  For the NP-hardness, we present a reduction from the NP-hard problem
  {\sc Betweenness}~\cite{journals/siamcomp/Opatrny79} which, given a
  ground set $A$ and a set $\mathcal T \subseteq A \times A \times A$
  of triplets of $A$ asks whether there exists a linear order
  $\sigma$ of $A$ such that for any triple $(a,b,c) \in \mathcal T$,
  we have $a <_{\sigma} b <_{\sigma} c$ or $c <_{\sigma} b <_{\sigma}
  a$.  We call such an ordering $\sigma$ a betweenness ordering.

  \begin{figure}[tb]
    \centering
    \includegraphics[scale=1]{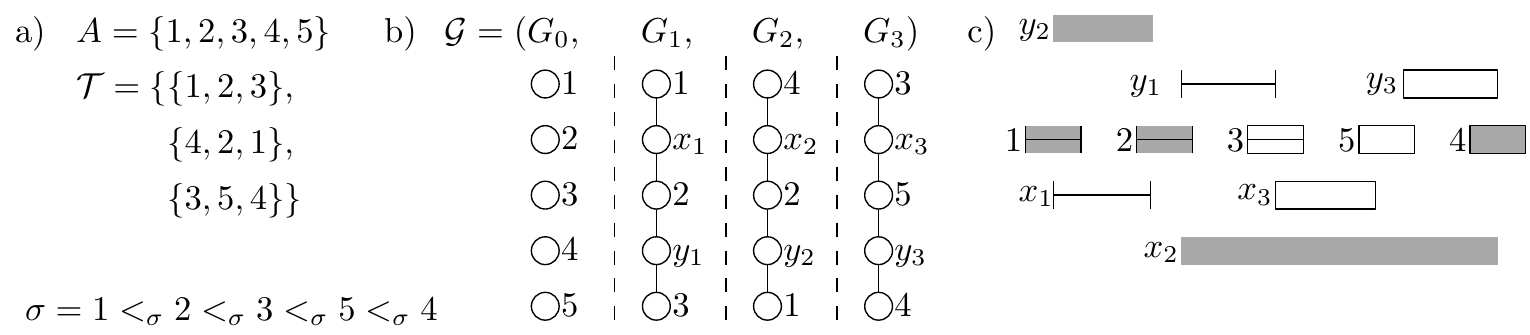}
    \caption{(a) A \textsc{Betweenness} instance $(A, \mathcal{T})$ with solution \oo. (b) The simultaneous graph $\mathcal{G}$ constructed from $(A, \mathcal{T})$. (c) A simultaneous proper interval representation of $\mathcal{G}$. The intervals of $R_1, R_2$, and $R_3$ are represented by lines, gray bordlerless boxes, and bordered boxes, respectively.}
    \label{hardness:fig:betweenIntro}
  \end{figure}

  Let $(A,\mathcal T)$ with $\mathcal T = \{T_1,\dots,T_k\}$ be an
  instance of {\sc Betweenness}.  We construct a simultaneous graph
  consisting of $k+1$ graphs $G_0,\dots,G_k$; see
  Fig.~\ref{hardness:fig:betweenIntro}.  The graph $G_0 =
  (A,\emptyset)$ contains all elements of $A$ as vertices but no
  edges.  For each triple $T_i =(a_i,b_i,c_i)$, we define the graph
  $G_i$ as an induced path $a_ix_ib_iy_ic_i$ where $x_i$ and $y_i$ are
  new vertices.  We set $\mathcal G = (G_0,G_1,\dots,G_k)$ and claim
  that $\mathcal G$ has a simultaneous proper interval representation
  $\mathcal R$ if and only if $(A,\mathcal T)$ admits a betweenness
  ordering $\sigma$.

  If $\mathcal R = (R_0,R_1,\dots,R_k)$ is a simultaneous interval
  representation of $\mathcal G$, then the representation $R_0$
  defines a linear order $\sigma$ of $A$.  The fact that $R_i$ is a
  proper interval representation of an induced path guarantees that
  $b_i$ is positioned between $a_i$ and $c_i$ in $\sigma$ for
  $i=1,\dots,k$.  Therefore $\sigma$ is betweenness ordering for $(A,
  \mathcal T)$.

  Conversely, if $\sigma$ is a betweenness ordering of $(A,\mathcal
  T)$, we use this ordering to define a corresponding representation
  $R_0$ of $G_0$.  For each triple $T_i = (a_i,b_i,c_i)$, due to the
  betweenness property, we can add intervals representing $x_i$ and
  $y_i$ such that we obtain a proper interval representation $R_i$ of
  $R$.  Altogether, this yields a simultaneous proper interval
  representation $\mathcal R = (R_0,R_1,\dots,R_k)$.

  NP-hardness follows since clearly the instance $\mathcal G$ can be
  constructed in polynomial time from $(A,\mathcal T)$.
\end{proof}

\begin{theorem}
  Recognizing Simultaneous Unit Interval Graphs is $NP$-complete.
\end{theorem}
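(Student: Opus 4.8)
The plan is to prove membership in NP and then give a hardness reduction that mirrors the proof of Theorem~\ref{thm:propsim-hardness} but uses a different encoding of the combinatorial constraints, since unit intervals are far less flexible than proper ones. Membership in NP is immediate: a simultaneous unit interval representation is certified by the left-to-right order of the $2|V|$ interval endpoints together with the information which endpoints coincide; from such an ordering one reconstructs unit intervals in polynomial time and checks that each $G_i$ is represented correctly and that shared vertices receive identical intervals.

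For the hardness part I would, as in the proper case, reduce from \textsc{Betweenness}: from an instance $(A,\mathcal T)$ build a graph $G_0=(A,\emptyset)$ whose representation $R_0$ induces a linear order $\sigma$ of $A$, plus, for each triple $T=(a,b,c)$, a graph $G_T$ containing $a,b,c$ (and private vertices) that forces $b$ to lie strictly between $a$ and $c$. The direction ``simultaneous unit representation $\Rightarrow$ betweenness ordering'' then goes through exactly as for proper interval graphs: each $G_T$ is connected and forces $a,b,c$ to appear in the order $a,b,c$ (or its reverse) along the line, so $\sigma$ is a betweenness ordering of $(A,\mathcal T)$.

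The delicate direction, and the main obstacle, is the converse: realizing an arbitrary betweenness ordering by unit intervals. Here the gadget from the proper case — the induced path $a\,x\,b\,y\,c$ — no longer works, because two unit-length intervals $x,y$ can bridge a gap of length at most two, so the gadget pins the distance between $I_a$ and $I_c$ into a bounded window (more generally, any connected unit-interval gadget that forces $b$ between $a$ and $c$ constrains that distance to within a factor of roughly two). Consequently, if every betweenness ordering of $(A,\mathcal T)$ puts many elements of $A$ strictly between some $a$ and $c$, no simultaneous unit representation exists although a betweenness ordering does; this is the same rigidity phenomenon that makes simultaneous unit interval graphs a strict subclass of simultaneous proper interval graphs (cf.\ Figure~\ref{fig:SPneqSU} and Proposition~\ref{the:unitproperEqu}). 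The key step is therefore to replace the path gadget by one that still forces $b$ between $a$ and $c$ but imposes no (or only a controllable) upper bound on the $I_a$--$I_c$ distance — e.g.\ by distributing the constraint over several graphs per triple, or by first reducing \textsc{Betweenness} to a restricted variant in which, in every solution, only a bounded number of elements can lie between the endpoints of a triple, so that a single fixed long path gadget of sufficient length suffices. With such a gadget fixed, the realizability direction is completed by spreading the intervals of $A$ along the line in the order $\sigma$ and then filling in each gadget; the technical heart — and the part that genuinely separates the unit case from the proper case — is verifying that the gadget's metric slack is simultaneously large enough for all triples of any betweenness ordering.
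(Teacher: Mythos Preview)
Your diagnosis is correct and your option~(a) --- distributing the betweenness constraint for each triple over several input graphs --- is precisely the paper's fix; your option~(b) is not needed.  What you leave open is the concrete gadget, and that is the only genuine gap.  The paper's construction is simple: for each triple $T_i=(a_i,b_i,c_i)$ replace the single private vertex $x_i$ by a chain $x_i^1,\dots,x_i^{2n}$ (with $n=|A|$) and likewise $y_i$ by $y_i^1,\dots,y_i^{2n}$; then, for each $j=1,\dots,2n-1$, introduce a separate input graph $G_i^j$ that contains $a_i,b_i,c_i$ together with the edge $x_i^jx_i^{j+1}$ and the edge $y_i^jy_i^{j+1}$, with the attachment edges $a_ix_i^1$, $c_iy_i^1$ only in $G_i^1$ and $x_i^{2n}b_i$, $y_i^{2n}b_i$ only in $G_i^{2n-1}$.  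Because $a_i,b_i,c_i$ occur (mostly as isolated vertices) in every $G_i^j$, the $x$-chain is forced between $a_i$ and $b_i$ and the $y$-chain between $b_i$ and $c_i$, so any simultaneous representation yields a betweenness ordering.  Conversely, a chain of $2n$ unit intervals can be compressed arbitrarily or stretched to any length below $2n$, so for any betweenness ordering one places the intervals of $A$ at integer positions and realizes every gadget --- this is exactly the ``controllable metric slack'' you asked for.

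One small remark on your NP argument: the endpoint order alone does not yet give you unit intervals; you still have to produce actual coordinates, which amounts to solving a system of difference constraints (equivalently the linear program the paper cites~\cite{Klavik2017}).  That is polynomial, so your conclusion stands, but the reconstruction step is not quite as immediate as you phrase it.
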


\begin{proof}
  The problem is in $NP$.  Namely, we can guess the ordering of the
  intervals in the representation of each input graph.  Afterwards, a
  unit interval can be described as the solutions of a straightforward
  linear program~\cite{Klavik2017}.

  For the hardness we employ a similar reduction as in the case of
  proper interval graphs in the proof of
  Theorem~\ref{thm:propsim-hardness}.  The key difference is that, the
  vertices in $A$ can easily be represented as unit intervals, the
  vertices $x_i$ and~$y_i$ may span several vertices of $A$, and can
  hence generally not be represented as unit intervals.

  We instead replace $x_i$ and~$y_i$ by a sequence of vertices
  $x_i^1,\dots,x_i^{2n}$ and $y_i^1,\dots,y_i^{2n}$.  For each
  $j=1,\dots,2n-1$, there is a graph $G_i^j$ with
  $V(G_i^j) = \{a_i,b_i,c_i,x_i^j,y_i^j\}$ and edges $x_i^jx_i^{j+1}$
  as well as $y_i^jy_i^{j+1}$.  The edges $a_ix_i^j$, $c_iy_i^j$ are
  present only for $j=1$ and the edges $x_i^{j+1}b_i$, $y_i^{j+1}b_i$
  are present only for $j=2n-1$; see Figure~\ref{fig:hardness-chain}.
  Observe that this construction ensures that the vertices
  $x_i^1,\dots,x_i^{2n}$ all lie between $a_i$ and $b_i$, and likewise
  $y_i^1,\dots,y_i^{2n}$ lie between $b_i$ and $c_i$.  The graph
  $G_i^{2n-1}$ further ensures that they lie on different sides of
  $b_i$, i.e., again a simultaneous representation determines a
  betweenness ordering.

  \begin{figure}[tb]
    \centering
    \includegraphics{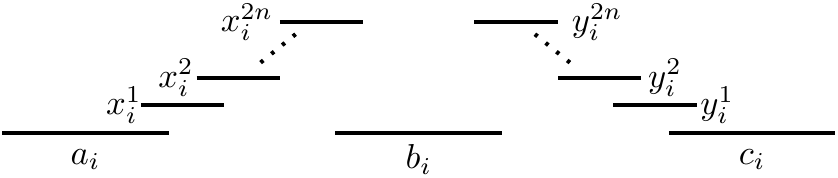}
    \caption{Illustration of the hardness proof for recognition of simultaneous unit interval graphs.}
    \label{fig:hardness-chain}
  \end{figure}
  Moreover, the vertices $x_i^1,\dots,x_i^{2n}$ can be put arbitrarily
  close together or stretched to cover any distance less than $2n$,
  since the only requirement is that consecutive vertices intersect
  each other.  Thus for any betweenness ordering of the vertices in
  $A$ one can construct a corresponding simultaneous unit interval
  representation of the graphs $G_i$, $i=0,\dots,k$ and $G_i^j$ for
  $i=1,\dots,k$, $j=1,\dots,2n-1$.
\end{proof}

\end{document}